\def\ISDRAFT{0}
\def\ISDRAFT{0}
\newcommand{\WATERMARKTEXT}{}
\newcommand{\numorbitals}{N_\mathrm{b}}
\newcommand{\ctTBprefactor}[1]{
	\ifthenelse{\equal{#1}{}}
	{\bm{h}}
	{{h}_{#1}}
}
\newcommand{\ctTBexponent}[1]{
	\ifthenelse{\equal{#1}{}}
		{\bm{\gamma}}
		{\gamma_{#1}}
}
\newcommand{\ctHamregularity}{\nu}
\newcommand{\ctnoninterpen}{\mathfrak{m}}
\newcommand{\ctCT}{\gamma_\mathrm{CT}}
\newcommand{\ctGamma}{{\Upsilon}}
\newcommand{\ctgap}{\mathsf{g}}
\newcommand{\ctgapHom}{\mathsf{g}^\mathrm{ref}}
\newcommand{\ctgapfinitetemp}{\mathsf{d}}
\newcommand{\ctgapfinitetempHom}{\mathsf{d}^\mathrm{ref}}
\newcommand{\ctAwayFromSingularity}{\mathsf{b}}
\newcommand{\Ham}{\mathcal{H}}
\newcommand{\W}{{\dot{\mathscr W}}^{1,2}}
\let\originalleft\left
\let\originalright\right
\renewcommand{\left}{\mathopen{}\mathclose\bgroup\originalleft}
\renewcommand{\right}{\aftergroup\egroup\originalright}
\renewcommand{\Re}[1]{\operatorname{Re}\left(#1\right)}
\renewcommand{\Im}[1]{\operatorname{Im}\left(#1\right)}
\renewcommand{\leq}{\leqslant}\renewcommand{\geq}{\geqslant}
\renewcommand{\above}[2]{\genfrac{}{}{0pt}{}{#1}{#2}}
\newtheorem*{fact*}{Fact}
\newtheorem{theorem}{Theorem}
\newtheorem*{theorem*}{Theorem}
\newtheorem*{assumption*}{Assumption}
\newtheorem{lemma}[theorem]{Lemma}
\newtheorem{prop}[theorem]{Proposition}
\theoremstyle{remark}
\newtheorem{remark}{Remark}
\newcommand{\thistheoremname}{}
\newtheorem*{genericthm*}{\textup{\thistheoremname}}
\newenvironment{assumption}[1]{
	\renewcommand{\thistheoremname}{#1}%
	\begin{genericthm*}}
	{\end{genericthm*}}
\numberwithin{equation}{section}
	\renewcommand*\showkeyslabelformat[1]{%
		\fbox{\parbox[t]{0.9\marginparwidth}{\raggedright\normalfont\tiny\ttfamily\path{#1}}}}
\title{Locality of Interatomic Forces in \\ Tight Binding Models for Insulators}
\date{\today}
\author{Christoph Ortner and Jack Thomas and Huajie Chen}
\begin{document}

\maketitle

\begin{abstract} 
The tight binding model is a minimalistic electronic structure model for predicting properties of materials and molecules. For insulators at zero Fermi-temperature we show that the potential energy surface of this model can be decomposed into exponentially localised site energy contributions, thus providing qualitatively sharp estimates on the interatomic interaction range which justifies a range of multi-scale models. For insulators at finite Fermi-temperature we obtain locality estimates that are uniform in the zero-temperature limit. A particular feature of all our results is that they depend only weakly on the point spectrum. Numerical tests confirm our analytical results.

This work extends and strengthens (Chen, Ortner 2016) and (Chen, Lu, Ortner 2018) for finite temperature models.
\end{abstract}

\section*{Introduction}\label{sec:introduction}\let\thefootnote\relax\footnote{CO is supported by EPSRC Grant EP/R043612/1 and Leverhulme Research Project Grant RPG-2017-191.\\
JT is supported by EPSRC as part of the MASDOC DTC, Grant No. EP/HO23364/1.\\
HC is supported by Thousand Talents Program for Young Professionals, and the Fundamental Research Funds for the Central Universities of China under grant 2017EYT22.\\
JT and CO gratefully acknowledge the hospitality of Peking Normal University during the work on this project.}A wide range of electronic, optical and magnetic properties of solids are determined by electronic structure. Computational methods, such as density functional theory, have been used successfully to model electronic structure and thus allowed the investigation and prediction of properties of materials \cite{bk:finnis,bk:martin04}. The tight binding model is a simple quantum mechanical model lying, both in terms of computational cost and accuracy, between empirical interatomic potential methods and expensive \textit{ab initio} calculations. Nevertheless, due to the underlying eigenvalue problem, a naive implementation of tight binding models requires $O(N^3)$ computational cost, where $N$ denotes the number of particles in the simulation. A possible route to alleviate this cost are linear scaling algorithms \cite{Kohn1996,Goedecker99,Niklasson2011,BowlerMiyazaki2012}, which rely on the ``nearsightedness'' of the density matrix. 

If the quantities of interest in a simulation are mechanical properties, then it may be advantageous to entirely bypass the electronic structure model and replace it with an interatomic potential (IP). The recent transfer of machine learning technology into this domain has made it possible to ``fit'' high-accuracy IP models \cite{Shapeev2016,Bartok2018,BartokCsanyiEtAl2010,BehlerParrinello2007}, which 
makes this approach particularly attractive. A starting assumption in most IP models for materials is that the potential energy surface can be decomposed into site energies, i.e., contributions from individual atoms that depend only on a small neighbourhood.

A partial justification for this assumption was given in \cite{ChenOrtner16,ChenLuOrtner18}, for a linear tight binding model at finite Fermi-temperature. For nuclei positions $y = \{y_n\}$, it was shown that there is a decomposition of the total potential energy into site energies
\begin{gather}\label{site_decomp} 
G(y) = \sum_{\ell} G_\ell(y), \qquad \text{where} \qquad 
\left|\frac{\partial G_\ell(y)}{\partial y_n} \right| \lesssim e^{-\eta |y_\ell -y_n|},
\end{gather}
for some $\eta > 0$. Similar estimates are also shown for higher derivatives. 

The exponent $\eta$ in \cref{site_decomp} measures the interatomic interaction range. Classical IPs are typically fairly short-ranged (using a cut-off on the order of 2-3 interatomic distances), which is only justified if $\eta$ is not too small. Similarly, in QM/MM multi-scale schemes the exponent $\eta$ determines the size of the QM region that must be imposed \cite{CsanyiAlbaretMorasPayneDeVita05,ChenOrtner16,ChenOrtner17} as well as the interaction range of the coarse-grained MM model. We emphasize that results such as \cref{site_decomp} do not follow from the classical near-sightedness of the density matrix. Indeed, exponential off-diagonal decay of the density matrix is not sufficient to validate multi-scale and hybrid models \cite{CsanyiAlbaretMorasPayneDeVita05,ChenOrtner17}.

Unfortunately, one expects (and we make this precise in the present paper) that, in general, 
\[
\eta \sim \beta^{-1},
\]
where $\beta$ is the inverse Fermi-temperature. This means that, for moderate to low temperature regimes, the practical value of \cref{site_decomp} is limited. 

The main purpose of the present paper is to demonstrate that for insulators the presence of a spectral gap  significantly improves the estimate. Specifically, we consider a linear tight-binding model at either zero or finite-Fermi temperature, with electrons in a grand-canonical ensemble. In this setting we prove that  \cref{site_decomp} holds with $\eta$ independent of $\beta$, but instead $\eta$ is linear in the spectral gap. Moreover, we demonstrate that  ``pollution'' of the spectral gap by a point spectrum caused, for example, by local defects in the crystal, affects only the prefactors, but not the exponent $\eta$ in \cref{site_decomp}. These results significantly strengthen the locality results of \cite{ChenOrtner16,ChenLuOrtner18} as well as extend them to the case of zero Fermi-temperature.

In addition to supporting the justification of interatomic potentials and QM/MM multi-scale models, our results also allow for an extension of the thermodynamic limit models for crystalline defects \cite{EhrlacherOrtnerShapeev16,ChenOrtner16,ChenNazarOrtner19} to the zero-temperature case and an investigation of the (non-trivial) relationship between zero and finite-temperature models, which we will pursue in a forthcoming paper \cite{inprep}.

\subsection*{Outline}
In \cref{sec:results}, we state the main results of this paper. In order to do this, we introduce a simple two-centre linear tight binding model (\cref{subsec:tight_binding_model}) and show that, at both finite and zero Fermi-temperature, the total energy of the system can be decomposed into exponentially localised site energy contributions (\cref{subsec:site_energy_decomposition}). We then discuss how these results can be improved upon in the case of a point defect embedded into a reference configuration (\cref{subsec:site_energy_decomposition}), showing that the resulting point spectrum does not affect the exponent. In \cref{sec:numerics} we provide numerical tests confirming our analytical results. The main conclusions of this work are then discussed in \cref{sec:conclusions} and all of the proofs are collected into \cref{sec:proofs}.

\subsection*{Notation}
The Frobenius norm will be denoted by $\|\cdot\|_\mathrm{F}$ while $\|\cdot\|$ and $|\cdot|$ will denote the $\ell^2$ and Euclidean norms, respectively. We write $b + A = \{ b + a \colon a \in A \}$ and similarly for $A - b$. Moreover, for $A \subset \mathbb C$ and $b \in \mathbb C$, the distance between $b$ and $A$ is defined by $\mathrm{dist}(b,A) \coloneqq \inf_A |b - \cdot\,|$. For a subset $A \subset \mathbb R$ and $\delta>0$, we denote the ball of radius $\delta$ about $A$ by $B_\delta(A) \coloneqq \{ r \in \mathbb R \colon \mathrm{dist}(r,A) < \delta \}$. For a finite set $A$, we denote by $\#A$ the cardinality of $A$. The set of strictly positive real numbers will be denoted by $\mathbb R_+ \coloneqq \{r \in \mathbb R \colon r > 0\}$. 

The symbol $C$ will denote a generic positive constant that may change from one line to the next. In calculations, $C$ will always be independent of Fermi-temperature. The dependencies of $C$ will normally be clear from context or stated explicitly.  

\section{Results}\label{sec:results}
\subsection{Tight Binding Model}
\label{subsec:tight_binding_model}

We consider a finite or countable reference configuration $\Lambda \subset \mathbb R^d$ and deformation $y \colon \Lambda\to\mathbb R^d$ satisfying the following uniform non-interpenetration condition:\newcommand{\asNonInter}{\textup{\textbf{(L)}}}
\begin{assumption}{\asNonInter}
There exists $\ctnoninterpen>0$ such that
$|y(\ell) - y(k)|\geq \ctnoninterpen$ for all $\ell, k \in \Lambda$.
\end{assumption}
We consider $\numorbitals$ atomic orbitals per atom, indexed by $1\leq a,b \leq \numorbitals$. For a given admissible configuration $y$, we define the following {two-centre tight binding} Hamiltonian:
\newcommand{\asTB}{\textup{\textbf{(TB)}}}
\begin{assumption}{\asTB}
For $\ell, k \in \Lambda$ and $1\leq a,b \leq\numorbitals$, we suppose that the Hamiltonian take the form
\begin{equation}\label{a:two-centre}
\Ham(y)_{\ell k}^{ab} = h^{ab}_{\ell k}\left(y(\ell) - y(k)\right)
\end{equation}
where $h^{ab}_{\ell k}\colon \mathbb R^d \to \mathbb R$ are $\ctHamregularity$ times continuously differentiable for some $\ctHamregularity\geq1$. Further, we assume that there exist 
$\ctTBprefactor{} \coloneqq (\ctTBprefactor{0},\dots,\ctTBprefactor{\nu})$, $\ctTBexponent{} \coloneqq (\ctTBexponent{0},\dots,\ctTBexponent{\nu})\in \left(\mathbb R_+\right)^{\nu + 1}$
such that, for each $1\leq j\leq \ctHamregularity$, 
\begin{equation}
\label{a:TB}	
\left|h^{ab}_{\ell k}(\xi)\right| \leq \ctTBprefactor{0}\,e^{-\ctTBexponent{0} |\xi|}
\quad \text{and} \quad 
\left|\partial^\alpha h_{\ell k}^{ab}(\xi)\right| \leq \ctTBprefactor{j} \, e^{-\ctTBexponent{j} |\xi|} 
\quad \forall \xi \in \mathbb R^d
\end{equation}
for all multi-indices $\alpha \in \mathbb N^d$ with $|\alpha| = j$. Finally, we suppose that $h^{ab}_{\ell k}(\xi) = h_{k\ell}^{ba}(-\xi)$ for all $\xi \in \mathbb R^d$ and $1\leq a,b\leq\numorbitals, \ell, k \in \Lambda$.
\end{assumption}

It is important to emphasise that the constants $\ctTBprefactor{},\ctTBexponent{} \in (\mathbb R_+)^{\ctHamregularity + 1}$ in \cref{a:TB} are chosen to be independent of the atomic sites $\ell, k \in \Lambda$.

The condition in \cref{a:TB} with $j = 0$ is satisfied for all linear tight binding models. In fact, in most tight binding models, a finite cut-off radius is used and so Hamiltonian entries are zero for atoms beyond a finite interaction range. For $j = 1$, \cref{a:TB} states that there are no long-range interactions. That is, the dependence of the Hamiltonian entry $\Ham(y)^{ab}_{\ell k}$ on site $m$ decays exponentially to zero in ${|y(\ell) - y(m)| + |y(m) - y(k)|}$. In particular, we are assuming that the Coulomb interactions have been screened.

Since the $h^{ab}_{\ell k}$ depend on the atomic sites, we allow for multi-lattice reference configurations with possibly multiple atomic species. While the assumptions are motivated by the lattice setting, we do not define exactly what we mean by $\Lambda$ and thus the presentation is kept abstract and the mathematical results are more general.

Under \asTB, $\sigma(\Ham(y))\subset[\underline{\sigma},\overline{\sigma}]$ where $\underline{\sigma},\overline{\sigma}$ only depend on $\ctnoninterpen,d,\ctTBprefactor{0},\ctTBexponent{0}$ and are independent of system size and configuration $y$ satisfying \asNonInter~with the constant $\ctnoninterpen$. A proof of this fact is an application of the Gershgorin circle theorem \cite[Lemma~4]{ChenOrtner16}.

\begin{remark}[Symmetries]
By \cref{a:two-centre}, if $\sigma$ is a permutation of $\Lambda$ leaving the individual atomic species invariant then we have $\Ham(y\circ\sigma)^{ab}_{\sigma^{-1}(\ell)\sigma^{-1}(k)} = \Ham(y)^{ab}_{\ell k}$.

In practice, we also require the Hamiltonian to be invariant under isometries of $\mathbb R^d$ up to an orthogonal change of basis \cite{SlaterKoster1954}: if $\mathcal I \colon \mathbb R^d \to \mathbb R^d$ is an isometry of $\mathbb R^d$, then there exists a block diagonal orthogonal matrix $Q$ such that $\Ham(\mathcal I \circ y) = Q\cdot \Ham(y)\cdot Q^T$. For a single atomic orbital per atom, this takes the form $\Ham(\mathcal I \circ y) = \Ham(y)$. This condition is derived in \cite[Appendix~A]{ChenOrtner16}.

Henceforth we will entirely ignore these symmetries, however we mention now that they give rise to permutation and isometry invariance of the site energies that we define below. This can be see exactly as in \cite{ChenOrtner16}.
\end{remark}

While nuclei are treated as classical particles, we assume that electrons are described by a grand canonical potential model. That is, the Fermi-temperature, volume and chemical potential, $\mu$, are fixed model parameters.  In this model, after diagonalising the Hamiltonian,
\begin{equation}\label{eq:diag_Ham}\Ham(y)\psi_s = \lambda_s\psi_s\quad\text{where}\quad\|\psi_s\| = 1,\end{equation}
(where the dependence of the eigenpair $(\lambda_s,\psi_s)$ on $y$ has been omitted) the potential energy surface is given by
\begin{equation}\label{grand-potential}
	G^\beta(y) \coloneqq \sum_{s} \mathfrak{g}^\beta(\lambda_s;\mu).
\end{equation}
Here, $\beta$ is the inverse Fermi-temperature given by $T = (k_\textrm{B}\beta)^{-1}$ where $k_\textrm{B}$ is the Boltzmann constant and $T$ the Fermi-temperature. We consider $\beta < \infty$ (as defined in \cite{ChenLuOrtner18,DavidMermin1965}) and $\beta = \infty$:
\begin{equation*}
	\mathfrak{g}^\beta(z;\mu) \coloneqq \frac{2}{\beta}\log\left( 1- f_\beta(z-\mu) \right) \quad \text{and} \quad 	\mathfrak{g}^\infty(z;\mu) \coloneqq 2(z-\mu)\chi_{(-\infty,\mu)}(z)
\end{equation*}
where $f_\beta \coloneqq (1 + \exp(\beta \,\cdot))^{-1}$ is the Fermi-Dirac distribution which describes the occupation numbers for the electronic states. The factor of $2$ accounts for the spin. The zero Fermi-temperature energy is simply given by the point-wise limit as $\beta \to \infty$. 

\begin{remark}
    Our analysis requires $G^\infty(y)$ to be a differentiable function of the configuration (that is, the derivative with respect to $[y(m)]_i$ exists for all $m \in \Lambda$ and $1\leq i \leq d$) and so we will usually impose the condition that $\mu \not \in \sigma(\Ham(y))$. Justification for considering this zero Fermi-temperature grand potential is given in \cite{ChenLuOrtner18} and a forthcoming paper \cite{inprep} where we formulate the geometry relaxation problem for the grand potential \cref{grand-potential} at zero Fermi-temperature as a variational problem and show that this is consistent with taking Fermi-temperature to zero.
\end{remark}

\subsection{Site Energy Decomposition}
\label{subsec:site_energy_decomposition}
For a given configuration, $y\colon \Lambda \to\mathbb R^d$ with $\Lambda$ finite, we can distribute the total energy of the system into {site energy} contributions. 
Since $\|\psi_s\| = 1$, we can decompose $G^\beta(y)$ into
\begin{equation}\label{eq:site_energy_decompostion}
G^\beta(y) = \sum_{\ell\in\Lambda} G_{\ell}^\beta(y) \quad \text{where} \quad 
G_{\ell}^{\beta}(y) \coloneqq \sum_s \mathfrak{g}^\beta(\lambda_s;\mu)\sum_a[\psi_s]_{\ell a}^2.
\end{equation} 
Using resolvent calculus, the site energies defined in \cref{eq:site_energy_decompostion} can be extended to the case where $\Lambda$ is infinite: By Lemma~\ref{lem:analytic-cont} below, $\mathfrak{g}^\beta(\,\cdot\,;\mu)$ extends to a holomorphic function defined on the set $\mathbb C \setminus \{\mu + ir \colon r \in \mathbb R, |r| \geq \pi\beta^{-1}\}$. Therefore, we may write
\begin{equation}\label{eq:site_energy_contour}
	G_{\ell}^\beta(y) \coloneqq -\frac{1}{2\pi i} \sum_a\oint_{\mathscr C_\beta} \mathfrak{g}^\beta(z;\mu) \left(\Ham(y) - z\right)^{-1}_{\ell\ell,aa} {\rm d}z
\end{equation}
where $\mathscr C_\beta$ is a simple closed contour contained within the region of holomorphicity of $\mathfrak{g}^\beta(z;\mu)$ and encircling the spectrum $\sigma(\Ham(y))$; see \Cref{fig}. We may choose $\mathscr C_\beta$ such that 
\begin{equation}\label{eq:dist_contour_existing}
\mathrm{dist}\left(z, \sigma(\Ham(y)) \cup \{\mu + ir \colon r\in \mathbb R, |r| \geq \pi\beta^{-1}\}\right) \geq \frac{\pi}{2\beta} \quad \text{for all } z \in \mathscr C_\beta.
\end{equation}
Since the {resolvent operator}, $\left(\Ham(y) - z\right)^{-1}$, is a well defined bounded linear operator for all $z \in \mathbb C \setminus \sigma(\Ham(y))$, the definition in \cref{eq:site_energy_contour} is valid for countable $\Lambda$. Obtaining a site energy on the infinite domain can also be derived by taking an appropriate sequence of finite domains $\Lambda_R$ and considering the thermodynamic limit of the site energies along this sequence \cite[Lemma~3.1]{ChenLuOrtner18}. This resolvent calculus approach for the tight binding model has been widely used \cite{ChenOrtner16,ChenLuOrtner18,ELu10,Goedecker1995}. 

For finite Fermi-temperature, the site energies defined in \cref{eq:site_energy_contour} are exponentially localised \cite{ChenLuOrtner18,ChenOrtner16} in the sense of Proposition~\ref{prop:existing_locality} below. The only difference between Proposition~\ref{prop:existing_locality} and \cite[Lemma~7]{ChenOrtner16} or \cite[Lemma~2.1]{ChenLuOrtner18} is that we explicitly track the $\beta$-dependent constants in the estimates.
\newpage\begin{prop}[Finite Fermi-Temperature Locality for Metals]\label{prop:existing_locality}
\text{ }
\begin{enumerate}[label=(\roman*)]
\item Suppose $y\colon \Lambda\to \mathbb R^d$ and $\Ham(y)$ satisfy \asNonInter~and \asTB,~respectively. Then, for $1 \leq j \leq \ctHamregularity$, there exist positive constants $C_j = C_j(\beta)$ and $\eta_j = \eta_j(\beta)$ such that
	\begin{gather}\label{eq:locality_beta}
	\left| \frac{\partial^j G_{\ell}^\beta(y)}{\partial [y(m_1)]_{i_1} \dots \partial [y(m_j)]_{i_j}} \right| \leq C_j e^{-\eta_j\sum_{l=1}^j |y(\ell) - y(m_l)|} 
	\end{gather}
	for any $\ell,m_1,\dots,m_j \in \Lambda$ and $1\leq i_1,\dots,i_j \leq d$. 
	
\item For all sufficiently large $\beta$, $C_{j}(\beta) = C\beta^{\alpha}$ where $C>0$ depends only on $\numorbitals,\ctTBprefactor{},\ctnoninterpen$ and $\alpha>0$ depends only on $j$ and $d$. Further, $\eta_j(\beta) = c\min\{1,\beta^{-1}\}$ for some $c>0$ depending only on $j, \ctTBprefactor{},\ctTBexponent{},\ctnoninterpen$ and $d$.
\end{enumerate}
\end{prop}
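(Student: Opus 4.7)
The plan is to differentiate under the contour integral \cref{eq:site_energy_contour} and iterate the resolvent identity $\partial_y(\Ham(y)-z)^{-1} = -(\Ham(y)-z)^{-1}(\partial_y\Ham(y))(\Ham(y)-z)^{-1}$ to write $\partial^j_y [(\Ham(y)-z)^{-1}]_{\ell\ell,aa}$ as a finite Leibniz sum of terms, each a product of at most $j+1$ resolvents interleaved with partial derivatives of $\Ham(y)$ of combined order $j$, contracted at the $(\ell,\ell)$ entry. By the two-centre structure \cref{a:two-centre}, a factor $[\partial^\alpha \Ham(y)]_{pq}$ vanishes unless at least one of $p,q$ lies in the corresponding set of differentiated sites, so the $j$ Hamiltonian-derivative factors pin $j$ of the otherwise $2j$ intermediate site summation indices.

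Next I would apply a Combes--Thomas estimate (as in \cite[Theorem~6]{ChenOrtner16}) of the form
\begin{equation*}
  \left|[(\Ham(y)-z)^{-1}]_{pq,ab}\right| \leq \frac{C}{\mathrm{dist}(z,\sigma(\Ham(y)))}\,e^{-c\,\min\{1,\,\mathrm{dist}(z,\sigma(\Ham(y)))\}\,|y(p)-y(q)|},
\end{equation*}
with $C,c>0$ depending only on $\numorbitals,\ctTBprefactor{0},\ctTBexponent{0},\ctnoninterpen,d$. On $\mathscr C_\beta$, by \cref{eq:dist_contour_existing} this distance is at least $\pi/(2\beta)$, which gives an exponential rate of order $\min\{1,\beta^{-1}\}$ and a prefactor of order $\beta$ per resolvent. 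Combining this with the TB decay from \cref{a:TB} and using the triangle inequality to extract decay in $\sum_l|y(\ell)-y(m_l)|$, the remaining $j$ free intermediate summations are handled via \asNonInter, in the form $\sum_{q\in\Lambda} e^{-r|y(q)-y(p)|}\leq C_d(1+r^{-d})$ for all $r>0$; with $r\sim\beta^{-1}$, each such sum contributes a factor of order $\beta^d$.

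Finally I would bound $|\mathfrak{g}^\beta(z;\mu)|$ uniformly on $\mathscr C_\beta$ (using the asymptotics $\mathfrak{g}^\beta(x;\mu)\to 2(x-\mu)$ as $x\to-\infty$ and $\mathfrak{g}^\beta(x;\mu)\to 0$ as $x\to+\infty$, together with the lower bound $\pi/(2\beta)$ on the distance from $\mathscr C_\beta$ to the singular set of $\mathfrak{g}^\beta(\,\cdot\,;\mu)$), and arrange $\mathscr C_\beta$ to have length bounded independently of $\beta$. Collecting all $\beta$-prefactors yields $C_j(\beta) = C\beta^\alpha$ with $\alpha=\alpha(j,d)$ and the claimed decay rate $\eta_j(\beta) = c\min\{1,\beta^{-1}\}$. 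The main obstacle is the careful $\beta$-bookkeeping: since the Combes--Thomas decay on $\mathscr C_\beta$ has rate only $O(\beta^{-1})$, each intermediate lattice sum is only marginally convergent and contributes a polynomial prefactor in $\beta$; these factors must be balanced against the $1/\mathrm{dist}(z,\sigma(\Ham))$ prefactors to arrive at a final exponent $\alpha$ depending only on $j$ and $d$, and not on the TB constants $\ctTBprefactor{},\ctTBexponent{}$.
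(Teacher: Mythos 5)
Your proposal follows essentially the same approach as the paper's proof: differentiate the contour integral representation, iterate the resolvent identity, exploit the two-centre structure so that derivatives of $\Ham$ pin intermediate indices, apply the Combes--Thomas estimate with explicit $\mathfrak{d}$-dependence (the paper's Lemma~\ref{lem:Thomas-Combes}, giving prefactor $O(\mathfrak{d}^{-1})$ and decay rate $c\min\{1,\mathfrak{d}\}$), evaluate the resulting lattice sums via the non-interpenetration condition, and bound $|\mathfrak{g}^\beta|$ uniformly along a contour of bounded length staying at distance $\geq \pi/(2\beta)$ from both the spectrum and the singular set (the paper handles the latter in Lemma~\ref{lem:boundedness-g}, which requires a little more care than your asymptotics sketch, since one must control the branch of $\mathrm{Arg}$ one lands on). Your $\beta$-bookkeeping is slightly tighter than the paper's---you keep only $j$ ``free'' intermediate sums giving $\beta^{jd}$, whereas the paper bounds all $2j$ sums as convergent (giving $\beta^{2jd}$)---but both yield $C_j(\beta)=C\beta^\alpha$ with $\alpha=\alpha(j,d)$ and $\eta_j(\beta)=c\min\{1,\beta^{-1}\}$, so the conclusion is the same.
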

\begin{proof}[Sketch of the Proof]
The $\beta$-dependence in the estimate \cref{eq:locality_beta} comes from the fact that the distance between the contour and the spectrum can, in general, only be bounded below by a constant multiple of $\beta^{-1}$ as in \cref{eq:dist_contour_existing}. We summarise the main ideas in \cref{proofs:existing_locality} below.
\end{proof}
For the case of insulators (where $\mu$ lies in a spectral gap), Proposition~\ref{prop:existing_locality} can both be improved and extended to zero Fermi-temperature. In this case, the following constants are strictly positive
\begin{align}\label{eq:insulator_distance}
	\ctgapfinitetemp(y) &\coloneqq \mathrm{dist}\left(\mu,\sigma(\Ham(y))\right)\quad \text{and}\\
	\ctgap(y) &\coloneqq \inf\left(\sigma(\Ham(y))\cap(\mu,+\infty)\right) - \sup\left(\sigma(\Ham(y))\cap(-\infty,\mu)\right). \label{eq:insulator_distance_2}
\end{align}
Using \cref{eq:insulator_distance_2}, and the fact that $z \mapsto 2(z-\mu)$ is analytic, means that the expression \cref{eq:site_energy_contour} holds for zero Fermi-temperature with a simple closed contour $\mathscr C_\infty$ encircling $\sigma(\Ham(y))\cap (-\infty,\mu)$ and avoiding $\sigma(\Ham(y))\cap (\mu,\infty)$; see \Cref{fig}. Further, we may suppose that
\begin{equation}\label{eq:distance_contours_1}
\textrm{dist}\left(z, \sigma(\Ham(y))\right) \geq \tfrac{1}{2}\ctgap(y) \quad \forall z \in \mathscr C_\infty.
\end{equation}
\begin{figure}[htb]
	\vspace{-8em}
	\centering
	\resizebox{\columnwidth} {!} {
		\tikzset{every picture/.style={line width=0.75pt}} 

\begin{tikzpicture}[x=0.75pt,y=0.75pt,yscale=-1,xscale=1]

\draw  [color={rgb, 255:red, 123; green, 123; blue, 123 }  ,draw opacity=1 ][dash pattern={on 5.63pt off 4.5pt}][line width=1.5]  (303.53,44.6) .. controls (303.53,27.14) and (317.69,12.98) .. (335.15,12.98) .. controls (352.61,12.98) and (366.77,27.14) .. (366.77,44.6) .. controls (366.77,62.06) and (352.61,76.22) .. (335.15,76.22) .. controls (317.69,76.22) and (303.53,62.06) .. (303.53,44.6) -- cycle ;
\draw [color={rgb, 255:red, 0; green, 0; blue, 0 }  ,draw opacity=1 ][line width=1.5]  [dash pattern={on 1.69pt off 2.76pt}]  (334.5,12) -- (334.5,279) ;

\draw  [color={rgb, 255:red, 34; green, 4; blue, 230 }  ,draw opacity=1 ][line width=1.5]  (38,126.69) .. controls (38,97.22) and (61.89,73.33) .. (91.36,73.33) -- (269.04,73.33) .. controls (298.51,73.33) and (322.4,97.22) .. (322.4,126.69) -- (322.4,159.97) .. controls (322.4,189.44) and (298.51,213.33) .. (269.04,213.33) -- (91.36,213.33) .. controls (61.89,213.33) and (38,189.44) .. (38,159.97) -- cycle ;
\draw [line width=1.5]    (86.9,135.96) -- (86.9,150.96) ;

\draw [color={rgb, 255:red, 123; green, 123; blue, 123 }  ,draw opacity=1 ]   (363.4,149.53) -- (277.9,149.95) ;
\draw [shift={(275.9,149.96)}, rotate = 359.72] [fill={rgb, 255:red, 123; green, 123; blue, 123 }  ,fill opacity=1 ][line width=0.75]  [draw opacity=0] (8.93,-4.29) -- (0,0) -- (8.93,4.29) -- cycle    ;
\draw [shift={(365.4,149.52)}, rotate = 179.72] [fill={rgb, 255:red, 123; green, 123; blue, 123 }  ,fill opacity=1 ][line width=0.75]  [draw opacity=0] (8.93,-4.29) -- (0,0) -- (8.93,4.29) -- cycle    ;
\draw  [color={rgb, 255:red, 255; green, 0; blue, 0 }  ,draw opacity=1 ][line width=1.5]  (640.84,143.96) .. controls (641.84,416.96) and (389.84,207.96) .. (336.34,207.24) .. controls (282.84,206.52) and (14.84,393.96) .. (14.84,141.96) .. controls (14.84,-110.04) and (281.84,81.96) .. (334.84,81.96) .. controls (387.84,81.96) and (639.84,-129.04) .. (640.84,143.96) -- cycle ;
\draw [line width=2.25]    (324.3,232.2) -- (344,255) ;

\draw [line width=2.25]    (323.3,253.2) -- (345.65,235.1) ;

\draw [line width=2.25]    (324.3,33.2) -- (344,56) ;

\draw [line width=2.25]    (322.97,53.65) -- (345.32,35.55) ;

\draw [color={rgb, 255:red, 123; green, 123; blue, 123 }  ,draw opacity=1 ][line width=1.5]  [dash pattern={on 1.69pt off 2.76pt}]  (251.34,115.52) -- (251.3,175.5) ;

\draw [color={rgb, 255:red, 123; green, 123; blue, 123 }  ,draw opacity=1 ][line width=1.5]  [dash pattern={on 1.69pt off 2.76pt}]  (431,142) -- (430,176) ;

\draw [color={rgb, 255:red, 123; green, 123; blue, 123 }  ,draw opacity=1 ]   (427.9,165.97) -- (252.15,166.59) ;
\draw [shift={(250.15,166.6)}, rotate = 359.8] [fill={rgb, 255:red, 123; green, 123; blue, 123 }  ,fill opacity=1 ][line width=0.75]  [draw opacity=0] (8.93,-4.29) -- (0,0) -- (8.93,4.29) -- cycle    ;
\draw [shift={(429.9,165.96)}, rotate = 179.8] [fill={rgb, 255:red, 123; green, 123; blue, 123 }  ,fill opacity=1 ][line width=0.75]  [draw opacity=0] (8.93,-4.29) -- (0,0) -- (8.93,4.29) -- cycle    ;
\draw  [fill={rgb, 255:red, 0; green, 0; blue, 0 }  ,fill opacity=1 ] (116,142) -- (251.3,142) -- (251.3,145.2) -- (116,145.2) -- cycle ;
\draw  [fill={rgb, 255:red, 0; green, 0; blue, 0 }  ,fill opacity=1 ] (431,142) -- (566.3,142) -- (566.3,145.2) -- (431,145.2) -- cycle ;
\draw [line width=1.5]    (274.9,135.96) -- (274.9,150.96) ;

\draw [line width=1.5]    (584.9,135.96) -- (584.9,150.96) ;

\draw  [color={rgb, 255:red, 123; green, 123; blue, 123 }  ,draw opacity=1 ][dash pattern={on 5.63pt off 4.5pt}][line width=1.5]  (303.86,244.15) .. controls (303.86,226.69) and (318.01,212.53) .. (335.47,212.53) .. controls (352.94,212.53) and (367.09,226.69) .. (367.09,244.15) .. controls (367.09,261.61) and (352.94,275.77) .. (335.47,275.77) .. controls (318.01,275.77) and (303.86,261.61) .. (303.86,244.15) -- cycle ;
\draw [color={rgb, 255:red, 123; green, 123; blue, 123 }  ,draw opacity=1 ]   (331.9,121.98) -- (253.15,122.58) ;
\draw [shift={(251.15,122.6)}, rotate = 359.56] [fill={rgb, 255:red, 123; green, 123; blue, 123 }  ,fill opacity=1 ][line width=0.75]  [draw opacity=0] (8.93,-4.29) -- (0,0) -- (8.93,4.29) -- cycle    ;
\draw [shift={(333.9,121.96)}, rotate = 179.56] [fill={rgb, 255:red, 123; green, 123; blue, 123 }  ,fill opacity=1 ][line width=0.75]  [draw opacity=0] (8.93,-4.29) -- (0,0) -- (8.93,4.29) -- cycle    ;
\draw [color={rgb, 255:red, 123; green, 123; blue, 123 }  ,draw opacity=1 ]   (337.6,137.26) -- (363.25,137.43) ;
\draw [shift={(365.25,137.44)}, rotate = 180.37] [fill={rgb, 255:red, 123; green, 123; blue, 123 }  ,fill opacity=1 ][line width=0.75]  [draw opacity=0] (8.93,-4.29) -- (0,0) -- (8.93,4.29) -- cycle    ;
\draw [shift={(335.6,137.25)}, rotate = 0.37] [fill={rgb, 255:red, 123; green, 123; blue, 123 }  ,fill opacity=1 ][line width=0.75]  [draw opacity=0] (8.93,-4.29) -- (0,0) -- (8.93,4.29) -- cycle    ;
\draw [line width=1.5]    (365.9,135.96) -- (365.9,150.96) ;

\draw [line width=1.5]    (396.9,135.96) -- (396.9,150.96) ;

\draw (44,230) node [scale=1.2,color={rgb, 255:red, 255; green, 0; blue, 0 }  ,opacity=1 ]  {$\mathscr{C}_{\beta }$};
\draw (67,192) node [scale=1.2,color={rgb, 255:red, 0; green, 0; blue, 255 }  ,opacity=1 ]  {$\mathscr{C}_{\infty }$};
\draw (304,143) node [scale=0.8,color={rgb, 255:red, 4; green, 0; blue, 0 }  ,opacity=1 ]  {$\mathsf{g}( y)$};
\draw (368,159) node [scale=0.8,color={rgb, 255:red, 0; green, 0; blue, 0 }  ,opacity=1 ]  {$\mathsf{g}^{\mathrm{ref}}$};
\draw (287,116) node [scale=0.8,color={rgb, 255:red, 0; green, 0; blue, 0 }  ,opacity=1 ]  {$\mathsf{d}^{\mathrm{ref}}$};
\draw (351,129) node [scale=0.8,color={rgb, 255:red, 0; green, 0; blue, 0 }  ,opacity=1 ]  {$\mathsf{d}( y)$};

\end{tikzpicture} }%
	\vspace{-8em}
	\caption{Cartoon depicting an approximation of $\sigma(\Ham(y))$ for $y \in \mathrm{Adm}(\Lambda)$ (in black on the real axis), see Lemma~\ref{lem:decomp-spec}, and the contours $\mathscr C_\beta$ (in red) and $\mathscr C_\infty$ (in blue). The positive constants $\ctgapfinitetemp(y),\ctgapfinitetempHom,\ctgap(y)$ and $\ctgapHom$ are also displayed. By \cref{eq:distance_singularity_beta}, the finite Fermi-temperature contours avoid the balls of radius $\ctAwayFromSingularity\beta^{-1}$ (grey dashed circles) about $\mu \pm i\pi\beta^{-1}$ (shown with black crosses). The spectrum pictured in Figure~\ref{fig} is qualitatively similar to that resulting from point defects in lattice structures.\label{fig}}
\end{figure}
In this case, we obtain the following locality results that are uniform in Fermi-temperature:
\begin{prop}[Locality Estimates for Insulators]\label{prop:locality_insulators}\text{ }
\begin{enumerate}[label=(\roman*)]
	\item Suppose $y \colon \Lambda \to \mathbb R^d$ and $\Ham(y)$ satisfy \asNonInter~and \asTB, respectively. Further, we assume that $\mu \not \in \sigma(\Ham(y))$. Then, for $1\leq j \leq \ctHamregularity$, there exist positive constants $C_{j}, \eta_{j}$, such that 
	\begin{gather}\label{eq:locality}
	\left| \frac{\partial^j G^\beta_{\ell}(y)}{\partial [y(m_1)]_{i_1} \dots \partial [y(m_j)]_{i_j}} \right| \leq C_{j}\,e^{-\eta_{j}\sum_{l=1}^j|y(\ell) - y(m_l)|} 
	\end{gather}
	for any $\ell,m_1,\dots,m_j \in \Lambda$ and $1\leq i_1,\dots,i_j \leq d$. 	

	\item For $\beta<\infty$, the constants $C_{j}$ and $\eta_{j}$ depend on $\ctgapfinitetemp = \ctgapfinitetemp(y)$ from \cref{eq:insulator_distance}. For all sufficiently small $\ctgapfinitetemp$, $C_{j}(\ctgapfinitetemp) = C\ctgapfinitetemp^\alpha$ where $C>0$ depends only on $\numorbitals,\ctTBprefactor{},\ctnoninterpen$ and $\alpha>0$ depends only on $j$ and $d$. Further, $\eta_{j}(\ctgapfinitetemp) = c \min\{1,\ctgapfinitetemp\}$ for some $c>0$ depending only on $j,\ctTBprefactor{},\ctTBexponent{},\ctnoninterpen$ and $d$.

	\item {For $\beta= \infty$, the constants $C_{j}$ and $\eta_{j}$ depend on $\ctgap = \ctgap(y)$ from \cref{eq:insulator_distance_2}. For all sufficiently small $\ctgap$, $C_{j}(\ctgap) = C\ctgap^\alpha$ where $C>0$ depends only on $\numorbitals,\ctTBprefactor{},\ctnoninterpen$ and $\alpha>0$ depends only on $j$ and $d$. Further, $\eta_{j}(\ctgap) = c \min\{1,\ctgap\}$ for some $c>0$ depending only on $j,\ctTBprefactor{},\ctTBexponent{},\ctnoninterpen$ and $d$.}
\end{enumerate}
\end{prop}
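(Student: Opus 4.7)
The plan is to differentiate under the resolvent contour integral \cref{eq:site_energy_contour} and to apply exponential off-diagonal decay of $(\Ham(y)-z)^{-1}$. Repeated use of the identity
\begin{equation*}
\partial_{[y(m)]_i}(\Ham(y)-z)^{-1} = -(\Ham(y)-z)^{-1}\,\partial_{[y(m)]_i}\Ham(y)\,(\Ham(y)-z)^{-1}
\end{equation*}
expresses $\partial^j[(\Ham(y)-z)^{-1}]_{\ell\ell,aa}$ as a finite linear combination of products of $j+1$ resolvent factors interleaved with $j$ first-order derivatives of $\Ham(y)$, the $k$-th such derivative being localized around site $m_k$ via \asTB. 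Expanding each matrix product in components rewrites the term as a sum over intermediate indices of chains of resolvent entries and entries of $\partial h$.

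The key analytic tool is a Combes--Thomas estimate of the form
\begin{equation*}
\bigl|[(\Ham(y)-z)^{-1}]_{\ell k,ab}\bigr| \leq \frac{C}{d_z}\,e^{-c\min(1,d_z)\,|y(\ell)-y(k)|}, \qquad d_z := \mathrm{dist}\bigl(z,\sigma(\Ham(y))\bigr),
\end{equation*}
proved by the standard exponential conjugation argument from \asTB~with $j=0$ together with the uniform spectral bounds implied by \asNonInter. Inserting this into each chain, combining with the $\partial h$ decay from \asTB, and summing over intermediate indices --- using \asNonInter~to turn the discrete sums into convergent geometric series, at the mild cost of a slight reduction of the exponent to absorb polynomial prefactors --- yields, for every $z$ on the contour, an integrand bound of the form $C\,d_z^{-(j+1)}\,e^{-c\min(1,d_z)\sum_l|y(\ell)-y(m_l)|}$.

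The remaining task is to choose contours on which $d_z$ is controlled by $\ctgap(y)$ or $\ctgapfinitetemp(y)$ and then integrate. For $\beta=\infty$, the contour $\mathscr C_\infty$ of \Cref{fig} passes through the spectral gap at $\mu$, has length bounded in terms of the universal spectral diameter, satisfies $d_z \geq \tfrac12\ctgap(y)$ on the whole contour, and carries the weight $|\mathfrak g^\infty(z;\mu)|=2|z-\mu|$, which is uniformly bounded; the Cauchy estimate and tracking of the $d_z^{-(j+1)}$ factor then give part (iii), with the claimed polynomial dependence on $\ctgap$. For $\beta<\infty$ the same programme applies, but $\mathscr C_\beta$ must additionally avoid the branch singularities of $\mathfrak g^\beta$ on the half-lines $\mu\pm i[\pi\beta^{-1},\infty)$ while keeping $d_z\gtrsim\ctgapfinitetemp(y)$ uniformly in $\beta$; this is the main obstacle, and is handled by the bent contour in \Cref{fig}, which stays at distance $\tfrac{\pi}{2}\beta^{-1}$ from the singularities and distance $\tfrac12\ctgapfinitetemp(y)$ from the spectrum, combined with an elementary bound of the form $|\mathfrak g^\beta(z;\mu)|\lesssim|z-\mu|+\beta^{-1}$ on the contour. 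Once that geometric construction is in place, part (ii) follows from the same chain estimate. The hard step is thus the uniform-in-$\beta$ contour design; the resolvent identity, the Combes--Thomas estimate, and the chaining sums then proceed in close analogy with the arguments in \cite{ChenOrtner16,ChenLuOrtner18}, with the spectral gap playing the role previously played by the inverse temperature.
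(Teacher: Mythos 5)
Your proposal follows essentially the same path as the paper: the contour representation \cref{eq:site_energy_contour}, the resolvent derivative identity producing chains of resolvent factors, the Combes--Thomas estimate (Lemma~\ref{lem:Thomas-Combes}), summation of the discrete chains using \asNonInter, and the observation that for insulators the contour can be placed at a $\beta$-independent distance from the spectrum controlled by $\ctgapfinitetemp(y)$ (resp.\ $\tfrac12\ctgap(y)$) so that the metal case argument goes through unchanged. The only small points of divergence are cosmetic: the paper's version of the $\beta$-uniform bound on $\mathfrak g^\beta$ along $\mathscr C_\beta$ is established via Lemmas~\ref{lem:analytic-cont} and~\ref{lem:boundedness-g} rather than the heuristic $|\mathfrak g^\beta|\lesssim|z-\mu|+\beta^{-1}$, and the paper absorbs the intermediate sums into an inverse-polynomial prefactor $\mathfrak d_j^{-2d}$ rather than trading a fraction of the exponent; both choices lead to the same conclusion.
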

\begin{proof}[Sketch of the Proof]
This result follows from the same arguments as in the proof of Proposition~\ref{prop:existing_locality}. In this case, the pre-factors and exponents are $\beta$-independent because the constants \cref{eq:insulator_distance} and \cref{eq:insulator_distance_2} are. Again, the main ideas are summarised in \cref{proofs:existing_locality}.
\end{proof}

\subsection{Point Defects}
\label{subsec:point_defects}
We suppose that a given reference configuration has a spectral gap and that the chemical potential is fixed within the gap. Then we show that point defect configurations introduce additional ``defect states" into the system and also perturbs the essential spectrum. The main result of this paper is that, within this setting, the locality results discussed in \cref{subsec:site_energy_decomposition} are independent of discrete spectra inside the band gap in the sense of Theorems~\ref{thm:improved_locality_beta} and \ref{thm:improved_locality} below.

\subsubsection{Reference Configurations}
\label{subsubsec:band_structure_homogeneous}
In preparation for this result, we consider the Hamiltonian on a fixed reference configuration, $\Lambda^\mathrm{ref}$, given by $\left(\Ham^\mathrm{ref}\right)_{\ell k}^{ab} = h_{\ell k}^{ab}(\ell - k)$ for $\ell,k \in \Lambda^\mathrm{ref}$ and $1\leq a,b\leq\numorbitals$. In order to keep the presentation abstract and the mathematical results general we will not explicitly define $\Lambda^\mathrm{ref}$, but we will always be thinking of a multi-lattice.

For the remainder of this paper, we make the following assumption:
\newcommand{\asInsulator}{\textbf{(GAP)}}
\begin{assumption}{\asInsulator}
There exists a band gap in $\sigma(\Ham^{\mathrm{ref}})$ and the chemical potential, $\mu$, lies in the interior of this band gap. 
\end{assumption}
Under \asInsulator, we may introduce the following positive constants that will determine the interaction range in the improved locality estimates:
\begin{align}
\label{eq:gapfinite}
\ctgapfinitetempHom &\coloneqq \mathrm{dist}\left(\mu,\sigma(\Ham^\mathrm{ref})\right) \quad \text{and}\\
\label{eq:gapzero}\ctgapHom &\coloneqq \inf\left( \sigma(\Ham^\mathrm{ref})\cap (\mu,+\infty)\right) - \sup\left( \sigma(\Ham^\mathrm{ref})\cap (-\infty,\mu)\right).
\end{align}

\begin{remark}
In the case where $\Lambda^\mathrm{ref}$ is a multi-lattice formed by taking the union of finitely many copies of a Bravais lattice, $\mathbb{L}$, we may apply Bloch's Theorem \cite{bk:kittel} to diagonalise $\Ham^\mathrm{ref}$ and write the spectrum as the union of finitely many continuous energy bands:
\begin{equation}
\label{multilattice-bands}
\sigma(\Ham^\mathrm{ref}) = \bigcup_{\alpha} \{ \varepsilon^\alpha(k) \colon k \in \textrm{BZ} \}
\end{equation}
where $\varepsilon^{\alpha} \colon \overline{\textrm{BZ}} \to \mathbb R$ are continuous functions on the (closure of the) Brillouin zone.
\end{remark}

\subsubsection{Point Defect Reference Configurations}
\label{subsubsec:point_defect_reference_configurations}
 
\newcommand{\asPointDefect}{\textbf{(P)}}
From now on, we shall assume that $\Lambda$ is a {point defect reference configuration}:
\begin{assumption}{\asPointDefect}
	Given a reference configuration $\Lambda^\mathrm{ref} \subset \mathbb R^d$, we suppose $\Lambda\subset \mathbb R^d$ is such that there exists a positive constant $R_\mathrm{def}$ with $\Lambda^\mathrm{ref} \setminus B_{R_\textrm{def}} = \Lambda \setminus B_{R_\textrm{def}}$ and $\#\left(B_{R_\textrm{def}}\cap\Lambda\right)<\infty$.
\end{assumption}
We now introduce energy spaces of {displacements} which restricts the class of admissible configurations. Given $\ell \in \Lambda$ and $\rho \in \Lambda - \ell$, we define the finite difference $D_\rho u(\ell) \coloneqq u(\ell + \rho) - u(\ell)$. The full (infinite) finite difference stencil is then defined to be $Du(\ell) \coloneqq \left( D_\rho u(\ell)\right)_{\rho\in\Lambda - \ell}$. For $\ctGamma > 0$, the $\ell^2_\ctGamma$ semi-norm on the full interaction stencil is given by 
\[ 
\|Du\|_{\ell^2_\ctGamma} \coloneqq \left(\sum_{\ell \in \Lambda} \sum_{\rho\in\Lambda-\ell} e^{-2\ctGamma|\rho|}|D_\rho u(\ell)|^2\right)^{1/2}.
\]
All of the semi-norms $\|D\cdot\|_{\ell^2_\ctGamma}$ for $\ctGamma > 0$ are equivalent \cite{ChenNazarOrtner19} and so we fix an exponent $\ctGamma>0$ for the remainder of this paper and define the following function space of finite energy displacements:
\begin{equation*} 
\W(\Lambda) \coloneqq \{ u \colon \Lambda \to \mathbb R^d \colon \|Du\|_{\ell^2_\ctGamma} < \infty \}.  
\end{equation*}
We hence define the space of {admissible configurations} by
\begin{equation*}
\textrm{Adm}(\Lambda) \coloneqq \{ y \in x + \W(\Lambda)\colon y \text{ satisfies \asNonInter}\}
\end{equation*}
where $x\colon\Lambda\to\Lambda$ denotes the identity configuration.

For $y \in \mathrm{Adm}(\Lambda)$, the spectrum, $\sigma(\Ham(y))$, can be related to $\sigma(\Ham^\mathrm{ref})$: 
\begin{lemma}[Decomposition of the Spectrum]
\label{lem:decomp-spec}
Fix $y \in \mathrm{Adm}(\Lambda)$. Then, for all $\delta>0$, there exists $R_\delta>0$ such that $\#\left(\sigma(\Ham(y)) \setminus B_\delta(\sigma(\Ham^\mathrm{ref})) \right) \leq R_\delta$.
\end{lemma}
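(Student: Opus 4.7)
The plan is to first establish the abstract inclusion $\sigma_{\mathrm{ess}}(\Ham(y)) \subseteq \sigma(\Ham^{\mathrm{ref}})$, and then deduce the lemma from the Weyl--type fact that outside the essential spectrum of a bounded self--adjoint operator the spectrum consists of isolated eigenvalues of finite multiplicity, which can only accumulate on the essential spectrum. Since $\sigma(\Ham(y))$ is contained in a bounded interval (the Gershgorin bound quoted after \asTB{}), any hypothetical accumulation point of eigenvalues in $\mathbb R\setminus B_\delta(\sigma(\Ham^{\mathrm{ref}}))$ would have to lie in $\sigma_{\mathrm{ess}}(\Ham(y))\subseteq\sigma(\Ham^{\mathrm{ref}})$, contradicting the $\delta$--separation; therefore only finitely many such eigenvalues can exist, and a bound $R_\delta$ follows.

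To prove the inclusion I would realise (a finite--rank identification of) $\Ham(y) - \Ham^{\mathrm{ref}}$ as a compact operator. By \asPointDefect{}, the Hilbert spaces $\ell^2(\Lambda;\mathbb C^{\numorbitals})$ and $\ell^2(\Lambda^{\mathrm{ref}};\mathbb C^{\numorbitals})$ agree up to the finite--dimensional subspace supported in $B_{R_{\mathrm{def}}}$; fixing any unitary identification between the two finite--dimensional complements lets us view both Hamiltonians on a single Hilbert space. Writing $y = x + u$ with $u \in \W(\Lambda)$ and applying the fundamental theorem of calculus to the map $t \mapsto h^{ab}_{\ell k}\bigl((\ell-k) + t(u(\ell)-u(k))\bigr)$, the derivative bound in \asTB{} with $j=1$, combined with \asNonInter{} (which prevents the integration path from degenerating near the origin), produces
\[
\bigl|\Ham(y)^{ab}_{\ell k} - (\Ham^{\mathrm{ref}})^{ab}_{\ell k}\bigr| \;\leq\; C\,e^{-\gamma|\ell-k|}\,|u(\ell)-u(k)|, \qquad \ell,k\in\Lambda\setminus B_{R_{\mathrm{def}}},
\]
for uniform constants $C,\gamma>0$. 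The remaining entries with $\ell$ or $k$ in $B_{R_{\mathrm{def}}}$ are absorbed into a finite--rank contribution.

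To turn this pointwise estimate into a compactness statement I would approximate $\Ham(y) - \Ham^{\mathrm{ref}}$ in operator norm by the finite--rank truncations $T_R$ supported on sites in $B_R\times B_R$, and bound the operator norm of the remainder via the Schur test. The key input is that $\|Du\|_{\ell^2_{\ctGamma}}<\infty$ implies, for each fixed difference vector $\rho$, square--summability of $\ell\mapsto D_\rho u(\ell)$, and summability against the exponential weight $e^{-\gamma|\rho|}$; this drives the row and column sums of the remainder outside $B_R$ to $0$ as $R\to\infty$. Hence $\Ham(y) - \Ham^{\mathrm{ref}}$ is compact, Weyl's theorem yields $\sigma_{\mathrm{ess}}(\Ham(y)) = \sigma_{\mathrm{ess}}(\Ham^{\mathrm{ref}}) \subseteq \sigma(\Ham^{\mathrm{ref}})$, and the proof concludes. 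The main obstacle is the Schur estimate: one must combine the exponential $h$--decay from \asTB{}, the non--interpenetration bound from \asNonInter{}, and the weighted $\ell^2$--summability of $Du$ in the right way to show that the far--field mass of $\Ham(y)-\Ham^{\mathrm{ref}}$ outside $B_R \times B_R$ vanishes as $R\to\infty$.
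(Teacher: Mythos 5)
Your proposal follows essentially the same strategy as the paper: show that $\Ham(y) - \Ham^{\mathrm{ref}}$ is a compact perturbation (by exhibiting it as a norm-limit of finite-rank truncations, using the Taylor/\asTB{}-estimate with the far-field smallness of $Du$), and then invoke Weyl's theorem to preserve the essential spectrum and a boundedness/no-accumulation argument to bound the number of outlying eigenvalues. The paper's Lemma~\ref{lem:decomp_ham} is precisely your ``for every $\delta$ there is a finite-rank $T_R$ with $\|T-T_R\|\leq\delta$'' step, packaged as a decomposition $P_1 + P_2$ with $\|P_1\|_{\mathrm F}\leq\delta$ (Frobenius norm rather than a Schur-test operator-norm bound) and $P_2$ local of finite rank; the paper then applies Weyl only to $P_2$ and treats $P_1$ through the Hausdorff-distance spectral perturbation bound, which is just a two-step unpacking of your single compactness-plus-Weyl step and additionally gives the sharper $\mathrm{dist}(\sigma_{\mathrm{ess}}(\Ham(y)),\sigma_{\mathrm{ess}}(\Ham^{\mathrm{ref}}))\leq\delta$ statement used elsewhere.

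Two concrete points in your write-up need repair. First, the Hilbert-space identification is not as innocuous as stated: when $\#(\Lambda\cap B_{R_{\mathrm{def}}}) \neq \#(\Lambda^{\mathrm{ref}}\cap B_{R_{\mathrm{def}}})$ (vacancies, interstitials) there is \emph{no} unitary between the two finite-dimensional complements, so ``fixing any unitary identification'' does not make sense. The paper sidesteps this by zero-padding both operators to $\ell^2\big((\Lambda\cup\Lambda^{\mathrm{ref}})\times\{1,\dots,\numorbitals\}\big)$ (cf.\ \cref{eq:extended_ham}), which introduces a spurious point eigenvalue at $0$ in both spectra, and then shifting by a constant $z_0$ so that this eigenvalue lies harmlessly outside the contour; some equivalent device is needed in your argument too. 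Second, the bound $\big|\Ham(y)^{ab}_{\ell k} - (\Ham^{\mathrm{ref}})^{ab}_{\ell k}\big|\leq C e^{-\gamma|\ell-k|}|u(\ell)-u(k)|$ does not hold merely for $\ell,k\in\Lambda\setminus B_{R_{\mathrm{def}}}$: the intermediate point $\xi$ in Taylor's theorem only satisfies $|\xi|\gtrsim|\ell-k|$ once one has $|D_{k-\ell}u(\ell)|\leq c|\ell-k|$ with $c<1$, which in turn requires $\ell,k$ to lie outside a sufficiently large ball $B_R$ with $R>R_{\mathrm{def}}$ chosen via the $\ell^\infty$-smallness of $Du$ at infinity (as in \cref{eq:small_displacement}). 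Both points are fixable without changing your overall plan.
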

\begin{proof}[Sketch of the Proof]
We approximate $\Ham(y)$ as a finite rank update of $\Ham^\mathrm{ref}$. The finite rank perturbation only introduces finitely many eigenvalues outside the essential spectrum of $\Ham^\mathrm{ref}$. See \cref{proofs:decomposition_spectrum} for a full proof.
\end{proof}
\begin{remark}
    In the case that $\Lambda^\mathrm{ref}$ is a multi-lattice, $\sigma(\Ham^\mathrm{ref})$ is banded in the sense of \cref{multilattice-bands}. This means Lemma~\ref{lem:decomp-spec} states that point defects give rise to a finite number of ``defect states'' located away from the spectral bands as depicted in Figure~\ref{fig}.
\end{remark}

\subsubsection{Locality of Site Energies}
\label{subsubsec:locality_site_energies}
Lemma~\ref{lem:decomp-spec}, together with a locality result for the spectral projection (see Lemma~\ref{lem:Thomas-Combes} below) corresponding to the finitely many eigenvalues bounded away from the spectral bands, allows us to approximate $(\Ham(y) - z)^{-1}$ in terms of the reference resolvent, $(\Ham^\mathrm{ref} - z)^{-1}$. This means we can apply the existing locality estimates of Proposition~\ref{prop:locality_insulators} on the reference spectrum. The approximation does not affect the exponent in the estimates and only increases the constant pre-factor. We show that the pre-factor may be chosen to depend on the atomic sites and this converges exponentially to the corresponding pre-factor in the defect-free case, as we send the atomic sites away from the defect core. That is, away from the defect, the locality estimates resemble the corresponding estimates for the reference configuration.

\newpage\begin{theorem}[Improved Finite Fermi-Temperature Locality]\label{thm:improved_locality_beta}\text{ }
	\begin{enumerate}[label=(\roman*)]
		\item Fix $y \in \mathrm{Adm}(\Lambda)$. Then, for $1\leq j \leq \ctHamregularity$, $\ell \in \Lambda$, $\bm{m} = (m_1,\dots,m_j) \in \Lambda^j$ and $1\leq i_1,\dots,i_j\leq d$, there exists a positive constant $C_{\beta j} = C_{\beta j}(\ell, \bm{m})$ such that 
		\begin{gather}\label{eq:locality_finite}
		\left| \frac{\partial^j G_{\ell}^\beta(y)}{\partial [y(m_1)]_{i_1} \dots \partial [y(m_j)]_{i_j}} \right| \leq C_{\beta j} e^{-\eta_{j}^\mathrm{ref}\sum_{l=1}^j |y(\ell) - y(m_l)|} 
		\end{gather}
		where $\eta_{j}^\mathrm{ref} \coloneqq c \min\{1,\ctgapfinitetempHom\}$ and $c>0$ depends only on $j,\ctTBprefactor{},\ctTBexponent{},\ctnoninterpen$ and $d$.
	
		\item $C_{\beta j}(\ell,\bm{m})$ is uniformly bounded independently of $\ell$ and $\bm{m}$. Let $C_{\beta j}^\mathrm{ref} \coloneqq C_{\beta j}$ when $\Lambda = \Lambda^\mathrm{ref}$ and $y = x$. If $\ell,m_{1},\dots,m_j \in B_R(\xi)$ for some $R>0$, then $C_{\beta j}(\ell,\bm{m}) \to C_{\beta j}^\mathrm{ref}$ as $|\xi|\to \infty$, with an exponential rate.
	
		\item If $\mu \not\in \sigma(\Ham(y))$, then $C_{\beta j}$ can be chosen to be $\beta$-independent. On the other hand, if $\mu \in \sigma(\Ham(y))$, then $C_{\beta j} = C\beta^{j-1}$ for some $C>0$ depending only on $j, \ctTBprefactor{},\ctTBexponent{},\numorbitals,\ctnoninterpen$ and $d$. 
	\end{enumerate}	
\end{theorem}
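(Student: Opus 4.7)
The plan is to combine the contour representation \cref{eq:site_energy_contour} with the spectral decomposition provided by Lemma~\ref{lem:decomp-spec} and the projector locality of Lemma~\ref{lem:Thomas-Combes}, so that the exponent in \cref{eq:locality_finite} is driven by the \emph{reference} gap $\ctgapfinitetempHom$ rather than by $\ctgapfinitetemp(y)$ (which may be much smaller, or even zero). First, I would fix $\delta = \tfrac14 \ctgapfinitetempHom$ and apply Lemma~\ref{lem:decomp-spec} to write $\sigma(\Ham(y)) = \sigma_{\mathrm{bulk}} \cup \sigma_{\mathrm{def}}$, where $\sigma_{\mathrm{bulk}} \subset B_\delta(\sigma(\Ham^{\mathrm{ref}}))$ and $\sigma_{\mathrm{def}}$ consists of at most $R_\delta$ eigenvalues. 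Let $P_{\mathrm{def}}$ be the Riesz spectral projector onto $\sigma_{\mathrm{def}}$ (given by a contour surrounding only those eigenvalues) and set $P_{\mathrm{bulk}} = I - P_{\mathrm{def}}$. Then I deform $\mathscr C_\beta$ into two pieces: $\mathscr C_{\mathrm{bulk}}$, encircling $\sigma_{\mathrm{bulk}}$ at distance at least $\tfrac12 \ctgapfinitetempHom$ from $\sigma(\Ham^{\mathrm{ref}})$, and small loops $\mathscr C_{\mathrm{def}}$ around each defect eigenvalue $\lambda \in \sigma_{\mathrm{def}}$ (taken at distance $\sim \ctgapfinitetempHom$ if $\lambda \neq \mu$, or at distance $\sim \beta^{-1}$ if $\lambda = \mu$).

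On $\mathscr C_{\mathrm{bulk}}$, I apply the Combes--Thomas estimate of Lemma~\ref{lem:Thomas-Combes} to $(\Ham(y)-z)^{-1}P_{\mathrm{bulk}}$, which, since $z$ stays at distance $\geq \tfrac12\ctgapfinitetempHom$ from $\sigma_{\mathrm{bulk}}$, yields matrix-element decay at rate $\eta^{\mathrm{ref}} \sim \min\{1,\ctgapfinitetempHom\}$. Differentiating $j$ times in $[y(m_l)]_{i_l}$ produces $j+1$ resolvent factors multiplied by $j$ factors of $\partial_{y_m}\Ham$; using \cref{a:TB} together with the exponential decay from Combes--Thomas and a standard summation-of-decays argument (as in the proof sketch of Proposition~\ref{prop:existing_locality}), this gives the $e^{-\eta_j^{\mathrm{ref}}\sum|y(\ell)-y(m_l)|}$ bound with a prefactor that is uniformly bounded in $\beta$. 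For the defect loops $\mathscr C_{\mathrm{def}}$, the residue theorem collapses the integral to finitely many terms involving $\mathfrak{g}^\beta(\lambda;\mu)$ (and, if $\lambda = \mu$ for $\beta<\infty$, Taylor coefficients of $\mathfrak{g}^\beta$ at $\mu$) multiplied by $j$-fold resolvent/projector products restricted to the finite-dimensional defect subspace. Since each $\lambda \in \sigma_{\mathrm{def}}$ lies at distance at least $\tfrac34\ctgapfinitetempHom$ from $\sigma(\Ham^{\mathrm{ref}})$, Lemma~\ref{lem:Thomas-Combes} applied to the Riesz projector onto $\lambda$ produces entries decaying at rate $\eta^{\mathrm{ref}}$ as well, so the defect contribution obeys the same exponent.

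The $\beta$-dependence in (iii) is then an accounting of the defect loops only. If $\mu \notin \sigma(\Ham(y))$, then $\mathrm{dist}(\mu,\sigma_{\mathrm{def}}) > 0$, the loops can be chosen at $\beta$-independent distance, and $|\mathfrak{g}^\beta(\lambda;\mu)|$ is uniformly bounded, giving $C_{\beta j}$ independent of $\beta$. If $\mu = \lambda \in \sigma_{\mathrm{def}}$, then $\mathfrak{g}^\beta$ has its nearest singularities at $\mu \pm i\pi/\beta$ so the loop around $\mu$ must be taken of radius $\sim \beta^{-1}$; the $j$-th derivative brings down $j+1$ resolvent factors, each of size $\sim \beta$ near $\mu$, while the loop contributes a length factor $\beta^{-1}$ and $\mathfrak{g}^\beta$ stays $O(1)$ (with derivatives of order $\beta$); a careful accounting, essentially a Cauchy-integral formula for the $j$-th Taylor coefficient of $\mathfrak{g}^\beta$ at $\mu$ applied on the finite-dimensional defect block, yields the stated $\beta^{j-1}$ prefactor.

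Finally, for the convergence in (ii), I observe that by \cref{a:TB} and \asPointDefect, $\Ham(y)-\Ham^{\mathrm{ref}}$ has matrix elements that decay exponentially away from $B_{R_{\mathrm{def}}}(0)$. Hence the defect projector $P_{\mathrm{def}}$ (as a Riesz integral of the resolvent) has entries decaying exponentially in $\max\{|y(\ell)|,|y(m_l)|\}$; the second resolvent identity then shows that if $\ell,m_1,\dots,m_j \in B_R(\xi)$ with $|\xi|\to\infty$, the defect contribution to the derivative decays like $e^{-c|\xi|}$, and the bulk contribution converges to the corresponding $\Ham^{\mathrm{ref}}$ quantity at the same exponential rate. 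The main obstacle I anticipate is bookkeeping the $\beta^{j-1}$ factor cleanly when $\mu \in \sigma(\Ham(y))$, since one must show that the potential $\beta^{j+1}\cdot\beta^{-1} = \beta^j$ naive estimate is overcounting: the key reduction is that only the defect eigenvalue at $\mu$ contributes a singular loop, and on that finite-rank block the $j$-th derivative of the analytic function $\mathfrak{g}^\beta$ is given by its Taylor coefficient, which has size $\beta^{j-1}$ (not $\beta^j$) by an elementary estimate on $\mathfrak g^\beta$ near $\mu$.
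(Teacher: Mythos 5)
There is a genuine gap in the core of your argument. You propose to apply the Combes--Thomas estimate of Lemma~\ref{lem:Thomas-Combes} to the operator $(\Ham(y)-z)^{-1}P_{\mathrm{bulk}}$ for $z$ on the bulk contour $\mathscr C_{\mathrm{bulk}}$, arguing that since $z$ is at distance $\gtrsim \ctgapfinitetempHom$ from $\sigma_{\mathrm{bulk}}$, the exponent is $\eta^{\mathrm{ref}}$. But Lemma~\ref{lem:Thomas-Combes} controls the entries of the \emph{full} resolvent $(\Ham(y)-z)^{-1}$ in terms of $\mathrm{dist}(z,\sigma(\Ham(y)))$, not of the resolvent compressed by a spectral projector in terms of the distance to a spectral subset. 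If there is a defect eigenvalue near $\mu$, then $\mathrm{dist}(z,\sigma(\Ham(y)))$ for $z \in \mathscr C_{\mathrm{bulk}}$ can still be small, so the lemma as stated yields no improvement, and the product with $P_{\mathrm{bulk}}$ is not a resolvent of a Hamiltonian satisfying \asTB{} (the restricted operator $P_{\mathrm{bulk}}\Ham(y)P_{\mathrm{bulk}}$ is not banded, since $P_{\mathrm{bulk}}$ only has exponentially decaying entries, and the Combes--Thomas conjugation argument does not carry over without additional work). Thus the step that would deliver the reference exponent --- the central claim of part (i) --- is not justified.

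What the paper uses instead, and what you omit entirely, is Lemma~\ref{lem:decomp_ham} together with the Woodbury identity: one writes $\widetilde\Ham(y) = \widetilde\Ham^{\mathrm{ref}} + P_1(y) + P_2(y)$ with $P_1$ small in Frobenius norm and $P_2$ finite-rank and spatially local, then expands
\[
(\widetilde\Ham^{\mathrm{ref}} + P_2(y) - z)^{-1} = \mathscr R_z^{\mathrm{ref}} - \mathscr R_z^{\mathrm{ref}} U\,(I + V\mathscr R_z^{\mathrm{ref}} U)^{-1} V\,\mathscr R_z^{\mathrm{ref}}.
\]
Because $U, V$ are supported in a bounded region, the correction term has the same Combes--Thomas exponent as $\mathscr R_z^{\mathrm{ref}}$ (i.e.\ $\eta^{\mathrm{ref}}$), with a prefactor that decays exponentially in $|y(\ell)|+|y(k)|-r_{\ell k}$, which is precisely what gives both the reference exponent in (i) and the exponential convergence of the prefactor in (ii). This is the mechanism that transfers the reference gap to the defective system; your spectral decomposition $\sigma = \sigma_{\mathrm{bulk}}\cup\sigma_{\mathrm{def}}$ and projector-based splitting do not by themselves achieve this.

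On part (iii), your intuition about Taylor coefficients of $\mathfrak{g}^\beta$ at $\mu$ being of order $\beta^{j-1}$ (for the $j$-th derivative of the site energy) is correct in spirit, and your observation that the naive $\beta^j$ count from a $\beta^{-1}$-radius contour and $\beta$-sized resolvents overcounts is accurate. But your reduction is vague and does not explain \emph{why} the overcounting disappears. The paper's route is cleaner: it isolates $\mathscr C_0$ around $\{\mu\}$ and then \emph{eliminates} $\mathfrak{g}^\beta$ from the integrand in favour of an explicit polynomial in $z-\mu$ (for $j=1$, the integrand becomes $(z-\mu-\tfrac{2}{\beta}\log 2)\,\partial_{y(m)}\mathscr R_z$; for $j=2$, a quadratic with a $\beta/4$ coefficient), after which $\mathscr C_0$ can be taken $\beta$-independent and the $\beta^{j-1}$ scaling is read off from the polynomial's coefficients directly. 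Without this explicit rewriting, the bookkeeping of resolvent singularities versus contour length versus $\mathfrak{g}^\beta$ derivatives that you invoke would require a substantially more careful argument than you give.

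In short: your outline correctly identifies the need to separate bulk and defect contributions and the source of the $\beta^{j-1}$ factor, but it is missing the Woodbury-type reduction of $\mathscr R_z(y)$ to $\mathscr R_z^{\mathrm{ref}}$, which is what makes the reference exponent provable, and the $\beta$-accounting near $\mu$ is asserted rather than demonstrated.
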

\begin{remark}
By \textit{(iii)}, if $\mu \in \sigma(\Ham(y))$, then the first derivatives of the site energies are uniformly bounded in $\beta$ despite the fact that the zero Fermi-temperature site energies are not differentiable. In fact, the point-wise limit of the first derivatives as $\beta \to \infty$ exist. 
\end{remark}
We have the following analogous zero Fermi-temperature result:
\begin{theorem}[Improved Zero Fermi-Temperature Locality]\label{thm:improved_locality}\text{ }
\begin{enumerate}[label=(\roman*)]
		\item Fix $y \in \mathrm{Adm}(\Lambda)$ with $\mu \not\in \sigma(\Ham(y))$. Then, for $1\leq j \leq \ctHamregularity$, $\ell \in \Lambda, \bm{m} = (m_1,\dots,m_j) \in \Lambda^j$, and $1\leq i_1,\dots,i_j \leq d$, there exists a positive constant $C_{\infty j}(\ell,\bm{m})$ such that 
		\begin{gather}\label{eq:locality_zero}
		\left| \frac{\partial^j  G^\infty_{\ell}(y)}{\partial [y(m_1)]_{i_1} \dots \partial [y(m_j)]_{i_j}} \right| \leq C_{\infty j}(\ell,\bm{m}) \,e^{-\eta^\mathrm{ref}_{\infty j}\sum_{l=1}^j |y(\ell) - y(m_l)|} 
		\end{gather}
		where $\eta^\mathrm{ref}_{\infty j} \coloneqq c \min\{1,\ctgapHom\}$ and $c > 0$ depends only on $j, \ctTBprefactor{},\ctTBexponent{},\ctnoninterpen$ and $d$.
	
		\item $C_{\infty j}(\ell,\bm{m})$ is uniformly bounded independently of $\ell$ and $\bm{m}$. Let $C_{\infty j}^\mathrm{ref}\coloneqq C_{\infty j}$ when $\Lambda = \Lambda^\mathrm{ref}$ and $y = x$. If $\ell, m_1, \dots,m_j\in B_R(\xi)$ for some $R>0$, then $C_{\infty j}(\ell,\bm{m}) \to C_{\infty j}^\mathrm{ref}$ as $|\xi|\to\infty$, with an exponential rate. 
	\end{enumerate}
\end{theorem}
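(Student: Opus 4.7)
The plan is to represent $G^\infty_\ell(y)$ via the contour integral \cref{eq:site_energy_contour}, differentiate under the integral, and estimate using Combes--Thomas bounds, exactly as in the sketch of Proposition~\ref{prop:locality_insulators}. The subtlety compared to the bulk case is that $\sigma(\Ham(y))$ may contain defect eigenvalues inside the reference gap, so the contour cannot simply be taken at distance $\ctgapHom/2$ from $\sigma(\Ham(y))$ without shrinking the exponent.

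First I would construct the contour. By Lemma~\ref{lem:decomp-spec} applied with $\delta = c_0\ctgapHom$ for a small $c_0>0$, all but finitely many points of $\sigma(\Ham(y))$ lie inside $B_\delta(\sigma(\Ham^\mathrm{ref}))$, so only finitely many ``defect eigenvalues'' can appear in the reference gap. Starting from a simple closed curve encircling $\sigma(\Ham^\mathrm{ref})\cap(-\infty,\mu)$ at distance $\ctgapHom/4$ from $\sigma(\Ham^\mathrm{ref})$, I would deform it by small circular detours so that the resulting contour $\mathscr C_\infty$ encloses precisely the defect eigenvalues below $\mu$ and excludes those above $\mu$. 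Since $\mu\notin\sigma(\Ham(y))$ the detours can be chosen of strictly positive radius, and $\mathscr C_\infty$ is stable under small perturbations of $y$, which justifies differentiating under the integral. Applying $\partial R = -R(\partial\Ham)R$ with the Leibniz rule to \cref{eq:site_energy_contour} produces, up to sign, sums of terms of the form
\begin{equation*}
-\frac{1}{2\pi i}\sum_a\oint_{\mathscr C_\infty}\mathfrak g^\infty(z;\mu)\,\bigl[R(z)\,\partial^{k_1}\Ham\,R(z)\,\partial^{k_2}\Ham\cdots R(z)\bigr]_{\ell\ell,aa}\,{\rm d}z,
\end{equation*}
where the $\partial^{k_i}\Ham(y)$ factors distribute the $j$ differentiations among the sites $m_1,\dots,m_j$ and are exponentially localized in $|y(\cdot)-y(m_l)|$ by \cref{a:TB}.

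On the portion of $\mathscr C_\infty$ at distance $\gtrsim\ctgapHom$ from $\sigma(\Ham(y))$ I would apply Combes--Thomas (Lemma~\ref{lem:Thomas-Combes}) to get $|[R(z)]_{\ell k,ab}|\leq C\ctgapHom^{-1}\,e^{-c\ctgapHom|y(\ell)-y(k)|}$; combining this with the localization of $\partial^{k_i}\Ham$ and the convolution-of-exponentials argument outlined in \cref{proofs:existing_locality} yields \cref{eq:locality_zero} with exponent $c\min\{1,\ctgapHom\}$. On the small detours around defect eigenvalues, $\|R(z)\|$ may blow up, and the key trick is the second resolvent identity
\begin{equation*}
R(z) = R^\mathrm{ref}(z) + R(z)\bigl(\Ham^\mathrm{ref}-\Ham(y)\bigr)R^\mathrm{ref}(z),
\end{equation*}
which trades $R(z)$ for $R^\mathrm{ref}(z)$ (enjoying the $\ctgapHom$-rate Combes--Thomas bound since these detours lie in the reference gap) at the cost of a ``defect perturbation'' $V := \Ham^\mathrm{ref}-\Ham(y)$ whose entries $[V]_{pq}^{ab}$ are exponentially localized near the defect core.

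For part (ii), uniform boundedness of $C_{\infty j}(\ell,\bm m)$ is immediate from the estimates above. For the exponential convergence $C_{\infty j}(\ell,\bm m)\to C^\mathrm{ref}_{\infty j}$ when $\ell,m_1,\dots,m_j\in B_R(\xi)$ and $|\xi|\to\infty$, I would subtract the analogous reference contour integral and replace the factors of $R(z)$ by $R^\mathrm{ref}(z)$ one at a time using $R(z)-R^\mathrm{ref}(z) = R(z)VR^\mathrm{ref}(z)$; each resulting correction term contains at least one factor of $V$, forcing a pair of intermediate indices through the defect core, and the Combes--Thomas decay of the surrounding resolvents then produces a factor $e^{-c\ctgapHom|\xi|}$. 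The main obstacle throughout is precisely the contribution from the small loops around defect eigenvalues: a naive bound on $\|R(z)\|$ is useless there, and only the reroute through $R^\mathrm{ref}$ with localized $V$ preserves the spectral-gap-determined exponent $\sim\ctgapHom$ in the final estimate.
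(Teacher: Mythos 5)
Your route differs from the paper's in two essential ways, and both introduce genuine gaps. The paper does not deform the contour around individual defect eigenvalues at all: it works on a single contour $\mathscr C_\infty$ that keeps distance $\tfrac12\ctgap(y)$ from $\sigma(\Ham(y))$ and $\tfrac12\ctgapHom$ from $\sigma(\Ham^\mathrm{ref})$, and then proves an \emph{improved Combes--Thomas bound} \cref{eq:improved_Combes-Thomas} valid uniformly on that contour via Lemma~\ref{lem:decomp_ham} and the Woodbury identity. The decisive point is that Woodbury puts the reference resolvent on \emph{both} sides of a localized, finite-dimensional middle factor: $(\widetilde\Ham^\mathrm{ref}+P_2-z)^{-1}=\mathscr R^\mathrm{ref}_z-\mathscr R^\mathrm{ref}_z\,U(I+V\mathscr R^\mathrm{ref}_z U)^{-1}V\,\mathscr R^\mathrm{ref}_z$, so the exponential decay $e^{-\ctCT(\mathfrak d^\mathrm{ref})r_{\ell k}}$ is inherited in both the $\ell$ and $k$ directions, while the middle factor contributes only a bounded (configuration- and $\mathfrak d$-dependent) prefactor $c_{\mathfrak{dm}}$. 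Your single-sided identity $R(z)=R^\mathrm{ref}(z)+R(z)VR^\mathrm{ref}(z)$ cannot do this: the correction $[RVR^\mathrm{ref}]_{\ell k}$ has $R(z)$ on the $\ell$ side, and whichever bound you pick for $[R(z)]_{\ell p}$ fails. Using the Combes--Thomas bound gives the \emph{small} exponent $\ctCT(\mathfrak d)\sim\ctgap(y)$ in the $\ell$-variable (exactly what the theorem must avoid); replacing it by the operator-norm bound $\|R\|\leq 1/\mathfrak d$ destroys all decay in $\ell$. To salvage this approach one must sandwich $R$ between two localized factors, e.g.\ $R=R^\mathrm{ref}+R^\mathrm{ref}VR^\mathrm{ref}+R^\mathrm{ref}VRVR^\mathrm{ref}$, so that the outer $R^\mathrm{ref}$'s give the $\ctgapHom$-rate in both directions and the inner $R$ is only a bounded prefactor — but this is precisely the structure that Woodbury delivers directly.

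The second gap is your claim that $V:=\Ham^\mathrm{ref}-\Ham(y)$ is ``exponentially localized near the defect core.'' It is not: for $y\in\mathrm{Adm}(\Lambda)$ with $u=y-x\in\W(\Lambda)$, the control on $\|Du\|_{\ell^2_\ctGamma}$ gives only that the entries of $V$ become small in an $\ell^2$-sense far from the defect; they neither vanish outside a ball nor decay pointwise exponentially. This is exactly the content of Lemma~\ref{lem:decomp_ham}: the best one can say is $\widetilde\Ham(y)-\widetilde\Ham^\mathrm{ref}=P_1(y)+P_2(y)$ with $P_2$ finite-rank and \emph{compactly} supported, and $\|P_1\|_{\mathrm F}\leq\delta$ arbitrarily small. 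The compactly supported $P_2$ is what feeds into Woodbury; the $\delta$-small $P_1$ is absorbed by replacing $\ctCT(\mathfrak d^\mathrm{ref})$ with $\ctCT(\mathfrak d^\mathrm{ref}-\delta)$. Without this decomposition, the intermediate sum over $(p,q)$ in $[RVR^\mathrm{ref}]_{\ell k}$ is not confined to a bounded region and the argument does not close. Your sketch for part~\textit{(ii)} has the right intuition — subtract the reference integral and isolate a factor of $V$ — but both of the gaps above must be repaired before the stated exponential rate with exponent $\sim\ctgapHom$ can be obtained.
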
 
\begin{remark}
	In the case $j=1$, Theorem~\ref{thm:improved_locality_beta} part \textit{(ii)} takes the form\[ |C_{\beta 1}(\ell,{m}) - C_{\beta 1}^\mathrm{ref}| \lesssim e^{-\eta_{1}^\mathrm{ref}(|y(\ell)| + |y(m)| - |y(\ell) - y(m)|)}.\]Similarly for Theorem~\ref{thm:improved_locality} with $\beta = \infty$ and the exponent $\eta_{\infty1}^\mathrm{ref}$. For higher derivatives, the relationship between $\ell$ and $\bm{m}$ is more complicated.
\end{remark}


\newcommand{\chc}[1]{{\color{blue} \it \small [HC: #1]}}
\newcommand{\hc}[1]{{\color{blue} #1}}

\section{Numerical Tests}
\label{sec:numerics}
\setcounter{equation}{0}

\def\fc{f_{\rm c}}
\def\Rc{R_{\rm c}}

In this section, we present numerical simulations to support our analytical results. We use a practical tight binding model, the NRL model \cite{cohen94,mehl96,papaconstantopoulos15}, to test the force-locality in bulk carbon and silicon, both with and without an interstitial defect. Since we are unaware of established codes that compute site energies and their derivatives, we implemented these models in the {\tt Julia} package {\tt SKTB.jl} \cite{gitSKTB}.

\subsection{The NRL tight binding model}
\label{sec:NRL}
The NRL tight binding model, developed by Cohen, Mehl, and Papaconstantopoulos \cite{cohen94},
is slightly more general than our formulation in 
\S~\ref{sec:results}. It is non-orthogonal, which means that the energy  levels are now determined by the generalised eigenvalue problem
\begin{eqnarray}\label{NRL:diag_Ham}
\Ham(y)\psi_s = \lambda_s\mathcal{M}(y)\psi_s 
\qquad\text{where}\quad \psi_s^{\rm T}\mathcal{M}(y)\psi_s = 1,
\end{eqnarray}
which has an additional overlap matrix $\mathcal{M}(y)$ compared with \eqref{eq:diag_Ham}.
Furthermore, the NRL hamiltonian and overlap matrices are construct both from hopping elements as in \eqref{a:two-centre} 
as well as on-site matrix elements as a function of the local environment. For carbon and silicon they are parameterised as follows (for other elements the parameterisation is similar): 

To define the on-site terms, each atom $\ell$ is assigned a pseudo-atomic density 
\begin{eqnarray*}
\rho_{\ell} := \sum_{k}e^{-\lambda^2 r_{\ell k}}  \fc(r_{\ell k}),
\end{eqnarray*}
where the sum is over all of the atoms $k$ within the cutoff $\Rc$ of atom $\ell$,
$\lambda$ is a fitting parameter, $\fc$ is a cutoff function
\begin{eqnarray*}
    \fc(r) = \frac{\theta(\Rc-r)}{1+\exp\big((r-\Rc)/l_c + L_c\big)} ,
\end{eqnarray*}
with $\theta$ the step function, and the parameters $l_c=0.5$, $L_c=5.0$ for most elements.
Although, in principle, the on-site terms should have off-diagonal elements, the NRL model follows traditional practice and only include the diagonal terms.
Then, the on-site terms for each atomic site $\ell$ are given by
\begin{eqnarray}\label{NRLonsite}
\Ham(y)_{\ell\ell}^{\upsilon\upsilon}
:= a_{\upsilon} + b_{\upsilon}\rho_{\ell}^{2/3} + c_{\upsilon}\rho_{\ell}^{4/3} + d_{\upsilon}\rho_{\ell}^{2},
\end{eqnarray}
where $\upsilon=s,p$, or $d$ is the index for angular-momentum-dependent atomic orbitals and $(a_\upsilon)$, $(b_\upsilon)$, $(c_\upsilon)$, $(d_\upsilon)$ are fitting parameters. 
The on-site elements for the overlap matrix are simply taken to be the identity matrix.

The off-diagonal NRL Hamiltonian entries follow the formalism of 
Slater and Koster who showed in \cite{SlaterKoster1954} that all two-centre (spd) hopping integrals can be constructed from ten independent ``bond integral'' parameters $h_{\upsilon\upsilon'\mu}$, where
\begin{eqnarray*}
(\upsilon\upsilon'\mu) = ss\sigma,~sp\sigma,~pp\sigma,~pp\pi,
~sd\sigma,~pd\sigma,~pd\pi,~dd\sigma,~dd\pi,~{\rm and}~dd\delta.
\end{eqnarray*}
The NRL bond integrals are given by
\begin{eqnarray}\label{NRLhopping}
h_{\upsilon\upsilon'\mu}(r) 
:= \big(e_{\upsilon\upsilon'\mu} + f_{\upsilon\upsilon'\mu}r + g_{\upsilon\upsilon'\mu} r^2 \big) e^{-h_{\upsilon\upsilon'\mu}r} \fc(r)
\end{eqnarray}
with fitting parameters $e_{\upsilon\upsilon'\mu}, f_{\upsilon\upsilon'\mu}, 
g_{\upsilon\upsilon'\mu}, h_{\upsilon\upsilon'\mu}$. The matrix elements
$\Ham(y)_{\ell k}^{\upsilon\upsilon'}$ are constructed from
the $h_{\upsilon\upsilon'\mu}(r)$ by a standard procedure~\cite{SlaterKoster1954}.

The analogous bond integral parameterisation of the overlap matrix 
is given by  
\begin{eqnarray}\label{NRLhopping-M}
m_{\upsilon\upsilon'\mu}(r) 
:= \big(\delta_{\upsilon\upsilon'} + p_{\upsilon\upsilon'\mu} r + q_{\upsilon\upsilon'\mu} r^2 + r_{\upsilon\upsilon'\mu} r^3 \big) e^{-s_{\upsilon\upsilon'\mu} r} \fc(r)
\end{eqnarray}
with the fitting parameters $(p_{\upsilon\upsilon'\mu}), (q_{\upsilon\upsilon'\mu}), (r_{\upsilon\upsilon'\mu}), (s_{\upsilon\upsilon'\mu})$ and $\delta_{\upsilon\upsilon'}$ the Kronecker delta function. 

The fitting parameters in the foregoing expressions 
are determined by fitting to some high-symmetry first-principle calculations: In the NRL method, a database of eigenvalues (band structures) and total energies were constructed for several crystal structures at  several volumes. Then the parameters are chosen such that the eigenvalues and energies in the database are reproduced.
For practical simulations, the parameters for different elements can be found in \cite{papaconstantopoulos15}.

\subsection{Test systems}
\label{sec:carbon}
Our two test systems are diamond cubic bulk carbon and bulk silicon, which provide ideal test cases of our theory due to their clearly defined band gaps. Since carbon has a much larger band gap than silicon we will also be able to test how this affects locality of interaction.

For both elements, we simulate a supercell model consisting of $5 \times 5 \times 5$ diamond cubic unit cells, containing 1000 atoms in total. First, we use the NRL tight binding model to relax the cells to their ground states (this only  rescales the cells but does not change  their shape). 
We then compute the band structures which are, respectively, shown in Figures~\ref{fig:C_band} and~\ref{fig:Si_band}. 
We verified our implementation by comparing the band structure for the silicon model against that published in \cite{papaconstantopoulos97}.
The Fermi energy is  chosen to be the mid point between the highest occupied state and the lowest unoccupied state of the homogeneous 1000-atom system. For both systems we observe clearly defined band gaps around the Fermi energy, approximately 0.98 eV for Si and 3.83 eV for C.

\begin{figure}[htb]
	\centering 
	\includegraphics[width=15cm]{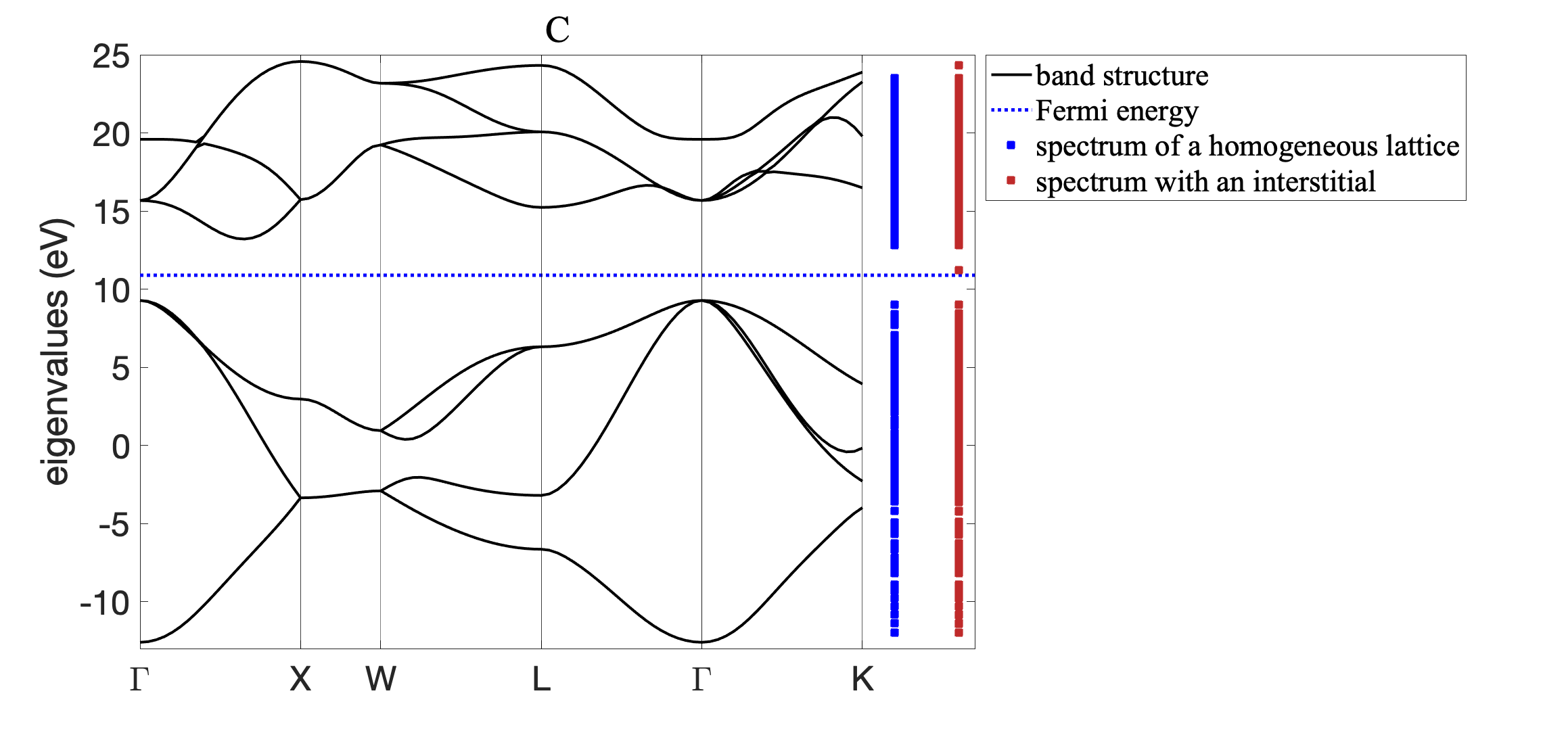}
	\caption{Band structure of C; spectrum of the homogeneous lattice (supercell approximation) and defective system.}
	\label{fig:C_band}
\bigskip
	\includegraphics[width=15cm]{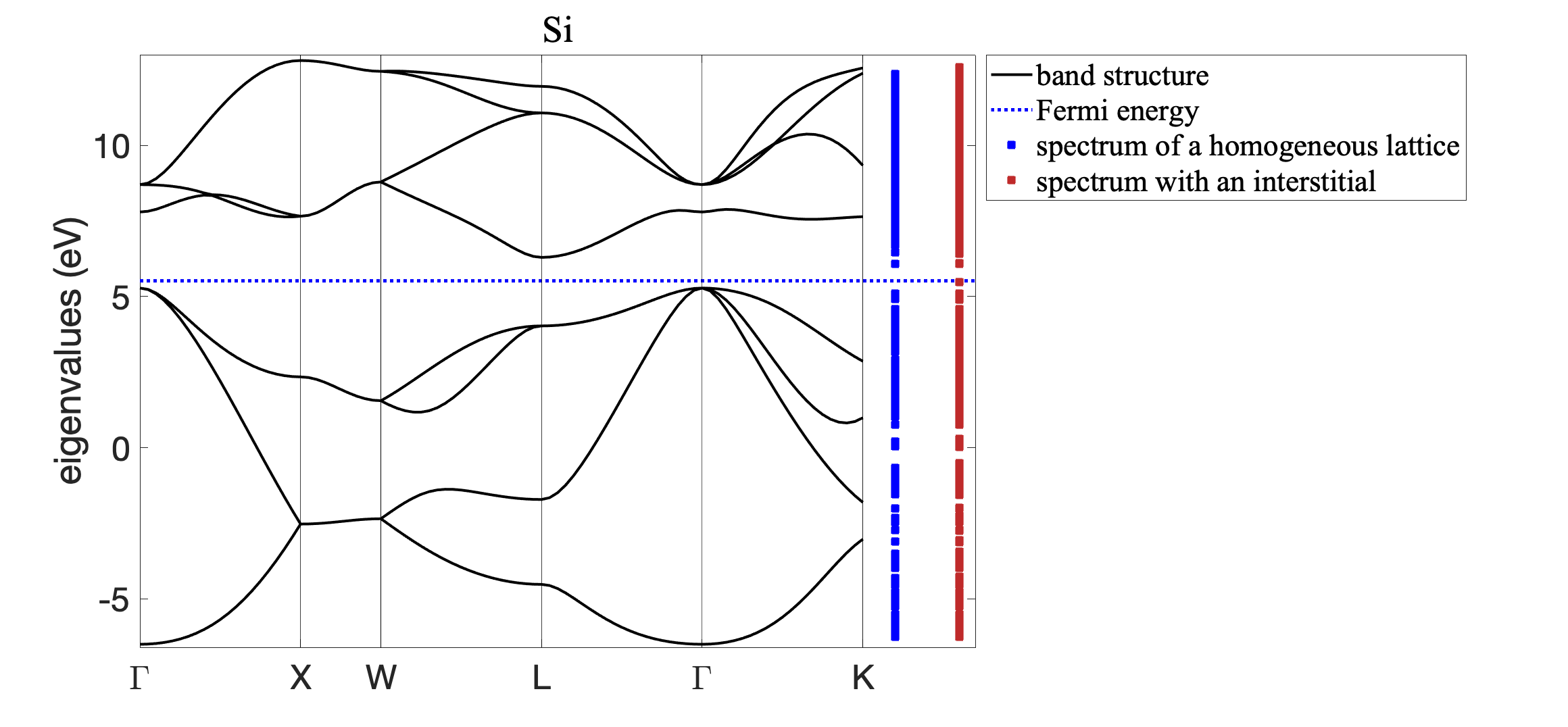}
	\caption{Band structure of Si, spectrum of the homogeneous lattice (supercell approximation) and defective system.}
	\label{fig:Si_band}
\end{figure} 

Next, we create a self-interstitial near the origin, and observe (in Figures \ref{fig:C_band} and~\ref{fig:Si_band}) the expected pollution of the band gap in the defected system. By tweaking the position of the interstitial we are able to create configurations where an eigenvalue is arbitrarily close to the Fermi-energy in order to provide a challenging situation to confirm the result of Theorem \ref{thm:improved_locality}.

\subsection{Site energy locality}
To test the locality of interatomic interaction we evaluate all first and second site energy derivatives $E_{\ell, j} = \partial_{R_j} E_\ell$ and $E_{\ell, ij} = \partial_{R_i} \partial_{R_j} E_\ell$ in both the homogeneous and defective system, and plot the data points 
\[
    \big( r_{\ell j}, |E_{\ell, j}| \big)  
    \qquad \text{and} \qquad 
    \big( r_{\ell i} + r_{\ell j}, |E_{\ell, ij}| \big)
\]
in Figures~\ref{fig:C_Es} and~\ref{fig:Si_Es}. For the homogeneous systems all sites are equivalent, hence we only plot the site energy  derivatives for a single site. For the defective systems we plot the data points for the interstitial site itself (``$|y_\ell|$ small'') as well as for the site in the computational cell that has the largest distance to the interstitial atom (``$|y_\ell|$ large'').

We clearly observe the exponential decay of interaction strength as predicted in Theorem \ref{thm:improved_locality}. 
Moreover, we also observe that for sites $\ell$ far from the defect the site derivative decay perfectly  matches that of the bulk system. 

Two additional observations were unexpected for us: (1) the decay of site derivatives for ``near-defect sites'' does not exhibit the increased prefactor that we predicted; however we do see this increase in the second derivatives. (2) the decay of interaction in the silicon system is nearly identical (after rescaling by the lattice constants) to the carbon system even though silicon has a much smaller band gap. 

These observations suggests that there are further effects leading to improved locality of interaction that our analysis does not fully capture. While a possible explanation is that the locality of the bond integral functions dominates the locality of the resolvents, this does not explain the excellent locality of the Si systems which have a fairly small band gap.

\begin{figure}[htb]
	\centering 
	\begin{subfigure}{7.7cm}
	\includegraphics[width=7.7cm]{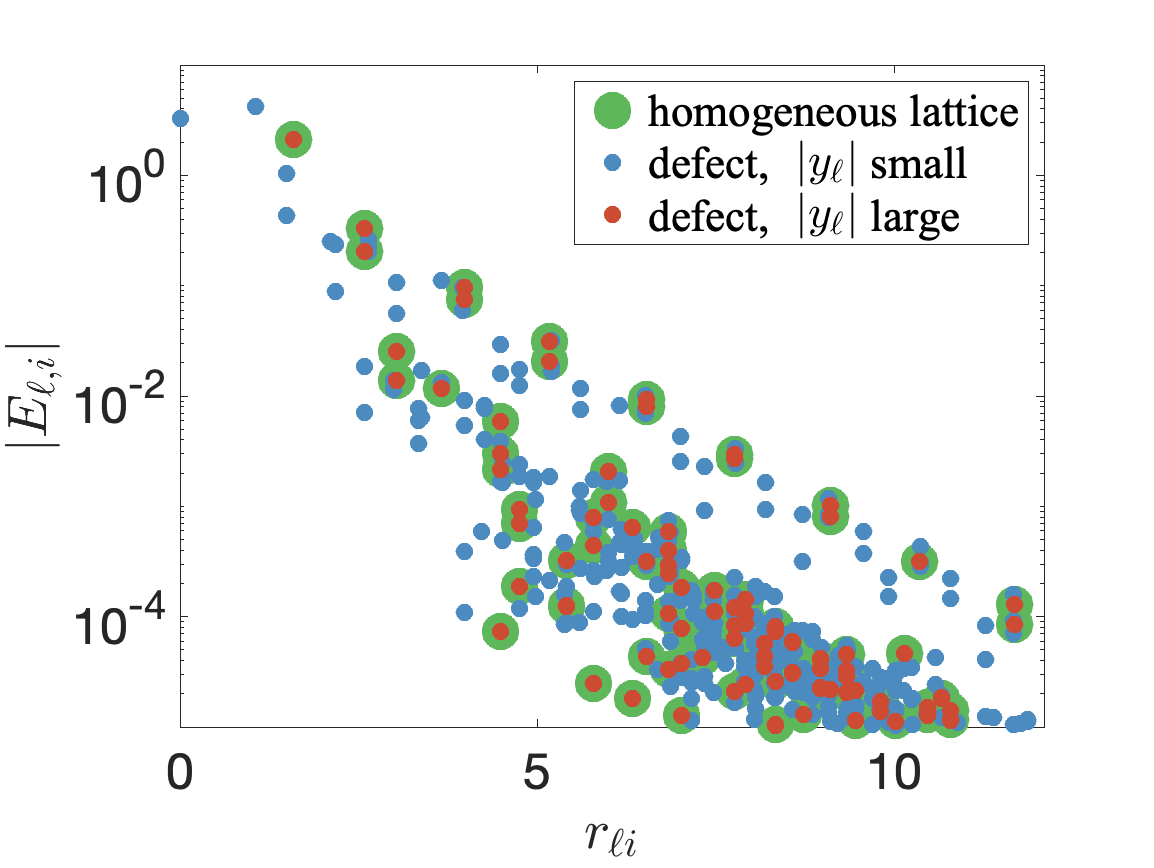}
	\caption{Decay of site energy derivatives.}
	\end{subfigure}
	\begin{subfigure}{7.7cm}
	\includegraphics[width=7.7cm]{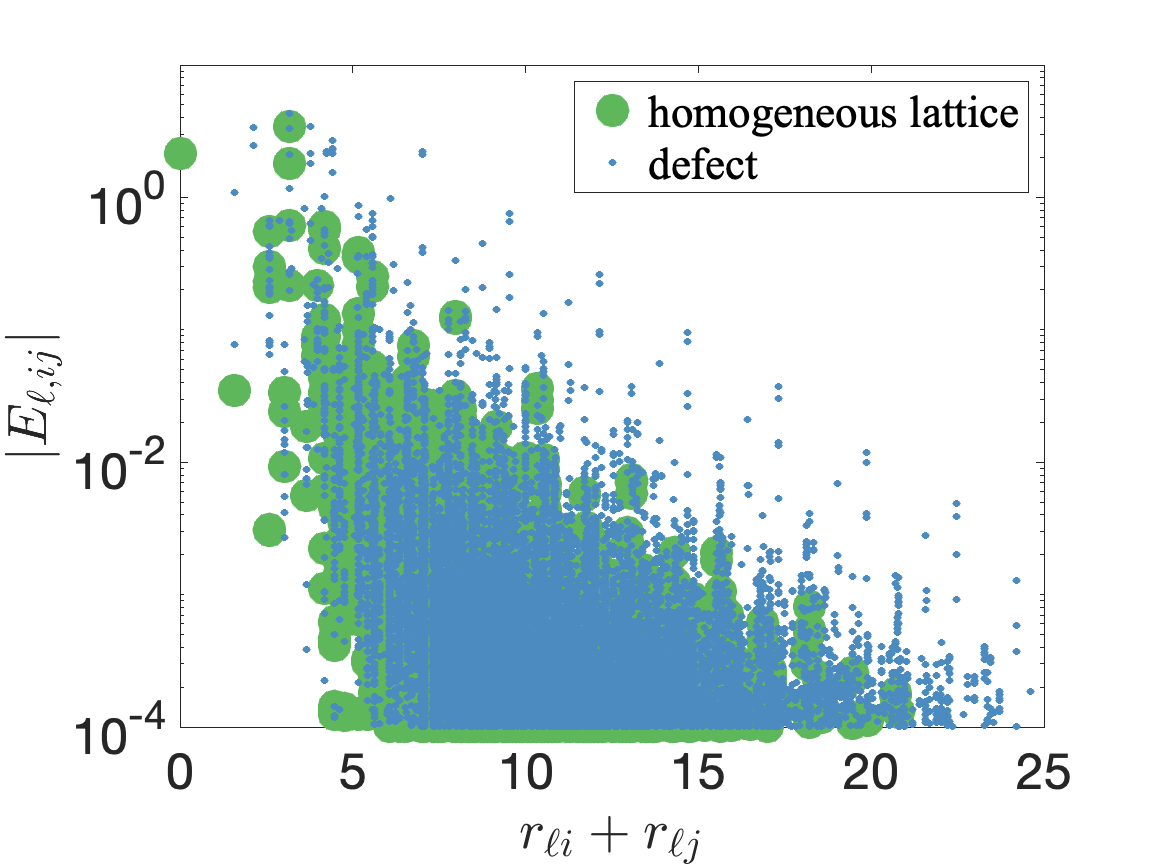}
	\caption{Decay of site energy hessians.}
	\end{subfigure}
	\caption{Carbon: Locality of site energies in homogeneous lattice and defective system.}
	\label{fig:C_Es}
\end{figure} 
\begin{figure}[htb]
	\centering 
	\begin{subfigure}{7.7cm}
	\includegraphics[width=7.7cm]{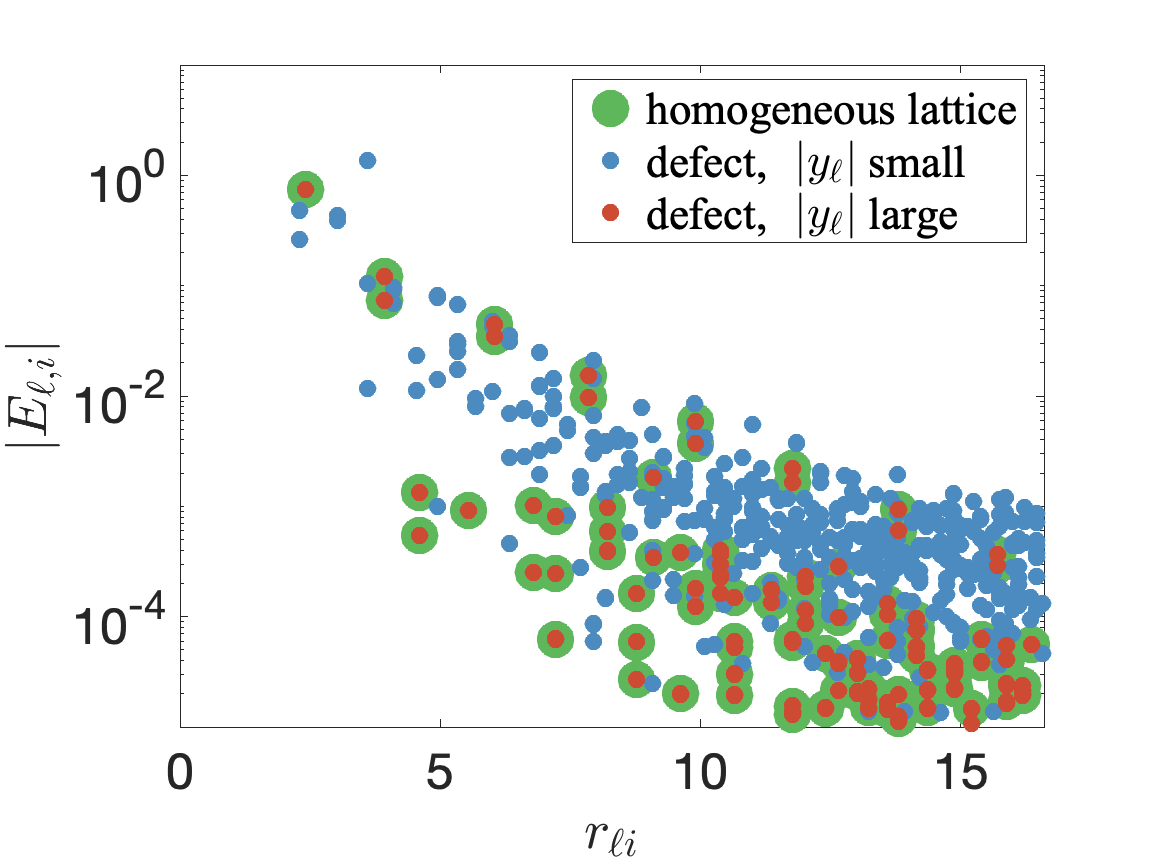}
	\caption{Decay of site energy derivatives.}
	\end{subfigure}
	\begin{subfigure}{7.7cm}
	\includegraphics[width=7.7cm]{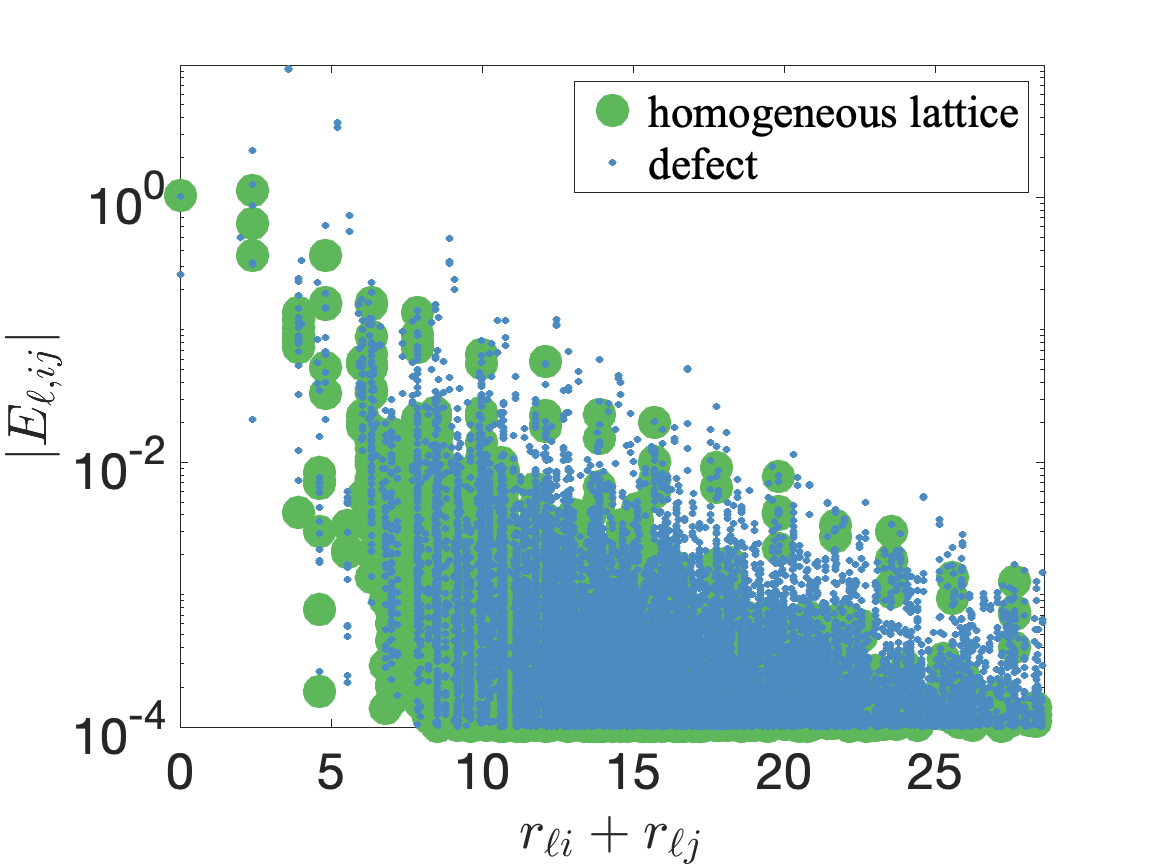}
	\caption{Decay of site energy hessians.}
	\end{subfigure}
	\caption{Silicon: Locality of site energies in homogeneous lattice and defective system.}
	\label{fig:Si_Es}
\end{figure} 

\newpage
\subsection{Force locality}
Finally, we compare the decay of site energy derivatives to  the decay of force derivatives. The reason for this additional test is that our definition of a site-energy is somewhat arbitrary. Indeed, there are infinitely many possible decompositions of total energy  into site energies and each choice may lead to a different rate of decay of the interaction. Forces, on the other hand, are uniquely defined. Their locality is therefore ``canonical'' and provides a limit for the locality of site energies.

In Figure~\ref{fig:CSi_frc}, we compare the  decay of site energy derivatives and force derivatives.
We evaluate the force derivatives $f_{\ell, j} = \partial_{R_j} f_\ell$, where the force is defined by the (negative) derivative of the total energy $f_\ell = -\partial_{R_\ell} E$, and plot the data points 
\[
    \big( r_{\ell j}, |f_{\ell, j}| \big)  
    \qquad \text{and} \qquad 
    \big( r_{\ell i} + r_{\ell j}, |E_{\ell, ij}| \big)
\]
in Figures~\ref{fig:CSi_frc}.  We observe that the site energy locality matches force locality very closely, which suggests that our choice of site energies leads to near-optimal locality of interaction.

\begin{figure}[htb]
	\centering 
	\begin{subfigure}{7.7cm}
	\includegraphics[width=7.7cm]{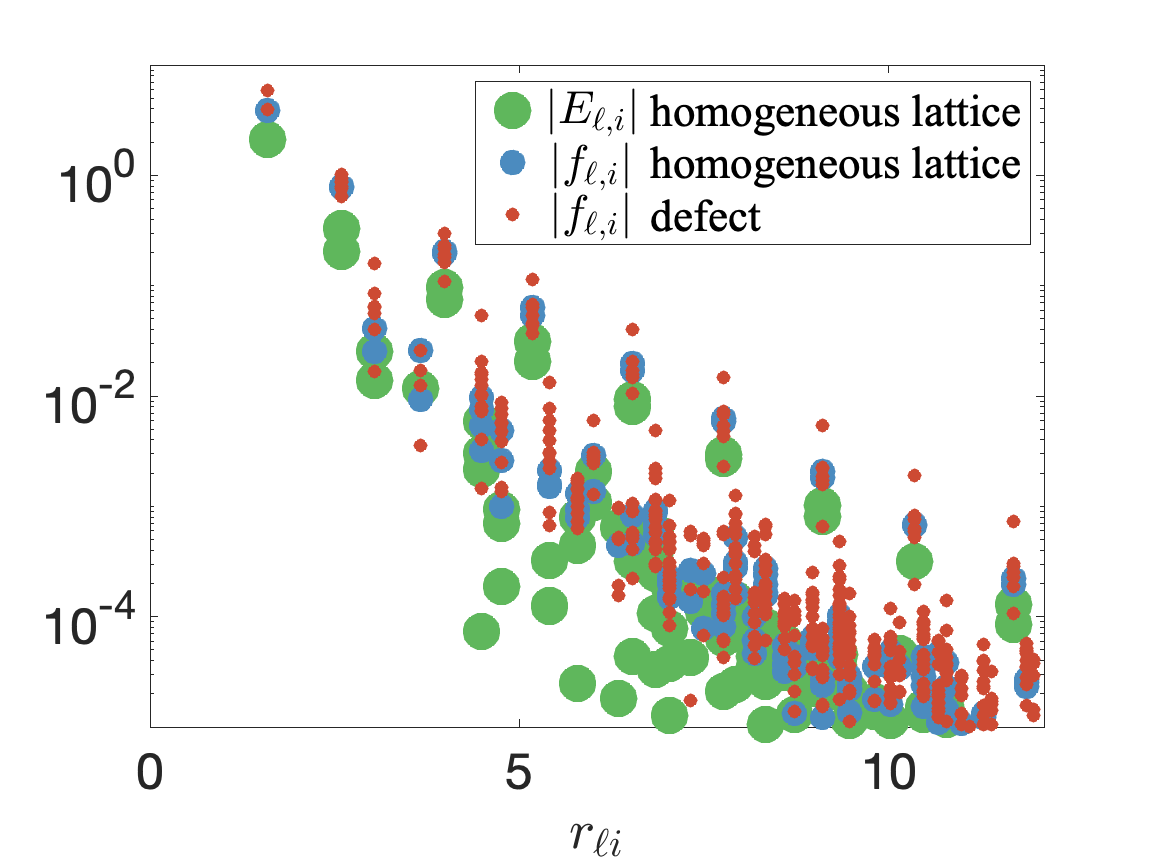}
	\caption{Carbon.}
	\end{subfigure}
	\begin{subfigure}{7.7cm}
	\includegraphics[width=7.7cm]{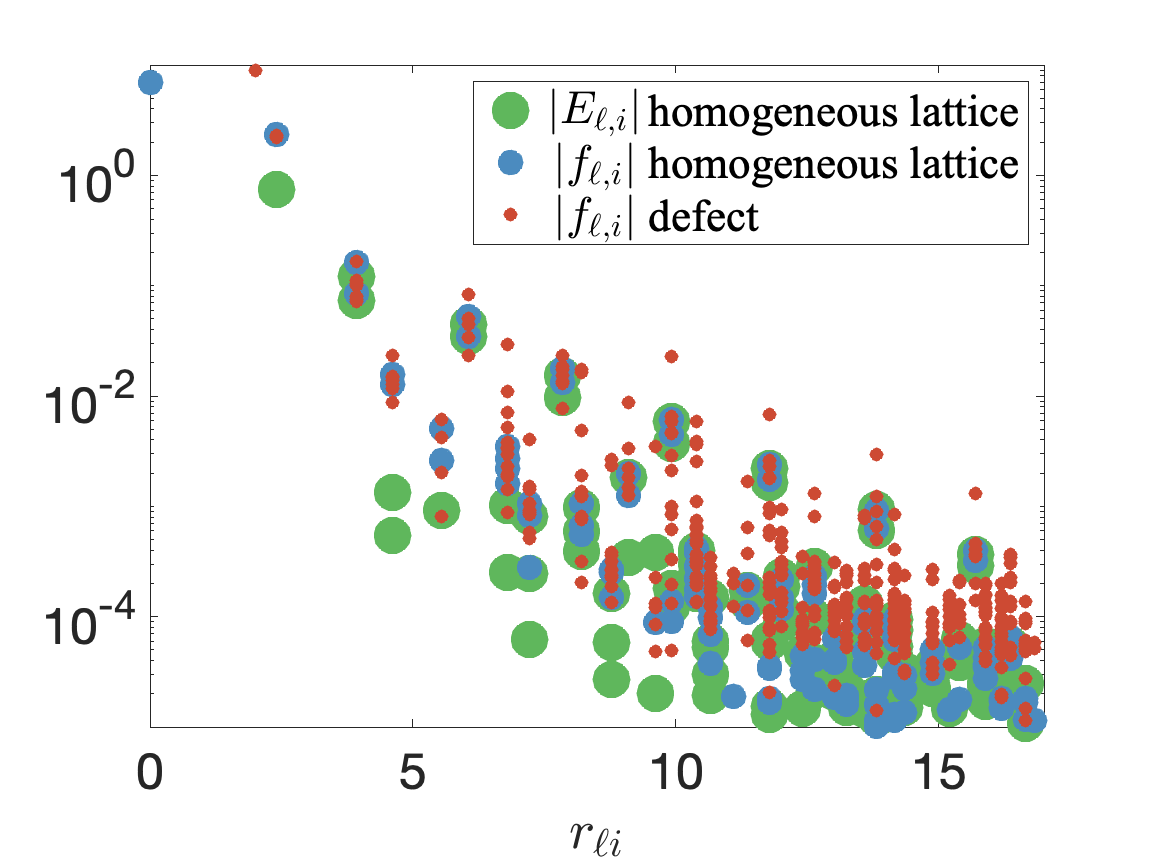}
	\caption{Silicon.}
	\end{subfigure}
	\caption{The decay of force derivatives in homogeneous lattice and defective system.}
	\label{fig:CSi_frc}
\end{figure} 





\section{Conclusions}\label{sec:conclusions}
We have extended the results of \cite{ChenOrtner16} to the zero Fermi-temperature case under the assumption that the chemical potential is not an eigenvalue of the Hamiltonian. We have described a site energy decomposition for a zero Fermi-temperature linear tight binding model and shown that the site contributions are exponentially localised. Most importantly, we have shown that the exponents in these estimates are independent of the discrete spectrum inside the band gap caused by point defects, and even the pre-factors converge to the pre-factors that would result from using the homogeneous site energy in the estimates, as the distance of a site to the defect increases. Our numerical results in \S~\ref{sec:numerics} strongly support our analysis, but also point to possible further extensions in particular in the limit of small band gaps where our results may not yet be sharp.

The same analysis was also applied to the Helmholtz free energy in the canonical ensemble under the assumption that the Fermi level is fixed. In particular, this improves the locality results of \cite{ChenOrtner16} for insulators. Moreover, the analysis carries over to other quantities of interest as in \cite{ChenLuOrtner18}. 

The results of this paper allow us to formulate zero Fermi-temperature lattice relaxation as a variational problem on the energy space of displacements. In particular, for $y\in\mathrm{Adm}(\Lambda)$, we can define
\begin{align*} \mathcal G(y) \coloneqq \sum_{\ell }\left( G_{\ell }(y) - G_{\ell}(x) \right). \end{align*}
This {\em grand potential difference functional} is well defined if $\mu \not \in \sigma(\Ham(y))$. This can be shown by applying results of \cite{ChenNazarOrtner19} together with the locality estimates of this paper. We can then consider the meaningful problem: for fixed $\delta > 0$,
\begin{align}\label{eq:minproblem} 
\overline{y}\in\arg\min\{ \mathcal G(y) \colon y \in \mathrm{Adm}(\Lambda), \, (\mu - \delta, \mu + \delta )\cap \sigma(\Ham(y)) = \emptyset\} 
\end{align}
where ``$\arg\min$'' denotes the set of local minimisers. The locality results presented in this paper allow us to show that the site energies and their derivatives converge exponentially quickly in the zero Fermi-temperature and thermodynamic limits. This observation allows us to prove that \cref{eq:minproblem} is the limiting model of analogous finite Fermi-temperature and finite domain size models. Rigorous results are presented in a forthcoming paper \cite{inprep}.

\section{Proofs of the Main Results}\label{sec:proofs}
\subsection{Definition of the Site Energies}
\label{proofs:definition_site_energies}

Before we begin the proof of the locality estimates, we need to show that the definition of the finite Fermi-temperature site energy is valid. That is, we need $\mathfrak{g}^\beta(\,\cdot\,;\mu)$ to extend to a holomorphic function on some open neighbourhood of the spectrum and we need to consider appropriate contours $\mathscr C_\beta$, $\mathscr C_\infty$. 
\begin{lemma}[Analytic Continuation of $\mathfrak{g}^\beta(z;\mu)$]
	\label{lem:analytic-cont}
	Fix $\beta \in(0,\infty)$. Then, $z\mapsto\mathfrak{g}^\beta(z;\mu)$ can be analytically continued to the set $\mathbb{C} \setminus \{ \mu + i r  \colon r \in \mathbb R, |r| \geq \pi \beta^{-1} \}$.
\end{lemma}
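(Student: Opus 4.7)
The plan is to reduce the continuation problem to finding a holomorphic branch of a single logarithm. Starting from the definition, an elementary manipulation valid on the real axis gives
\[
\mathfrak{g}^\beta(z;\mu) \;=\; \frac{2}{\beta}\log\!\left(\frac{e^{\beta(z-\mu)}}{1+e^{\beta(z-\mu)}}\right) \;=\; 2(z-\mu) - \frac{2}{\beta}\log\!\left(1+e^{\beta(z-\mu)}\right),
\]
so the task is to construct a holomorphic branch of $L(z) \coloneqq \log\bigl(1 + e^{\beta(z-\mu)}\bigr)$ on $D \coloneqq \mathbb{C} \setminus \{\mu + ir : |r| \geq \pi\beta^{-1}\}$; the linear term $2(z-\mu)$ is already entire.

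Next I would locate the zeros of the entire function $h(z) \coloneqq 1 + e^{\beta(z-\mu)}$. Setting $e^{\beta(z-\mu)} = -1$ gives the isolated points $z_k = \mu + i\pi\beta^{-1}(2k+1)$ for $k \in \mathbb{Z}$, and since $|(2k+1)|\pi\beta^{-1} \geq \pi\beta^{-1}$ for every $k$, each $z_k$ lies in the excluded vertical rays. In particular, $h$ is nonvanishing on $D$. The main (and really only) step is to observe that $D$ is simply connected: its complement in $\mathbb{C}$ consists of two disjoint closed vertical rays, each running to infinity, and no compact loop in $\mathbb{R}^2$ can have nonzero winding number about a point on an unbounded ray, so every closed curve in $D$ is nullhomotopic. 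The standard existence theorem for holomorphic logarithms on a simply connected domain (applied to the nonvanishing holomorphic $h$ on $D$) then produces a unique holomorphic branch $L$ of $\log h$ on $D$ normalised by $L(\mu) = \log 2$.

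With $L$ in hand, define the extension
\[
\mathfrak{g}^\beta(z;\mu) \;\coloneqq\; 2(z-\mu) - \tfrac{2}{\beta} L(z), \qquad z \in D,
\]
which is manifestly holomorphic on $D$. To confirm it agrees with the original formula on the real axis, note that both sides equal $-\tfrac{2}{\beta}\log 2$ at $z=\mu$, both are real-analytic on $\mathbb{R}$, and both satisfy the same algebraic identity there; alternatively one invokes the identity theorem. The only nontrivial ingredient in the whole argument is the elementary topological observation that $D$ is simply connected; everything else is a bookkeeping exercise in complex analysis.
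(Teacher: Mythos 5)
Your proof is correct, but it takes a genuinely different (and arguably cleaner) route than the paper's. The paper directly continues $\frac{2}{\beta}\log(1 - f_\beta(z-\mu))$ by introducing an explicit family of branches $\mathrm{Arg}_n$, working out on which vertical strips $A_\beta^n$ each branch is holomorphic, checking consistency on the overlaps $A_\beta^n \cap A_\beta^{n+1}$, and then patching. You instead first split off the entire part via the identity $\mathfrak{g}^\beta(z;\mu) = 2(z-\mu) - \frac{2}{\beta}\log\bigl(1 + e^{\beta(z-\mu)}\bigr)$, which reduces the whole problem to producing a holomorphic logarithm of the single nonvanishing function $h(z) = 1 + e^{\beta(z-\mu)}$ on $D = \mathbb{C}\setminus\{\mu + ir : |r|\geq\pi\beta^{-1}\}$; since the zeros of $h$ sit precisely on the excluded rays and $D$ is simply connected (the winding-number argument you sketch is fine, or equivalently the complement of $D$ in the Riemann sphere is connected because both rays meet $\infty$), the standard existence theorem for holomorphic logarithms finishes the job. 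This is shorter and avoids all branch-cut bookkeeping. The one thing the paper's explicit construction buys is that it is reused quantitatively in the proof of the subsequent boundedness lemma (Lemma~\ref{lem:boundedness-g}), where the number of branches needed to cover a bounded strip controls $|\mathrm{Arg}_n(1-f_\beta)|$ and hence $\Im{\mathfrak{g}^\beta}$; your abstract existence argument would need to be supplemented (e.g.\ by the observation that $|\Im{L(z)}|$ is bounded by $\pi$ times the number of zeros of $h$ whose imaginary part lies between $0$ and $\Im{z}$) to recover that estimate.
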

\begin{proof}
	Extending $\mathfrak{g}^\beta(\,\cdot\,;\mu)$ into the complex plane amounts to choosing a branch cut of the complex logarithm. For each $n\in\mathbb N$, we define,
	\begin{gather}\label{eq:analytic_cont}\begin{split} \mathfrak{g}_n^\beta(z;\mu) \coloneqq \frac{2}{\beta}\big[ \log\left|1-f_\beta(z-\mu)\right| + i\textrm{Arg}_n( 1 - f_\beta(z-\mu)) \big]\quad \text{where}\\
	\textrm{Arg}_n(z) = \textrm{Arg}(z) \quad (\text{mod }2\pi)\quad\text{ and } \quad \textrm{Arg}_n(z) \in ((n-1)\pi, (n+1)\pi].
	\end{split}\end{gather}
	Choosing the principal branch of the complex logarithm, we get $\mathfrak{g}^\beta_0$ which agrees with $\mathfrak{g}^\beta$ on the real axis.
	
	To simplify notation, and without loss of generality, we suppose $\mu = 0$. 
	
	It is clear that the Fermi-Dirac distribution has isolated singularities at $i(2k+1)\pi\beta^{-1}$ for all $k \in \mathbb Z$ (that is, when $e^{\beta z} = -1$) and is holomorphic away from these singularities. Therefore, $\mathfrak{g}_n^\beta(\,\cdot\,;0)$ is holomorphic on the set that avoids the branch cut of the complex logarithm and the non-analyticity of $1-f_\beta$. That is, $\mathfrak{g}_n^\beta(\,\cdot\,;0)$ is holomorphic on
	\begin{equation*}\begin{split} \begin{cases}\{ z \in \mathbb C \colon 1 - f_\beta(z) \not\in (-\infty,0]  \} \setminus \left\{ \tfrac{(2k + 1)\pi i}{\beta}\right\}_{k \in \mathbb Z} &\text{for }n \text{ even}, \\
	\{ z \in \mathbb C \colon 1 - f_\beta(z) \not\in [0,\infty)  \} \setminus \left\{ \tfrac{(2k + 1)\pi i}{\beta}\right\}_{k \in \mathbb Z} &\text{for }n \text{ odd}. \end{cases} \end{split}\end{equation*}
	Rewriting $1-f_\beta$ we obtain,
	\begin{align}\label{eq:1-f}
	1 - f_\beta(z) = \frac{e^{\beta z}}{1 + e^{\beta z}} = \frac{e^{\beta z} \left( 1+e^{\beta\overline{z}} \right)}{|1+e^{\beta z}|^{2}} = \frac{e^{\beta\Re{z}}}{|1+e^{\beta z}|^2}\left( e^{i\beta \Im{z}} + e^{\beta \text{Re}(z)} \right).
	\end{align}
	The factor, $e^{\beta\textrm{Re}(z)}|1+e^{\beta z}|^{-2}$, is real and positive and so $1-f_\beta(z)$ avoids the branch cut if and only if $h(z) \coloneqq e^{i\beta \Im{z}} + e^{\beta \Re{z}}$ does. Now, $h(z) \in (-\infty, 0]$ if and only if $\Re{z}\leq0$ and $\beta\Im{z} = (2k+1)\pi$ for some $k \in \mathbb Z$. On the other hand $h(z) \in [0,\infty)$ if and only if $\beta \Im{z} = 2k\pi$ for some $k \in \mathbb Z$. 
	We can therefore conclude that $\mathfrak{g}^\beta_0(\,\cdot\,;0)$ is holomorphic on 
	\[ 
	A_\beta^0 \coloneqq \Big\{ z \in \mathbb C \colon \Re{z}>0 \Big\} \cup \Big\{ z \in \mathbb C \colon \beta\Im{z} \in (-\pi, \pi) \Big\}
	\]
	and that $\mathfrak{g}^\beta_n(\,\cdot\,;0)$ (for $n \not= 0$) is holomorphic on the set
	\[ 
	A_\beta^n \coloneqq \Big\{ z \in \mathbb C \colon \Re{z} < 0,\,\beta\Im{z} \in ((n-1)\pi, (n+1)\pi) \Big\}. 
	\]

	Since $A_\beta^n \cap A_\beta^{n+1} = \{z\in\mathbb C\colon \Re{z}<0,\,\beta \Im{z} \in (n\pi, (n+1)\pi )\} $, and by \cref{eq:1-f}, we have that \[\textrm{Arg}_n(1 - f_\beta(z)) = \textrm{Arg}_{n+1}(1 - f_\beta(z)) \in ( n\pi, (n+1)\pi ]\]for all $z \in A_\beta^n \cap A_\beta^{n+1}$. That is, $\mathfrak{g}^\beta_n(\,\cdot\,;0) = \mathfrak{g}^\beta_{n+1}(\,\cdot\,;0)$ on 
	$A_\beta^n \cap A_\beta^{n+1}$. 
	
	We may therefore consider the analytic continuation of $\mathfrak{g}^\beta_n(\,\cdot\,;0)$ to $A_\beta^n \cup A_\beta^{n+1}$. We do this for each $n\in \mathbb Z$ and conclude since $\bigcup_{n\in\mathbb Z} A_\beta^n = \mathbb C \setminus \{ i r \colon |r| \geq \pi \beta^{-1} \}$.
\end{proof}
From now on, we denote the analytic continuation by $z\mapsto\mathfrak{g}^\beta(z;\mu)$. We need conditions on the family of contours, $\{ \mathscr C_\beta \}_\beta$, to ensure that $\mathfrak{g}^\beta(z;\mu)$ remains uniformly bounded for $z \in \mathscr C_\beta$ and $\beta > 0$. We suppose there exists some $\beta$-independent constant $0 < \ctAwayFromSingularity < \pi$ such that
\begin{equation}\label{eq:distance_singularity_beta}\begin{aligned}
&\textrm{dist}\left(z, \{ \mu \pm i\pi\beta^{-1} \}\right) \geq \ctAwayFromSingularity\beta^{-1} \quad  \forall z \in \mathscr C_\beta\quad\text{and}\\
&\qquad\qquad\exists A \subset \mathbb C \text{ bounded s.t. } \mathscr C_\beta \subset A
\end{aligned}\end{equation}for all $\beta>0$.
\begin{lemma}\label{lem:boundedness-g}
	Fix $y \in \mathrm{Adm}(\Lambda)$. Suppose that $\{\mathscr C_\beta\}_\beta$ is a family of simple closed contours encircling $\sigma(\Ham(y))$ and satisfying \cref{eq:distance_singularity_beta}. Then, for $\beta_0 > 0$, \[\sup_{\beta\geq\beta_0}\sup_{z \in \mathscr C_\beta}|\mathfrak{g}^\beta(z;\mu)| < \infty.\]
\end{lemma}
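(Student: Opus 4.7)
The plan is to write $\mathfrak{g}^\beta(z;\mu) = \frac{2}{\beta}\log(1 - f_\beta(z-\mu))$ (with the analytic continuation from Lemma \ref{lem:analytic-cont}) and bound its real and imaginary parts separately. The imaginary part equals $\frac{2}{\beta}\operatorname{Arg}_n(1 - f_\beta(z-\mu))$ for some fixed branch index $n$; since the contour is a simple closed curve in the simply connected domain $\mathbb{C} \setminus \{\mu + ir : |r| \geq \pi/\beta\}$ and $(1 - f_\beta)(\mathscr{C}_\beta) \subset \mathbb{C} \setminus \{0\}$, the argument takes values in a bounded interval, so this term is uniformly $O(\beta^{-1})$. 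For the real part, the identity $|1 - f_\beta(w)| = e^{\beta\operatorname{Re}(w)}/|1+e^{\beta w}|$ gives
\[
\operatorname{Re}(\mathfrak{g}^\beta(z;\mu)) = 2\operatorname{Re}(z-\mu) - \frac{2}{\beta}\log|1+e^{\beta(z-\mu)}|,
\]
with the first term bounded by $2\operatorname{diam}(A)$. So the crux is a uniform bound on $\beta^{-1}\bigl|\log|1+e^{\beta(z-\mu)}|\bigr|$.

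For the upper bound, I would use $|1+e^{\beta w}|\leq 1 + e^{\beta\operatorname{Re}(w)}$, yielding $\beta^{-1}\log|1+e^{\beta w}|\leq\max(\operatorname{Re}(w),0)+\beta^{-1}\log 2$, which is uniform since $w=z-\mu$ stays in a bounded set. The more delicate lower bound rests on the explicit identity
\[
|1 + e^v|^2 = (e^{\operatorname{Re}(v)}-1)^2 + 4e^{\operatorname{Re}(v)}\cos^2\!\bigl(\tfrac{1}{2}\operatorname{Im}(v)\bigr), \qquad v = \beta(z-\mu),
\]
whose zeros in $v$ are precisely $i(2\mathbb{Z}+1)\pi$. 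Near $v = \pm i\pi$ the local expansion $|1+e^v|\sim|v\mp i\pi|$ combined with the rescaled distance condition $|v\mp i\pi|\geq \ctAwayFromSingularity$ coming from \cref{eq:distance_singularity_beta} yields $|1+e^v|\geq c\,\ctAwayFromSingularity$ on a $\beta$-independent neighbourhood of these two points.

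The main obstacle is the lower bound on $|1+e^v|$ away from $\pm i\pi$ but near the other singularities $i(2k+1)\pi$ with $|k|\geq 1$, which the hypothesis \cref{eq:distance_singularity_beta} does not directly control. This is handled by a topological observation: because $\mathscr{C}_\beta$ is a bounded simple closed curve lying in $\mathbb{C}\setminus\{\mu+ir:|r|\geq\pi/\beta\}$ and encircling a bounded real spectrum, the rescaled image $\beta(\mathscr{C}_\beta-\mu)$ can only approach an outer singularity $i(2k+1)\pi$ by either leaving the bounded set $A$ (once scaled, the outer singularities are at imaginary coordinates $\geq 3\pi$, corresponding to $|\operatorname{Im}(z-\mu)|\geq 3\pi/\beta$, which still fits inside $A$ but pushes the curve against the vertical rays that bound the holomorphic domain) or crossing the excluded rays, both of which are impossible. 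This forces a $\beta$-independent separation in the rescaled variable, so the identity above provides the uniform lower bound $|1+e^v|\geq c>0$ over the entire rescaled contour. Combining all estimates yields $|\mathfrak{g}^\beta(z;\mu)|\leq C$ uniformly in $\beta\geq\beta_0$ and $z\in\mathscr{C}_\beta$.
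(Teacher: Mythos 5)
Your decomposition into real and imaginary parts, the explicit formula for $\Re{\mathfrak{g}^\beta(z;\mu)}$, the upper bound on $\beta^{-1}\log|1+e^{\beta(z-\mu)}|$, and the algebraic identity for $|1+e^v|^2$ are all correct and consistent with the paper's argument. There are, however, two genuine gaps in the proposal.

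\textbf{Imaginary part.} The claim that the argument "takes values in a bounded interval, so this term is uniformly $O(\beta^{-1})$" (with a "fixed branch index $n$") is false. The analytic continuation of Lemma~\ref{lem:analytic-cont} is built by gluing the branches $\mathfrak{g}^\beta_n$ along the overlaps of the sets $A_\beta^n$, and the effective branch at a point $z$ with $\Re{z-\mu}<0$ is $n\approx\beta\Im{z-\mu}/\pi$. Since $\Im{z-\mu}$ ranges over a fixed interval of size $\sim\mathrm{diam}(A)$, $n$ ranges up to $\sim\beta\,\mathrm{diam}(A)/\pi$, so $|\mathrm{Arg}_n(1-f_\beta(z-\mu))|\leq(n+1)\pi$ can be of order $\beta$. (One sees this directly: along a vertical line $\Re{z}=\mu-\epsilon$, $1-f_\beta(z-\mu)\approx e^{\beta(z-\mu)}$ winds around the origin $\sim\beta\,\mathrm{diam}(A)/2\pi$ times.) The paper's argument is precisely the correct version of yours: the number of branches needed over $A$ is $O(\beta)$, so $|\mathrm{Arg}_n|\leq C\beta$, and therefore $\Im{\mathfrak{g}^\beta(z;\mu)}=\tfrac{2}{\beta}\mathrm{Arg}_n(\cdot)$ is $O(1)$ uniformly in $\beta$ — not $O(\beta^{-1})$. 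In particular you cannot take a single branch index for the whole contour.

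\textbf{Outer singularities.} The "topological observation" is not a correct fact. A bounded simple closed curve in $\mathbb{C}\setminus\{\mu+ir:|r|\geq\pi\beta^{-1}\}$ encircling a real interval can still have a thin finger approaching, say, $\mu+3i\pi\beta^{-1}$ from the half-plane $\Re{z}>\mu$ without touching the rays; it need only stay in the open domain, not away from its boundary, and \cref{eq:distance_singularity_beta} imposes no $\beta$-independent separation from these points. The paper does not try to exclude these points geometrically at all. Instead it splits according to $\Re{z-\mu}$: if $|\Re{z-\mu}|>c\beta^{-1}$ then $|1+e^{\beta(z-\mu)}|\geq\min(1-e^{-c},\,e^{c}-1)>0$ \emph{independently of $\Im{z}$}, which absorbs every outer singularity $i(2k+1)\pi$ at once; and only when $\Re{z-\mu}$ is of size $O(\beta^{-1})$ does one invoke \cref{eq:distance_singularity_beta} to control $\beta\Im{z-\mu}$ away from $\pm\pi$. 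Your proof is missing this dichotomy — the lower bound near the outer singularities comes from the real part being nonzero, not from a topological exclusion, and as written your argument does not establish the required uniform separation.
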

\begin{proof}
	Since $A$ is bounded, we can find a strip of width $r>0$ about the real axis containing $A$. This means that for fixed $\beta > 0$, the number branches of the complex logarithm that we must consider, as in \cref{eq:analytic_cont}, in order to have extended $\mathfrak{g}^\beta(\,\cdot\,;\mu)$ to the whole of $A$ is at most a constant multiple of $\tfrac{r\beta}{\pi}$. This means that
	\[ |\mathrm{Arg}_n(1 - f_\beta(z-\mu))| \leq (n+1)\pi \leq Cr \beta \]
	for all $n$ such that $A\cap\{ z \in \mathbb C \colon \beta \Im{z} \in ((n-1)\pi,(n+1)\pi) \} \not=\emptyset$ and $z \in A_\beta^n$. Therefore, for $z \in A$ and $\beta > 0$, we have that $\Im{\mathfrak{g}^\beta(z;\mu)}$ is bounded on $A$ independently of $\beta$.
	
	Fix $\beta > \beta_0$. Now we show that, away from the singularities, $\Re{\mathfrak{g}^\beta(\,\cdot\,;\mu)}$ is uniformly bounded. We know that $\Re{\mathfrak{g}^\beta(z;\mu)} = 2\Re{z-\mu} - \tfrac{2}{\beta}\log\left|1 + e^{\beta (z-\mu)}\right|$ and $2(z-\mu)$ is uniformly bounded on $A$. Moreover, 
	\begin{equation*}
	\tfrac{2}{\beta}\log\left|1 + e^{\beta (z-\mu)}\right| \leq \tfrac{2}{\beta}\log\left(1 + \exp\left(\beta\sup_{z \in A} |\Re{z} - \mu|\right)\right) \leq C 
	\end{equation*}for some $C>0$ depending only on $A$ and $\beta_0$. Therefore, all that is left to show is that $|1 + e^{\beta(z-\mu)}|$ is uniformly bounded below by a positive constant. If $\Re{z - \mu} < -c\beta^{-1}$ for some $c>0$ then $|1 + e^{\beta(z-\mu)}| \geq 1 - e^{-c} > 0$ and if $\Re{z - \mu} > c\beta^{-1}$ then $|1 + e^{\beta(z-\mu)}| \geq e^{c} - 1 > 0$. On the other hand, if $|\beta\Im{z - \mu} - r|\geq \theta$ for all $r \in \mathbb R$ such that $|r|\geq \pi$, then $|1+e^{\beta(z-\mu)}|\geq \tan(\theta)>0$.
\end{proof}

\subsection{Proof of Propositions \ref{prop:existing_locality} and \ref{prop:locality_insulators}: Locality Estimates}\label{proofs:existing_locality}
We now briefly sketch the main ideas in the proof of Propositions~\ref{prop:existing_locality} and \ref{prop:locality_insulators}. We mainly do this so that we can track the $\beta$-dependent constants in the proof. A key ingredient is the following Combes-Thomas type  estimate on the resolvent:
\begin{lemma}[Combes-Thomas]\label{lem:Thomas-Combes}
	Fix $y \in \textup{Adm}(\Lambda)$ and $z \in \mathbb C$ such that $\emph{\textrm{dist}}\left( z, \sigma(\Ham(y)) \right) \geq \mathfrak{d}$ for some $\mathfrak{d}>0$. Then, 
	\[ \left|\left(\Ham(y) - z\right)^{-1}_{\ell k,ab}\right| \leq \tfrac{2}{\mathfrak{d}} e^{-\ctCT(\mathfrak{d}) |y(\ell) - y(k)|}, \]
	where $\ctCT(\mathfrak{d}) \coloneqq  c \min\{1,\mathfrak{d}\}$ for some $c>0$ depending only on $\ctTBprefactor{0},\ctTBexponent{0},\ctnoninterpen$ and $d$.
\end{lemma}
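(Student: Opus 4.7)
The plan is to use the classical Combes--Thomas similarity-transform argument. Fix the target site $k \in \Lambda$ and introduce, for a parameter $\gamma>0$ to be chosen, the diagonal ``gauge'' operator $\mathcal{U}_\gamma$ acting on orbitals via $[\mathcal{U}_\gamma v]_{m,a} \coloneqq e^{\gamma |y(m)-y(k)|} v_{m,a}$. The conjugated operator $\Ham_\gamma(y) \coloneqq \mathcal{U}_\gamma \Ham(y) \mathcal{U}_\gamma^{-1}$ is similar to $\Ham(y)$ and has entries
\[
\Ham_\gamma(y)_{mn}^{ab} = e^{\gamma(|y(m)-y(k)| - |y(n)-y(k)|)}\, \Ham(y)_{mn}^{ab}.
\]
Because $\mathcal{U}_\gamma$ is only bounded on any finite index set (not on the whole lattice), I will work formally on a sequence of finite truncations and obtain $\gamma$-independent bounds that pass to the limit, or equivalently apply the standard trick of first proving the estimate for entries via Neumann series at the operator level.

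By the reverse triangle inequality $\big| |y(m)-y(k)| - |y(n)-y(k)| \big| \le |y(m)-y(n)|$, and the elementary bound $|e^t-1|\le |t|\,e^{|t|}$, the perturbation $B_\gamma \coloneqq \Ham_\gamma(y) - \Ham(y)$ satisfies
\[
|B_\gamma{}_{mn}^{ab}| \;\le\; \gamma\, |y(m)-y(n)|\, e^{\gamma |y(m)-y(n)|}\cdot \ctTBprefactor{0}\, e^{-\ctTBexponent{0}|y(m)-y(n)|}.
\]
For $\gamma \le \ctTBexponent{0}/2$, Gershgorin's theorem (applied to a bounded self-adjoint-like operator on $\ell^2(\Lambda\times\{1,\dots,\numorbitals\})$, exactly as in \cite[Lemma~4]{ChenOrtner16}) combined with the non-interpenetration condition \asNonInter{} gives
\[
\|B_\gamma\| \;\le\; C\gamma,
\]
where $C>0$ depends only on $\ctTBprefactor{0},\ctTBexponent{0},\ctnoninterpen$ and $d$; the key point is that \asNonInter{} ensures $\#\{m : |y(\ell)-y(m)|\in[r,r+1]\} \lesssim (1+r)^{d-1}$, so the shell sum $\sum_m |y(\ell)-y(m)|\,e^{-(\ctTBexponent{0}-\gamma)|y(\ell)-y(m)|}$ is uniformly bounded.

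Now choose $\ctCT(\mathfrak{d}) \coloneqq c\min\{1,\mathfrak{d}\}$ with $c$ small enough (depending only on $C$ above, hence only on the stated constants) so that $\|B_\gamma\| \le \mathfrak{d}/2$ for $\gamma = \ctCT(\mathfrak{d})$. Since $\Ham(y)$ is self-adjoint with $\mathrm{dist}(z,\sigma(\Ham(y)))\ge \mathfrak{d}$, we have $\|(\Ham(y)-z)^{-1}\|\le 1/\mathfrak{d}$, hence $\|(\Ham(y)-z)^{-1} B_\gamma\|\le 1/2$ and the Neumann series yields
\[
\|(\Ham_\gamma(y)-z)^{-1}\| \;=\; \Big\|\big(I + (\Ham(y)-z)^{-1}B_\gamma\big)^{-1}(\Ham(y)-z)^{-1}\Big\| \;\le\; \frac{2}{\mathfrak{d}}.
\]
Undoing the conjugation gives $(\Ham(y)-z)^{-1} = \mathcal{U}_\gamma^{-1}(\Ham_\gamma(y)-z)^{-1}\mathcal{U}_\gamma$, and reading off the $(\ell k,ab)$ entry while using $|y(k)-y(k)|=0$ produces
\[
\big|(\Ham(y)-z)^{-1}_{\ell k,ab}\big| \;\le\; e^{-\gamma|y(\ell)-y(k)|}\,\|(\Ham_\gamma(y)-z)^{-1}\| \;\le\; \tfrac{2}{\mathfrak{d}}\,e^{-\ctCT(\mathfrak{d})|y(\ell)-y(k)|},
\]
which is the claimed bound. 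The main technical obstacle is the Gershgorin step: one must check that the perturbation bound degrades gracefully as $\gamma$ approaches $\ctTBexponent{0}$ and that the resulting $C$ is uniform in the configuration $y$ (only through $\ctnoninterpen$), which is precisely where \asNonInter{} is used to control shell counts.
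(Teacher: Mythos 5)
Your proposal is correct and follows essentially the same route as the paper: the paper's proof is a sketch deferring to the Combes--Thomas argument of \cite[Lemma~6]{ChenOrtner16} and \cite{ELu10} (diagonal exponential conjugation plus Neumann series), with the only addition being the sharper row-sum estimate \cref{eq:CT-proof}, which is precisely your bound $\|B_\gamma\| \leq C\gamma$ for $\gamma \leq \tfrac12\ctTBexponent{0}$. The one small slip is calling the $\ell^2$ operator-norm bound ``Gershgorin'' rather than the Schur/Holmgren test (Gershgorin bounds the spectrum, and $B_\gamma$ is not self-adjoint), but since your pointwise bound on $|B_{\gamma,mn}^{ab}|$ is symmetric in $m,n$, the Schur test with the shell count from \asNonInter{} gives exactly the claimed $\|B_\gamma\|\leq C\gamma$.
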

\begin{proof}This follows the proof of \cite[Lemma~6]{ChenOrtner16} and the main ideas of \cite{ELu10}. The claimed $\mathfrak{d}$ dependence in the exponent can be obtained by replacing (34) in \cite[Lemma~6]{ChenOrtner16} with the following sharper estimate: there exists a $C>0$ such that
\begin{equation}\label{eq:CT-proof}\sup_{\ell \in \Lambda} \sum_{k \in \Lambda} \left|\left[\Ham(y)\right]_{\ell k}\right|\left(e^{\ctCT|y(\ell) - y(k)|} - 1\right) \leq C \ctCT \end{equation}
for all $0 \leq \ctCT \leq \tfrac{1}{2}\ctTBexponent{0}$. To conclude, we note that in the proof of \cite[Lemma~6]{ChenOrtner16}, $\ctCT>0$ must be chosen sufficiently small such that the right hand side of \cref{eq:CT-proof} is less than $\tfrac{1}{2}\mathfrak{d}$.
\end{proof}

To simplify notation, we shall write $r_{\ell k}(y) \coloneqq |y(\ell) - y(k)|$ for the distance between two atomic sites and $\mathscr R_z(y) \coloneqq (\Ham(y) - z)^{-1}$ for the resolvent operator corresponding to $\Ham(y)$. We will drop the argument $(y)$ in $r_{\ell k}(y)$, the resolvent and Hamiltonian when the dependence on $y$ is clear from context. Moreover, we shall use the following shorthand for derivatives of the Hamiltonian: for $\bm{m} = (m_1,\dots,m_j) \in \Lambda^j$ and $\bm{i} = (i_1,\dots,i_j) \in \left(\mathbb R^d\right)^j$, define
\[ \left[\Ham_{,\bm{m}}\right]_{\bm{i}} \coloneqq \frac{\partial^j \Ham(y)}{\partial [y(m_1)]_{i_1} \dots \partial [y(m_j)]_{i_j}}.\]Often, to simplify notation further, we will drop the Euclidean coordinate in this notation.  

Before we prove the locality estimates of Propositions~\ref{prop:existing_locality} and \ref{prop:locality_insulators}, we shall derive general bounds for the first and second derivatives of the resolvent. We choose $z\in\mathbb C$ such that
\begin{equation} 
{\mathrm{dist}}\left( z, \sigma(\Ham(y)) \right) \geq \mathfrak{d}
\end{equation}
for some $\mathfrak{d}>0$. Moreover, we let $\ctCT = \ctCT(\mathfrak{d})>0$ be the corresponding Combes-Thomas exponent from Lemma~\ref{lem:Thomas-Combes} and define $\mathfrak{d}_j \coloneqq \min\{\ctTBexponent{1},\dots,\ctTBexponent{j},\ctCT\}$ for each $j$. By applying the Combes-Thomas estimate and using the regularity of the Hamiltonian, we have
\begin{equation}\label{eq:first_derivative_resolvent} \begin{split}
\left|\frac{\partial \left[\mathscr R_z(y)\right]^{aa}_{\ell\ell}}{\partial [y(m)]_i}\right| 
&= \left| \sum_{\above{\ell_1,\ell_2\in\Lambda}{1\leq b,c \leq N_\mathrm{b}}} \left[\mathscr R_z\right]_{\ell\ell_1}^{ab} \big([\Ham_{,m}]_i\big)_{\ell_1\ell_2}^{bc}\left[\mathscr R_z\right]_{\ell_2 \ell}^{ca} \right|  \\
&\leq 4 N_\mathrm{b}^2 \ctTBprefactor{1} \mathfrak{d}^{-2} \sum_{\ell_1, \ell_2 \in \Lambda} e^{-\ctCT r_{\ell_1 \ell}} e^{-\ctTBexponent{1} \left( r_{\ell_1m} + r_{m\ell_2}\right)} e^{- \ctCT r_{\ell_2 \ell}} \\&\leq C \mathfrak{d}^{-2} \left(\sum_{\ell_1\in \Lambda} e^{-\mathfrak{d}_1 \left(r_{\ell \ell_1} + r_{\ell_1m}\right)} \right)^2 \\&\leq C \mathfrak{d}^{-2} \mathfrak{d}_1^{-2d} e^{-\mathfrak{d}_1 r_{\ell m}}.
\end{split}\end{equation}
Similarly, for the second derivatives,
\begin{equation}\label{eq:second_derivatives_resolvent}\begin{split}
\frac{\partial^2 \left[\mathscr R_z(y)\right]^{aa}_{\ell\ell}}{\partial y(m_1) \partial y(m_2)} = &\bigg[\mathscr R_z \Ham_{,m_1} \mathscr R_z \Ham_{,m_2} \mathscr R_z - \mathscr R_z \Ham_{,m_1m_2} \mathscr R_z + \mathscr R_z \Ham_{m_2} \mathscr R_z \Ham_{m_1} \mathscr R_z\bigg]^{aa}_{\ell\ell}.
\end{split}
\end{equation} 
Each of the terms in (\ref{eq:second_derivatives_resolvent}) can be bounded separately:
\begin{equation*}\begin{split}
\big|\big[\mathscr R_z &\Ham_{,m_1} \mathscr R_z \Ham_{,m_2} \mathscr R_z\big]_{\ell\ell}^{aa}\big| \\&\leq 8 N_\textrm{b}^4 \ctTBprefactor{1}^2 \mathfrak{d}^{-3} \sum_{\ell_1,\ell_2,\ell_3,\ell_4 \in \Lambda} e^{-\ctCT \left( r_{\ell\ell_1} + r_{\ell_2\ell_3} + r_{\ell_4\ell}\right)} e^{-\ctTBexponent{1} \left( r_{\ell_1m_1} + r_{m_1\ell_2} + r_{\ell_3m_2} + r_{m_2\ell_4}\right)} \\
&\leq C \mathfrak{d}^{-3} \mathfrak{d}_1^{-4d} e^{-\frac{1}{2}\mathfrak{d}_1\left(r_{\ell m_1} + r_{\ell m_2}\right) }; \quad \text{and}\\
\big|\big[\mathscr R_z& \Ham_{,m_1m_2} \mathscr R_z\big]_{\ell\ell}^{aa}\big| \\&\leq 4 N_\textrm{b}^2 \ctTBprefactor{2} \mathfrak{d}^{-2} \sum_{\ell_1,\ell_2 \in \Lambda} e^{-\ctCT \left( r_{\ell\ell_1} + r_{\ell_2\ell}\right)} e^{-\ctTBexponent{2} \left( r_{\ell_1m_1} + r_{\ell_1m_2}  + r_{\ell_2m_1}  r_{\ell_2m_2} \right)} \\
&\leq C\mathfrak{d}^{-2} \left(\sum_{\ell_1 \in \Lambda} e^{-\ctCT r_{\ell\ell_1} } e^{-\ctTBexponent{2} \left( r_{\ell_1m_1} + r_{\ell_1m_2} \right)}\right)^2 
\leq C\mathfrak{d}^{-2} \mathfrak{d}_2^{-2d} e^{-\frac{1}{2}\mathfrak{d}_2\left(r_{\ell m_1} + r_{\ell m_2}\right)}.\end{split}
\end{equation*}
Therefore, we obtain the following bound:
\begin{equation}\label{eq:second_derivative_resolvent_estimate}\begin{split}
\left| \frac{\partial^2 \left[\mathscr R_z(y)\right]^{aa}_{\ell\ell}}{\partial y(m_1) \partial y(m_2)} \right| \leq C \mathfrak{d}^{-3} \max\Big\{ \mathfrak{d}_1^{-4d}, \mathfrak{d} \mathfrak{d}_2^{-2d} \Big\} \, e^{-\frac{1}{2}\mathfrak{d}_2\left(r_{\ell m_1} + r_{\ell m_2}\right) }.\end{split}
\end{equation}
In particular, for $\mathfrak{d}$ sufficiently small, we have
\begin{equation}\label{eq:resolvent_estimates_d_small}
\left| \frac{\partial \left[\mathscr R_z(y)\right]^{aa}_{\ell\ell}}{\partial y(m)} \right| \lesssim \mathfrak{d}^{-2(d +1)}\, e^{-\ctCT r_{\ell m}} \quad \text{and} \quad \left| \frac{\partial^2 \left[\mathscr R_z(y)\right]^{aa}_{\ell\ell}}{\partial y(m_1) \partial y(m_2)} \right| \lesssim \mathfrak{d}^{-(4d +3)}\, e^{-\frac{1}{2}\ctCT\left(r_{\ell m_1} + r_{\ell m_2}\right) }
\end{equation}
It should be clear that, for higher derivatives, the same arguments can be made and similar estimates hold.

\begin{proof}[Proof of Proposition~\ref{prop:existing_locality}:  Finite Temperature Locality for Metals]

	We will only consider the case where $j\in\{1,2\}$. For $j> 2$, similar arguments can be made but is omitted as the notation becomes tedious and no new ideas are used. Since for all $z \in \mathscr C_\beta$,
	\[ 
	\textrm{dist}\Big(z,\sigma(\Ham(y))\Big) \geq \frac{\pi}{2\beta}, 
	\]
	we may use (\ref{eq:first_derivative_resolvent}) and (\ref{eq:second_derivative_resolvent_estimate}) with $\ctCT = \ctCT(\tfrac{\pi}{2\beta})$. First, we consider $j=1$ and write:
	\begin{align*}
	\left|\frac{\partial G_{\ell}^\beta(y)}{\partial [y(m)]_i}\right| &\leq \frac{1}{2\pi}\sum_a \left| \oint_{\mathscr C_\beta} \mathfrak{g}^\beta(z;\mu) \frac{\partial \left[\mathscr R_z(y)\right]_{\ell\ell}}{\partial [y(m)]_i} \rm{d}z  \right| \\
	&\leq C\numorbitals |\mathscr C_\beta|\max_{\mathscr C_\beta} \left| \mathfrak{g}^\beta(z;\mu) \right|\beta^2 \min\left\{\ctTBexponent{1}, \ctCT(\tfrac{\pi}{2\beta})\right\}^{-2d} e^{-\textrm{min}\{\ctTBexponent{1}, \ctCT(\frac{\pi}{2\beta})\}\, r_{\ell m} }
	\end{align*}
	By \cref{eq:distance_singularity_beta} and Lemma~\ref{lem:boundedness-g}, $\mathfrak{g}^\beta(\,\cdot\,;\mu)$ is uniformly bounded along $\mathscr C_\beta$ independently of $\beta$. We can thus conclude with $\eta_1 \coloneqq  \textrm{min}\{\ctTBexponent{1}, \ctCT(\frac{\pi}{2\beta})\}$.
	
	Similarly, for $j = 2$, we may apply (\ref{eq:second_derivative_resolvent_estimate}) together with
	\begin{align*}
	\left|\frac{\partial^2 G_{\ell}^\beta(y)}{\partial [y(m_1)]_{i_1}\partial [y(m_2)]_{i_2}}\right|	&\leq C\numorbitals |\mathscr C_\beta| \sup_{z \in \mathscr C_\beta}\left| \frac{\partial^2 \left[\mathscr R_z(y)\right]^{aa}_{\ell\ell}}{\partial y(m_1) \partial y(m_2)}\right|,
	\end{align*}
	to conclude with $\eta_2 \coloneqq  \frac{1}{2}\textrm{min}\{\ctTBexponent{1}, \ctTBexponent{2}, \ctCT(\tfrac{\pi}{2\beta})\}$. The fact that, for sufficiently large $\beta>0$, the pre-factor is $C\beta^\alpha$ for some $\alpha = \alpha(j,d)>0$ should be clear from \cref{eq:resolvent_estimates_d_small} and Lemma~\ref{lem:Thomas-Combes}.
\end{proof}
In the case of an insulator, the separation between the spectrum and the contour can be chosen to be $\beta$-independent and equal to $\ctgapfinitetemp(y)$ as in \cref{eq:insulator_distance}. In the zero temperature case, this constant may be chosen to be $\tfrac{1}{2}\ctgap(y)$ where $\ctgap(y)$ is the constant from \cref{eq:insulator_distance_2}. 

\begin{proof}[Proof of Proposition~\ref{prop:locality_insulators}: Locality Estimates for Insulators]
	The proof follows in the exact same way as Proposition~\ref{prop:existing_locality} with $\ctCT=\ctCT(\ctgapfinitetemp(y))$ for finite Fermi-temperature. In the case of zero Fermi-temperature, we use the proof of Proposition~\ref{prop:existing_locality} with $\ctCT=\ctCT(\tfrac{1}{2}\ctgap(y))$ and the fact that $\left|2(z - \mu)\right|$ is uniformly bounded along the contour $\mathscr C_\infty$.
\end{proof}

\subsection{Decomposition of the Spectrum}
\label{proofs:decomposition_spectrum}
We need to show that the defective Hamiltonian can be written in terms of the reference Hamiltonian. However, we are considering a point defect reference configuration, $\Lambda$, for which $\Lambda \cap B_\mathrm{def} \not= \Lambda^\mathrm{ref} \cap B_\mathrm{def}$ in general. This means that the defective and reference Hamiltonians may be defined on different spaces. We shall extend the definitions to $\Lambda \cup \Lambda^\mathrm{ref}$: for $y \in \textrm{Adm}(\Lambda)$ and $\ell, k \in \Lambda \cup \Lambda^\mathrm{ref}$, let us define 
\begin{align}\label{eq:extended_ham}
\widetilde{\Ham}(y)_{\ell k}^{ab} \coloneqq \begin{cases}
\Ham(y)^{ab}_{\ell k} &\text{if } \ell, k \in \Lambda \\
0 &\text{otherwise}
\end{cases} \quad \text{and}\quad [\widetilde{\Ham}^\mathrm{ref}]_{\ell k}^{ab} \coloneqq
\begin{cases}
[\Ham^\mathrm{ref}]_{\ell k}^{ab} &\text{if } \ell, k \in \Lambda^\mathrm{ref} \\
0 &\text{otherwise}.
\end{cases}
\end{align}
This only changes the spectrum by introducing additional zero eigenvalues. We shall shift the spectrum away from $\{0\}$ so that we can replace $\Ham(y)$ and $\Ham^\mathrm{ref}$ with $\widetilde{\Ham}(y)$ and $\widetilde{\Ham}^\mathrm{ref}$, respectively. This does not lead to any problems as we will now see: fix $\beta \in (0,\infty]$ and an appropriate contour $\mathscr C$. Choosing $z_0 \in\mathbb C$ such that the contour $\mathscr C + z_0$ does not encircle $\{0\}$, we have 
\begin{equation*}\begin{split} G^\beta_{\ell}(y) &= - \frac{1}{2\pi i}\sum_a\oint_{\mathscr C} \mathfrak{g}^\beta(z;\mu) \left[\mathscr R_z(y)\right]_{\ell \ell}^{aa}\textrm{d}z \\
&= - \frac{1}{2\pi i}\sum_a\oint_{\mathscr C + z_0} \mathfrak{g}^\beta(z-z_0;\mu) \Big[\big(\Ham(y) -(z-z_0)\big)^{-1}\Big]_{\ell \ell}^{aa}\textrm{d}z \\
&= - \frac{1}{2\pi i}\sum_a\oint_{\mathscr C + z_0} \mathfrak{g}^\beta(z;\mu+z_0) \left[\left((\Ham(y) + z_0)^{\Lambda\cup\Lambda^\mathrm{ref}} - z\right)^{-1}\right]_{\ell \ell}^{aa}\textrm{d}z
\end{split}\end{equation*}
where $(\Ham(y) + z_0)^{\Lambda\cup\Lambda^\mathrm{ref}}$ is the extension of $\Ham(y) + z_0$ to $\Lambda\cup \Lambda^\mathrm{ref}$ as in \cref{eq:extended_ham}. Therefore, by considering $\Ham(y) + z_0$, we can shift the spectrum away from $\{0\}$ and this does not affect the site energies as long as we also shift the chemical potential and the contour by $z_0$. 

We now show that the Hamiltonian may be decomposed into three terms: the reference Hamiltonian and two perturbations that are small in the sense of rank and Frobenius norm, respectively. 
\begin{lemma}[Decomposition of the Hamiltonian]\label{lem:decomp_ham}Fix $y \in \emph{Adm}(\Lambda)$. For each $\delta>0$ there exists $R_\delta>0$ and operators $P_1(y), P_2(y)$ such that 
	\begin{equation}\label{eq:lem-decomp-ham} \widetilde{\Ham}(y) = \widetilde{\Ham}^\mathrm{ref} + P_1(y) + P_2(y), \end{equation}
	$\|P_1(y)\|_{\textup{\textrm{F}}} \leq \delta$ and $P_2(y)_{\ell k}^{ab} = 0$ for all $(\ell, k)\not\in B_{R_\delta}\times B_{R_\delta}$. 
\end{lemma}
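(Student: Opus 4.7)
The plan is to set $D(y) \coloneqq \widetilde{\Ham}(y) - \widetilde{\Ham}^{\mathrm{ref}}$ and first establish $\|D(y)\|_{\mathrm F}<\infty$; the lemma then follows by taking $P_2(y) \coloneqq D(y)\mathbf{1}_{B_{R_\delta}\times B_{R_\delta}}$ and $P_1(y) \coloneqq D(y)-P_2(y)$, since dominated convergence forces $\|P_1(y)\|_{\mathrm F}\to 0$ as $R_\delta\to\infty$, so $\|P_1(y)\|_{\mathrm F}\leq \delta$ for all sufficiently large $R_\delta$. The identity \cref{eq:lem-decomp-ham} and the support condition on $P_2(y)$ are immediate from this definition.

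To bound $\|D(y)\|_{\mathrm F}^2$, I would split the double sum over $(\Lambda\cup\Lambda^{\mathrm{ref}})^2$ into three regions according to whether $\ell$ and $k$ lie in the defect core $B_{R_{\mathrm{def}}}$. By \asPointDefect\ there are only finitely many atoms inside the core and $\Lambda\setminus B_{R_{\mathrm{def}}} = \Lambda^{\mathrm{ref}}\setminus B_{R_{\mathrm{def}}}$; any contribution involving at least one core index is therefore finite, since the pointwise bound $|h^{ab}_{\ell k}(\xi)|\leq\ctTBprefactor{0}e^{-\ctTBexponent{0}|\xi|}$ from \asTB\ combined with non-interpenetration controls the remaining one-site sum by a standard Gershgorin-type lattice estimate. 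The substantive region is therefore $\ell,k\in\Lambda\cap\Lambda^{\mathrm{ref}}$.

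In that region, with $u\coloneqq y-x$ and $\xi_t\coloneqq (\ell-k)+t(u(\ell)-u(k))$, the mean value theorem gives $D(y)^{ab}_{\ell k} = \int_0^1\nabla h^{ab}_{\ell k}(\xi_t)\,\mathrm{d}t\cdot(u(\ell)-u(k))$. I would split once more into subcases: if $|u(\ell)-u(k)|\leq \tfrac12|\ell-k|$ then $|\xi_t|\geq \tfrac12|\ell-k|$ for every $t\in[0,1]$, and \asTB\ yields $|D(y)^{ab}_{\ell k}|\leq \ctTBprefactor{1}|u(\ell)-u(k)|\,e^{-\frac12\ctTBexponent{1}|\ell-k|}$; in the complementary subcase, the absolute bound $|D(y)^{ab}_{\ell k}|\leq \ctTBprefactor{0}\bigl(e^{-\ctTBexponent{0}|y(\ell)-y(k)|}+e^{-\ctTBexponent{0}|\ell-k|}\bigr)$ combined with the inequality $1 \leq 4|u(\ell)-u(k)|^2/|\ell-k|^2$ and the (reference) non-interpenetration lower bound on $|\ell-k|$ produces an estimate of the same shape. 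Summing and invoking the equivalence of all the semi-norms $\|D\cdot\|_{\ell^2_\ctGamma}$, one arrives at $\|D(y)\|_{\mathrm F}^2 \leq C\|Du\|_{\ell^2_\ctGamma}^2<\infty$ because $u\in\W(\Lambda)$.

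The main obstacle is precisely this mean-value step: the segment $\xi_t$ between $\ell-k$ and $y(\ell)-y(k)$ may in principle pass arbitrarily close to the origin, where the exponential decay of $\nabla h^{ab}_{\ell k}$ provides no help. The splitting into small- and large-displacement subcases is designed to circumvent this pathology; the large-displacement subcase can only be dispatched by exploiting the inequality $|u(\ell)-u(k)| > \tfrac12|\ell-k|$ to convert the bare exponential bound into one weighted by $|u(\ell)-u(k)|^2$, so that the $\W(\Lambda)$-regularity of $u$ can be brought to bear.
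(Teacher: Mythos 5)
Your proof is correct and shares the paper's overall blueprint --- Taylor/mean-value expansion of the Hamiltonian difference, Frobenius control via $u \in \W(\Lambda)$, a finite-core remainder, and a truncation --- but it navigates the MVT degeneracy by a genuinely different route. The paper first chooses $R > R_\mathrm{def}$ large enough (using the equivalence of $\|D\cdot\|_{\ell^2_\ctGamma}$ with the $\ell^\infty$-type semi-norm from \cite{ChenNazarOrtner19}) so that $|D_{k-\ell}u(\ell)| \leq \ctnoninterpen|\ell-k|$ on $\Lambda\setminus B_R$; the intermediate point $\xi$ then stays a fixed fraction of $|\ell-k|$ from the origin by construction. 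The entire far block $(\Lambda\setminus B_R)^2$ is placed into $P_1(y)$, and the cross-terms $\ell\in B_R$, $k\notin B_R$ are split off by a second radius $R'$, so $P_1(y)$ is not a ball truncation but a union of the far region with a far off-diagonal band. Your dichotomy $|u(\ell)-u(k)|\leq\tfrac12|\ell-k|$ versus its complement achieves the same control pointwise instead of by region: in the small-displacement branch $|\xi_t|\geq\tfrac12|\ell-k|$ directly, and in the large-displacement branch the squared \asTB\ decay bound multiplied by $4|u(\ell)-u(k)|^2/|\ell-k|^2\geq 1$ reproduces the weighted estimate $|D(y)^{ab}_{\ell k}|^2\lesssim e^{-c|\ell-k|}|D_{k-\ell}u(\ell)|^2$. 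This yields the global bound $\|D(y)\|_\mathrm{F}<\infty$, making $P_2(y)$ a plain ball truncation and the tail estimate for $P_1(y)$ immediate, while sidestepping the $\ell^\infty$-semi-norm step entirely. One point you should make explicit: to turn the factor $e^{-\ctTBexponent{0}|y(\ell)-y(k)|}$ into exponential decay in $|\ell-k|$ in the large-displacement branch, you need the accumulation-parameter bound $|y(\ell)-y(k)|\geq\ctnoninterpen|\ell-k|$ (the paper invokes it from \cite{OrtnerShapeev12arxiv}), not merely the bare separation $|\ell-k|\geq c_0$ in the reference lattice --- without it the contribution of the first exponential is only $O(1)$ and the branch-2 sum is not obviously finite. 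With that one addition your argument is complete.
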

\begin{proof}To simplify notation, we let $u \coloneqq y - x$. Firstly, since $u \in \W(\Lambda)$ and $y$ satisfies \asNonInter, there exists an accumulation parameter $0<\ctnoninterpen<1$ such that $|y(\ell) - y(k)| \geq \ctnoninterpen|\ell - k|$ for all $\ell, k \in \Lambda$ \cite{OrtnerShapeev12arxiv}. Moreover, since the semi-norm defined by 
\[ \|Du\|_{\ell^\infty} \coloneqq \sup_{\ell \in \Lambda} \sup_{\rho \in \Lambda - \ell} \frac{|D_\rho u(\ell)|}{|\rho|} \]is equivalent to $\|D\cdot\|_{\ell^2_\ctGamma}$ \cite{ChenNazarOrtner19}, we may choose $R > R_\mathrm{def}$ sufficiently large such that
\begin{equation}
	|D_{k-\ell}u(\ell)| \leq \ctnoninterpen|\ell - k| \quad \forall\, \ell,k\in\Lambda\setminus B_R. \label{eq:small_displacement}
\end{equation}
By applying Taylor's theorem we have: for all $\ell, k \in \Lambda\setminus B_{R}$ and atomic orbitals $1\leq a,b\leq\numorbitals$,
\begin{equation}\label{eq:ham_estimate_1}\begin{split}
	\left|\left[\widetilde{\Ham}(y) - \widetilde{\Ham}^\mathrm{ref}\right]_{\ell k}^{ab}\right| = \left|\big[{\Ham}(y) - {\Ham}(x)\big]_{\ell k}^{ab}\right| &= \left|\nabla h^{ab}_{\ell k}(\xi) \cdot \left[ u(k) - u(\ell)\right] \right|\\
	&\leq \ctTBprefactor{1} e^{-\ctTBexponent{1} |\xi|} |D_{k-\ell}u(\ell)|
\end{split}\end{equation}
where $\xi = (1-\theta)\left(y(\ell) - y(k)\right) +  \theta\left( \ell - k \right)$ for some $\theta = \theta(a,b,\ell,k) \in [0,1]$. Now, by \cref{eq:small_displacement}, we necessarily have that $|\xi| \geq \tfrac{\sqrt{3}}{2}\ctnoninterpen|\ell - k|$. In particular, by \cref{eq:ham_estimate_1}, we obtain
\begin{align}\label{eq:ham_perturb_1} \left|\left[\widetilde{\Ham}(y) - \widetilde{\Ham}^\mathrm{ref}\right]_{\ell k}^{ab}\right| \leq \ctTBprefactor{1}\,e^{-\frac{\sqrt{3}}{2}\ctTBexponent{1}\ctnoninterpen|\ell-k|}\left|D_{k-\ell}u(\ell)\right|\end{align}
for all $\ell, k \in \Lambda\setminus B_{R}$. 
	
The off-diagonal Hamiltonian entries decay exponentially and so we obtain: for $R^\prime > 0$,
	\begin{align}\label{eq:ham_perturb_2}\begin{split}
	\sum_{\ell \in (\Lambda \cup \Lambda^\mathrm{ref})\cap B_R} \sum_{\above{k \in \Lambda\setminus B_R}{|\ell - k| > R^\prime}}&\left|\left[\widetilde{\Ham}(y) - \widetilde{\Ham}^\mathrm{ref}\right]_{\ell k}\right|^2 \leq C\sum_{\ell \in (\Lambda \cup \Lambda^\mathrm{ref})\cap B_R} \sum_{\above{k \in \Lambda\setminus B_R}{|\ell - k| > R^\prime}} e^{-2\ctTBexponent{0} \ctnoninterpen|\ell - k|}\\
	&\leq C_R \int_{\above{\bm{r} \in \mathbb R^d}{|\bm{r}|>R^\prime}} e^{-2\ctTBexponent{0} \ctnoninterpen|\bm{r}|}\textrm{d}\bm{r} \leq C_R \,p(R^\prime) e^{-2\ctTBexponent{0} \ctnoninterpen R^\prime}  
	\end{split}\end{align}
where $p(R^\prime)$ is a polynomial (of degree $d-1$) in $R^\prime$. We let $P_1(y)$ be the operator (depending on $R$ and $R^\prime$) defined by 
	\begin{equation}
	P_1(y)_{\ell k}^{ab} = \begin{cases}
	\left[\Ham(y) - \Ham^\mathrm{ref}\right]_{\ell k}^{ab} &\text{if } \ell, k \in \Lambda \setminus B_{R} \\
	\left[\widetilde{\Ham}(y) - \widetilde{\Ham}^\mathrm{ref}\right]_{\ell k}^{ab} &\text{if }|\ell - k| > R^\prime \text{ and } \left(\ell \in B_R, k \not\in B_R \text{ or \textit{vice versa}}\right) \\
	\quad0 &\text{otherwise}
	\end{cases}
	\end{equation}
	for $\ell, k \in \Lambda \cup \Lambda^\mathrm{ref}$. We then define $P_2(y)$ to be the finite rank operator such that (\ref{eq:lem-decomp-ham}) is satisfied. $P_2(y)$ is local in the sense that $P_2(y)_{\ell k}^{ab} = 0$ if $(\ell,k) \not\in B_{R^{\prime\prime}}\times B_{R^{\prime\prime}}$ for some $R^{\prime\prime}$ (depending on $R$ and $R^\prime$). To conclude, we simply use (\ref{eq:ham_perturb_1}) and (\ref{eq:ham_perturb_2}) to bound $\|P_1(y)\|_\textrm{F}$:
	\begin{equation*}\begin{split}
	\|P_1(y)\|_{\textrm{F}}^2 &= \sum_{1\leq a,b\leq N_\textrm{b}}\Bigg(\sum_{\ell, k \in \Lambda\setminus B_{R}} \left| P_1(y)_{\ell k}^{ab}\right|^2 + 2\sum_{\ell \in (\Lambda\cup \Lambda^\mathrm{ref})\cap B_{R}} \sum_{\above{k \in \Lambda\setminus B_R}{|\ell - k| > R^\prime}} \left| P_1(y)_{\ell k}^{ab}\right|^2\Bigg) \\
	&\leq C \|Du\|^2_{\ell^2_\ctGamma(\Lambda\setminus B_{R})} + C_R\, p(R^\prime) e^{-2\ctTBexponent{0} \ctnoninterpen R^\prime}.
	\end{split}\end{equation*}
	Since this expression can be made arbitrarily small by choosing $R$ and then $R^\prime$ sufficiently large, this completes the proof.
\end{proof}
We now use Lemma~\ref{lem:decomp_ham} to prove that the spectrum can be decomposed as in Lemma~\ref{lem:decomp-spec}:
\begin{proof}[Proof of Lemma~\ref{lem:decomp-spec}: Decomposition of the Spectrum]

For an operator $T$, we let $\sigma_\mathrm{disc}(T)$ denote the discrete spectrum of $T$ (that is, the set of isolated eigenvalues with finite multiplicity) and $\sigma_\mathrm{ess}(T) \coloneqq \sigma(T) \setminus \sigma_\mathrm{disc}(T)$, the essential spectrum. Morevoer, for subsets $A,B \subset \mathbb R$, the Hausdorff distance between $A$ and $B$ is denoted by $\mathrm{dist}(A,B) \coloneqq \max\{ \sup_{a \in A} \mathrm{dist}(a, B), \sup_{b \in B} \mathrm{dist}(b, A) \}$. 

By Lemma~\ref{lem:decomp_ham}, we can apply Weyl's theorem \cite{Kato95} to conclude that the finite rank perturbation $P_2(y)$ does not affect the essential spectrum of $\Ham^\mathrm{ref}$. That is, $\sigma_\mathrm{ess}(\Ham^\mathrm{ref}) = \sigma_\mathrm{ess}(\Ham^\mathrm{ref} + P_2(y))$. Further, for each fixed $\eta > 0$, there are finitely many points in $\sigma_\mathrm{disc}(\Ham^\mathrm{ref} + P_2(y)) \setminus B_\eta(\sigma_\mathrm{ess}(\Ham^\mathrm{ref}))$. Indeed, since the spectrum is bounded, if $\lambda_n \in \sigma_\mathrm{disc}(\Ham^\mathrm{ref} + P_2(y))$ is a sequence of distinct eigenvalues, then (along a subsequence) we have $\lambda_n \to \lambda$ as $n \to \infty$ for some $\lambda \in \mathbb R$. Since the spectrum is closed and $\lambda$ is not an isolated point, $\lambda \in \sigma_\mathrm{ess}(\Ham^\mathrm{ref})$. This means only finitely many of $\lambda_n$ are contained in $\sigma_\mathrm{disc}(\Ham^\mathrm{ref} + P_2(y)) \setminus B_\eta(\sigma_\mathrm{ess}(\Ham^\mathrm{ref}))$.

Since $\|P_1(y)\|_\mathrm{F}\leq \delta$, $P_1(y)$ only perturbs $\sigma(\Ham^\mathrm{ref}+P_2(y))$ by $\delta$ in the following sense \cite{Kato95}:
\[
\mathrm{dist}(\sigma(\Ham(y)), \sigma(\Ham^\mathrm{ref} + P_2(y))) \leq \delta.
\]
This implies that $\sigma_\mathrm{disc}(\Ham(y)) \setminus B_{\delta + \eta}(\sigma_\mathrm{ess}(\Ham^\mathrm{ref}))$ is finite. Moreover, if $\lambda \in \sigma_\mathrm{ess}(\Ham(y))$ then we necessarily have that $\lambda \in B_{\delta + \eta}(\sigma_\mathrm{ess}(\Ham^\mathrm{ref}))$.

Since, $\Ham^\mathrm{ref} = \Ham(y) - P_1(y) - P_2(y)$ where $- P_1(y)$ and $- P_2(y)$ are small in the sense of Frobenius norm and rank, respectively, we may reverse the whole argument to conclude that $\sigma_\mathrm{ess}(\Ham^\mathrm{ref}) \subset B_{\delta + \eta}(\sigma_\mathrm{ess}(\Ham(y)))$. That is, $\mathrm{dist}(\sigma_\mathrm{ess}(\Ham(y)), \sigma_\mathrm{ess}(\Ham^\mathrm{ref}))\leq \delta + \eta$.

We have actually proved a stronger statement than that of Lemma~\ref{lem:decomp-spec}. In particular, we have shown that for all $\delta> 0$ there exists an $R_\delta$ such that $\mathrm{dist}(\sigma_\mathrm{ess}(\Ham(y)), \sigma_\mathrm{ess}(\Ham^\mathrm{ref}))\leq \delta$ and $\#\left( \sigma(\Ham(y)) \setminus B_\delta( \sigma_\mathrm{ess}(\Ham^\mathrm{ref})) \right) \leq  R_\delta$.
\end{proof}

\subsection{Proof of Theorems~\ref{thm:improved_locality_beta} and \ref{thm:improved_locality}: Improved Locality Estimates}
\label{proofs:improved_locality}

We will now prove general estimates for the resolvent operators which will be useful in the proof of the improved locality results. Firstly, we shift $\Ham(y)$ and $\Ham^\mathrm{ref}$ by the same constant multiple of the identity so that the spectrum of these operators is bounded below by a positive constant. We also fix $z \in \mathbb C$ in a bounded set such that 
\begin{equation}\label{eq:dist-dist-hom} 
\textrm{dist}\left(z, \sigma(\widetilde{\Ham}(y))\right) \geq \mathfrak{d} \quad \text{and} \quad \textrm{dist}\left(z, \sigma(\widetilde{\Ham}^\mathrm{ref})\right) \geq \mathfrak{d}^\mathrm{ref} 
\end{equation}
for positive constants $\mathfrak{d}, \mathfrak{d}^\mathrm{ref}$. In the following, we use the notation for the finite rank perturbation, $P_2(y)$, from Lemma~\ref{lem:decomp_ham} and the Combes-Thomas exponents $\ctCT(\mathfrak{d})$ and $\ctCT(\mathfrak{d}^\mathrm{ref})$ from Lemma~\ref{lem:Thomas-Combes}.

The operator $P_2(y)$ is of finite rank and so there exists $U = (w_1|\dots|w_k)$ such that $\{w_i\}_i$ is a basis of the column space of $P_2(y)$. We let $v_j$ be the vector of coordinates of the $j^\text{th}$ column of $P_2(y)$ with respect to the basis $\{w_i\}_i$ and set $V$ to be matrix with columns $v_j$. That is, $P_2(y)= UV$. Moreover, since $P_2(y)$ is local, we can choose the columns of $U$ to be such that $U_{\ell j} = 0$ for all $\ell \not\in \Lambda \cap B_R, j \in \{1,\dots,k\}$ and some $R>0$. This implies that $V_{j \ell} = 0$ for all $j \in \{1,\dots,k\}, \ell \not \in \Lambda\cap B_R$. Applying the Woodbury identity \cite{Hager1989} with $(\widetilde{\Ham}^\mathrm{ref} + P_2(y) - z)^{-1}$ and $\mathscr R^\mathrm{ref}_z \coloneqq (\widetilde{\Ham}^\mathrm{ref} - z)^{-1}$ yields
\begin{equation}\label{eq:woodbury} \left(\widetilde{\Ham}^\mathrm{ref} + P_2(y) - z\right)^{-1} = \mathscr R^\mathrm{ref}_z - \mathscr R^\mathrm{ref}_z U (I_k + V\mathscr R^\mathrm{ref}_z U)^{-1}V\mathscr R^\mathrm{ref}_z.  \end{equation}
Since $U$ and $V$ are both zero outside the finite square submatrices corresponding to the atom sites contained in ${\Lambda \cap B_R}$, we know that $U (I_k + V\mathscr R^\mathrm{ref}_z U)^{-1}V$ is also zero outside the same finite submatrix. {We may therefore choose a positive constant $c_{\mathfrak{dm}}$ such that for all $\ell, k \in \Lambda \cap B_R$,
\begin{align*}
\begin{split}
&\left|\left[ U(I + V\mathscr R^\mathrm{ref}_z U)^{-1}V \right]_{\ell k}^{ab}\right| \leq c_{\mathfrak{dm}}.
\end{split}
\end{align*}
Here, we use the fact that $z$ is contained in a bounded set for which (\ref{eq:dist-dist-hom}) is satisfied to conclude that $c_{\mathfrak{dm}}$ is independent of $z$.} 

Now we may bound the additional contribution in \cref{eq:woodbury}: 
\begin{align}\label{eq:woodbury-extra}\begin{split}
\left|\left[\mathscr R^\mathrm{ref}_z U (I_k + V\mathscr R^\mathrm{ref}_z U)^{-1}V\mathscr R^\mathrm{ref}_z\right]_{\ell k}^{ab}\right| &= \left|\sum_{\ell_1, \ell_2 \in \Lambda \cap B_R}\left[\mathscr R^\mathrm{ref}_z\right]_{\ell\ell_1}\left[ U (I_k + V\mathscr R^\mathrm{ref}_z U)^{-1}V\right]_{\ell_1\ell_2}\left[\mathscr R^\mathrm{ref}_z\right]_{\ell_2 k}\right|\\
&\leq C c_\mathfrak{dm}\big(\mathfrak{d}^\mathrm{ref}\big)^{-2} \sum_{\ell_1,\ell_2 \in \Lambda \cap B_R} e^{-\ctCT(\mathfrak{d}^\mathrm{ref}) \left(r_{\ell\ell_1} + r_{\ell_2 k}\right)}\\
&\leq C c_\mathfrak{dm}\big(\mathfrak{d}^\mathrm{ref}\big)^{-2} e^{-\ctCT(\mathfrak{d}^\mathrm{ref}) \left(|y(\ell)| + |y(k)|\right)}.
\end{split}\end{align}
Combining (\ref{eq:woodbury-extra}) and (\ref{eq:woodbury}) results in the following improved Combes-Thomas type estimate:
\begin{equation}\label{eq:improved_Combes-Thomas}
\begin{split} 
&\quad\left|\left(\widetilde{\Ham}^\mathrm{ref} + P_2(y) - z\right)^{-1}_{\ell k,ab}\right| \leq C_{\ell k} e^{-\ctCT(\mathfrak{d}^\mathrm{ref})r_{\ell k}} \quad \text{where, }\\
&C_{\ell k} = C \left\{\big(\mathfrak{d}^\mathrm{ref}\big)^{-1} + c_\mathfrak{dm} \big(\mathfrak{d}^\mathrm{ref}\big)^{-2}e^{-\ctCT(\mathfrak{d}^\mathrm{ref}) \left(|y(\ell)| + |y(k)| - r_{\ell k}\right)}\right\}.\end{split}
\end{equation}
Since, adding $P_1(y)$ only perturbs the spectrum by $\delta$ as in Lemma~\ref{lem:decomp-spec}, the same estimates hold with exponent $\ctCT(\mathfrak{d}^\mathrm{ref} -\delta)$ when $(\widetilde{\Ham}^\mathrm{ref} + P_2(y) - z)^{-1}$ is replaced by $\mathscr R_z(y)$.

Finally, we show that the estimate (\ref{eq:improved_Combes-Thomas}) implies improved estimates for derivatives of the resolvent. Moreover, we will show that the pre-factor decays away from the defect. To simplify notation further, we let 
$\mathfrak{d}^\mathrm{ref}_j \coloneqq \min\{\ctTBexponent{1},\dots,\ctTBexponent{j},\ctCT(\mathfrak{d}^\mathrm{ref})\}$
for each $j$. In place of (\ref{eq:first_derivative_resolvent}), we now have
\begin{align}\label{eq:first_derivative_resolvent_improved}
\begin{split}
\left|\frac{\partial \left[\mathscr R_z(y)\right]^{aa}_{\ell\ell}}{\partial [y(m)]_i}\right| 
&\leq C \left(\sum_{\ell_1\in \Lambda} C_{\ell\ell_1} e^{-\mathfrak{d}_1^\mathrm{ref} \left(r_{\ell \ell_1} + r_{\ell_1m}\right)} \right)^2\\
&\leq C \left\{\big(\mathfrak{d}^\mathrm{ref}\big)^{-2} \big(\mathfrak{d}_1^\mathrm{ref}\big)^{-2d} e^{-\mathfrak{d}_1^\mathrm{ref} r_{\ell m}} + c_{\mathfrak{dm}}^2\big( \mathfrak{d}^\mathrm{ref}\big)^{-4}\left(\sum_{\ell_1\in \Lambda} e^{-\mathfrak{d}_1^\mathrm{ref} \left(|y(\ell)| + |y(\ell_1)| + r_{\ell_1m}\right)} \right)^2\right\} \\
&\leq C\big(\mathfrak{d}_1^\mathrm{ref}\big)^{-2d}\left(\big( \mathfrak{d}^\mathrm{ref}\big)^{-2} + c_{\mathfrak{dm}}^2\big( \mathfrak{d}^\mathrm{ref}\big)^{-4} e^{-\mathfrak{d}_1^\mathrm{ref} \left(|y(\ell)| + |y(m)| - r_{\ell m}\right)} \right) e^{-\mathfrak{d}_1^\mathrm{ref} r_{\ell m}} \\
&\eqqcolon C(\ell, m)e^{-\mathfrak{d}_1^\mathrm{ref} r_{\ell m}} .\end{split}
\end{align}
The pre-factor, $C(\ell,m)$, in (\ref{eq:first_derivative_resolvent_improved}) converges to the corresponding pre-factor for the reference resolvent, i.e. to %
$C\big(\mathfrak{d}_1^\mathrm{ref}\big)^{-2d} \big(\mathfrak{d}^\mathrm{ref}\big)^{-2}$,
as $|y(\ell)| + |y(m)| - r_{\ell m} \to \infty$.

We will now do the same calculation for the second order derivatives of the resolvent. In this case, (\ref{eq:second_derivative_resolvent_estimate}) now takes the form
\begin{align}\label{eq:second_derivatives_resolvent_improved}
\begin{split}
&\big|\big[\mathscr R_z \Ham_{,m_1} \mathscr R_z \Ham_{,m_2} \mathscr R_z\big]_{\ell\ell}^{aa}\big|\\ 
&\leq \sum_{\ell_1,\ell_2,\ell_3,\ell_4 \in \Lambda} C_{\ell\ell_1}C_{\ell_2\ell_3}C_{\ell_4\ell}e^{-\ctCT(\mathfrak{d}^\mathrm{ref}) \left( r_{\ell\ell_1} + r_{\ell_2\ell_3} + r_{\ell_4\ell}\right)} e^{-\ctTBexponent{1} \left( r_{\ell_1m_1} + r_{m_1\ell_2} + r_{\ell_3m_2} + r_{m_2\ell_4}\right)} \\
&= C \sum_{\ell_1,\ell_2,\ell_3,\ell_4\in\Lambda}\bigg\{\big(\mathfrak{d}^\mathrm{ref}\big)^{-3}  e^{-\ctCT(\mathfrak{d}^\mathrm{ref}) \left( r_{\ell\ell_1} + r_{\ell_2\ell_3} + r_{\ell_4\ell}\right)}  \\
&\,+ c_\mathfrak{dm}\big(\mathfrak{d}^\mathrm{ref}\big)^{-4}  \Big( e^{-\ctCT(\mathfrak{d}^\mathrm{ref}) \left( |y(\ell)| + |y(\ell_1)| + r_{\ell_2\ell_3} + r_{\ell_4\ell}\right)} + e^{-\ctCT(\mathfrak{d}^\mathrm{ref}) \left( r_{\ell\ell_1} + |y(\ell_2)| + |y(\ell_3)| + r_{\ell_4\ell}\right)} \\
&\qquad\qquad\qquad\qquad\qquad+ e^{-\ctCT(\mathfrak{d}^\mathrm{ref}) \left( r_{\ell\ell_1} + r_{\ell_2\ell_3} + |y(\ell_4)| + |y(\ell)|\right)} \Big) \\
&\,+ c_\mathfrak{dm}^2\big(\mathfrak{d}^\mathrm{ref}\big)^{-5}\Big( e^{-\ctCT(\mathfrak{d}^\mathrm{ref}) \left(r_{\ell\ell_1} + |y(\ell_2)| + |y(\ell_3)| + |y(\ell_4)| + |y(\ell)| \right)} + e^{-\ctCT(\mathfrak{d}^\mathrm{ref}) \left(|y(\ell)| + |y(\ell_1)| + r_{\ell_2\ell_3} + |y(\ell_4)| + |y(\ell)|\right)} \\
&\qquad\qquad\qquad\qquad\qquad+e^{-\ctCT(\mathfrak{d}^\mathrm{ref}) \left( |y(\ell)| + |y(\ell_1)| + |y(\ell_2)| + |y(\ell_3)| + r_{\ell_4\ell}\right)} \Big)\\
&\,+ c_\mathfrak{dm}^3\big(\mathfrak{d}^\mathrm{ref}\big)^{-6}\left( e^{-\ctCT(\mathfrak{d}^\mathrm{ref}) \left( |y(\ell)| + |y(\ell_1)| + |y(\ell_2)| + |y(\ell_3)| + |y(\ell_4)| + |y(\ell)|\right)} \right)\bigg\}e^{-\ctTBexponent{1} \left( r_{\ell_1m_1} + r_{m_1\ell_2} + r_{\ell_3m_2} + r_{m_2\ell_4}\right)} \\
&\leq C\big(\mathfrak{d}_1^\mathrm{ref}\big)^{-4d}\bigg\{\big(\mathfrak{d}^\mathrm{ref}\big)^{-3} \\
&\, + c_\mathfrak{dm}\big(\mathfrak{d}^\mathrm{ref}\big)^{-4}\Big(e^{-\frac{1}{2}\mathfrak{d}_1^\mathrm{ref}\left(|y(\ell)| + |y(m_1)| - r_{\ell m_1}\right)} + e^{-\frac{1}{2}\mathfrak{d}_1^\mathrm{ref}\left(|y(m_1)| + |y(m_2)|\right)} + e^{-\frac{1}{2}\mathfrak{d}_1^\mathrm{ref}\left(|y(\ell)| + |y(m_2)| - r_{\ell m_2}\right)} \Big) \\
&\,+ c_\mathfrak{dm}^2\big(\mathfrak{d}^\mathrm{ref}\big)^{-5}\Big( e^{-\mathfrak{d}_1^\mathrm{ref}\left(|y(\ell)| + |y(m_2)| - r_{\ell m_2}\right)} +  e^{-\frac{1}{2}\mathfrak{d}_1^\mathrm{ref}\left(2|y(\ell)| + |y(m_1)| +|y(m_2)| - r_{\ell m_1} -r_{\ell m_2}\right)} \\
&\qquad\qquad\qquad\qquad\qquad+ e^{-\mathfrak{d}_1^\mathrm{ref}\left(|y(\ell)| + |y(m_1)| - r_{\ell m_1}\right)} \Big) \\
&\,+ c_\mathfrak{dm}^3 \big(\mathfrak{d}^\mathrm{ref}\big)^{-6}e^{-\mathfrak{d}_1^\mathrm{ref}\left(2|y(\ell)| + |y(m_1)| +|y(m_2)| - r_{\ell m_1} -r_{\ell m_2}\right)} \bigg\}e^{-\frac{1}{2}\mathfrak{d}_1^\mathrm{ref} \left(r_{\ell m_1} + r_{\ell m_2}\right) }\\&\eqqcolon C_{1}(\ell, m_1, m_2)e^{-\frac{1}{2}\mathfrak{d}_1^\mathrm{ref} \left(r_{\ell m_1} + r_{\ell m_2}\right) }.
\end{split}
\end{align}
Again, the pre-factor converges to the reference pre-factor, i.e. to  %
$C\big(\mathfrak{d}_1^\mathrm{ref}\big)^{-4d}\big(\mathfrak{d}^\mathrm{ref}\big)^{-3}$, 
exponentially as $|y(\ell)| + |y(m_1)| - r_{\ell m_1}$ and $|y(\ell)| + |y(m_2)| - r_{\ell m_2} \to \infty$. Similarly,
\begin{align}\label{eq:second_derivatives_resolvent_improved_2}
\begin{split}
&\big|\big[\mathscr R_z \Ham_{,m_1m_2} \mathscr R_z\big]_{\ell\ell}^{aa}\big| \leq C \sum_{\ell_1,\ell_2 \in \Lambda} c_{\ell\ell_1}c_{\ell_2\ell} e^{-\ctCT(\mathfrak{d}^\mathrm{ref}) \left( r_{\ell\ell_1} + r_{\ell_2\ell}\right)} e^{-\ctTBexponent{2} \left( r_{\ell_1m_1} + r_{\ell_1m_2}  + r_{\ell_2m_1}  r_{\ell_2m_2} \right)} \\
&= C \sum_{\ell_1,\ell_2\in\Lambda} \bigg\{\big(\mathfrak{d}^\mathrm{ref}\big)^{-2}e^{-\ctCT(\mathfrak{d}^\mathrm{ref}) \left( r_{\ell\ell_1} + r_{\ell_2\ell}\right)} \\
&\qquad+ c_\mathfrak{dm}\big(\mathfrak{d}^\mathrm{ref}\big)^{-3}\left( e^{-\ctCT(\mathfrak{d}^\mathrm{ref}) \left(|y(\ell)| + |y(\ell_1)| +r_{\ell\ell_2}\right)} + e^{-\ctCT(\mathfrak{d}^\mathrm{ref}) \left(r_{\ell\ell_1} + |y(\ell_2)| + |y(\ell)|\right)}\right) \\&\qquad+  c_\mathfrak{dm}^2\big(\mathfrak{d}^\mathrm{ref}\big)^{-4}e^{-\ctCT(\mathfrak{d}^\mathrm{ref}) \left(|y(\ell)| + |y(\ell_1)| + |y(\ell_2)| + |y(\ell)| \right)}  \bigg\} e^{-\ctTBexponent{2} \left( r_{\ell_1m_1} + r_{\ell_1m_2}  + r_{\ell_2m_1}  r_{\ell_2m_2} \right)} \\
&\leq C \big( \mathfrak{d}_2^\mathrm{ref} \big)^{-2d}\big(\mathfrak{d}^\mathrm{ref}\big)^{-2}\left[ 1 + c_\mathfrak{dm}\big(\mathfrak{d}^\mathrm{ref}\big)^{-2}e^{-\frac{1}{4} \mathfrak{d}_2^\mathrm{ref}\left(2|y(\ell)| + |y(m_1)| + |y(m_2)| - r_{\ell m_1} - r_{\ell m_2}\right) } \right]^2 e^{-\frac{1}{2}\mathfrak{d}_2^\mathrm{ref} (r_{\ell m_1} + r_{\ell m_2}) }\\&\eqqcolon C_2(\ell, m_1, m_2)e^{-\frac{1}{2}\mathfrak{d}_1^\mathrm{ref} \left(r_{\ell m_1} + r_{\ell m_2}\right) }.
\end{split}
\end{align}
Therefore, by using (\ref{eq:second_derivatives_resolvent}), we have
\begin{align}\label{eq:second_derivatives_resolvent_improved_3}
\begin{split}
\left| \frac{\partial^2 \left[\mathscr R_z(y)\right]^{aa}_{\ell\ell}}{\partial y(m_1) \partial y(m_2)} \right| \leq \max\left\{ C_1(\ell,m_1,m_2), C_1(\ell,m_1,m_2) \right\} e^{-\frac{1}{2}\mathfrak{d}_2^\mathrm{ref}\left(r_{\ell m_1} + r_{\ell m_2}\right) }
\end{split}
\end{align}
where the pre-factor converges to the pre-factor arising if $\mathscr R_z(y)$ is replaced with $\mathscr R_z^\mathrm{ref}$, i.e. it converges to 
$C \big(\mathfrak{d}^\mathrm{ref}\big)^{-3} \max\left\{ \big(\mathfrak{d}_1^\mathrm{ref}\big)^{-4d}, \mathfrak{d}^\mathrm{ref} \big(\mathfrak{d}^\mathrm{ref}_2\big)^{-2d} \right\}$, 
as we send $\ell$, $m_1$ and $m_2$ away from the defect core together. Again, we omit the arguments for $j>2$.

\begin{proof}[Proof of Theorem~\ref{thm:improved_locality}: Improved Zero Temperature Locality]
We directly apply (\ref{eq:first_derivative_resolvent_improved}) and (\ref{eq:second_derivatives_resolvent_improved_3}) with $\mathfrak{d}^\mathrm{ref} = \tfrac{1}{2}\ctgapHom$ and $\mathfrak{d} = \tfrac{1}{2}\ctgap$. We again use the fact that $\left|2(z - \mu)\right|$ is uniformly bounded along the contour $\mathscr C_\infty$.
\end{proof}

If $\mu \not\in\sigma(\Ham(y))$, similar arguments can be made for the finite temperature case. However, if $\mu \in \sigma(\Ham(y))$, another contribution to the site energy must be considered:
\begin{proof}[Proof of Theorem~\ref{thm:improved_locality_beta}: Improved Finite Temperature Locality]In the case that $\mu \not\in\sigma(\Ham(y))$, we can directly apply (\ref{eq:first_derivative_resolvent_improved}) and (\ref{eq:second_derivatives_resolvent_improved_3}) with $\mathfrak{d}^\mathrm{ref} = \ctgapfinitetempHom$ and $\mathfrak{d} = \ctgapfinitetemp(y)$. Here we again use the fact that the analytic continuation of $\mathfrak{g}^\beta(z;\mu)$ is uniformly bounded along $\mathscr C_\beta$.
	
In the case that $\mu \in \sigma(\Ham(y))$, we may split $\mathscr C_\beta$ into three simple closed contours $\mathscr C^-$, $\mathscr C^+$ and $\mathscr C_0$ such that $\mathscr C^-$ and $\mathscr C^+$ are contained in $\mathbb C\setminus (\mu + i\mathbb R)$ and encircle $\sigma(\Ham(y))\cap [\underline{\sigma},\mu)$ and $\sigma(\Ham(y))\cap(\mu,\overline{\sigma}]$, respectively, and $\mathscr C_0$ encircles $\{\mu\}$ and avoids the rest of the spectrum. Now the finite temperature site energy is of the form:
\begin{align}\label{eq:finitetempimproved}
\begin{split}
 G^\beta_{\ell}(y) &= -\frac{1}{2\pi i} \sum_a\oint_{\mathscr{C}^-} \mathfrak{g}^\beta(z;\mu) [\mathscr R_z(y)]_{\ell\ell}^{aa}\mathrm{d}z - \frac{1}{2\pi i} \sum_a\oint_{\mathscr{C}^+} \mathfrak{g}^\beta(z;\mu) [\mathscr R_z(y)]_{\ell\ell}^{aa}\mathrm{d}z\\
&\qquad\qquad -\frac{1}{2\pi i} \sum_a\oint_{\mathscr{C}_0} \mathfrak{g}^\beta(z;\mu) [\mathscr R_z(y)]_{\ell\ell}^{aa}\mathrm{d}z
\end{split}
\end{align}
The first two expressions of \cref{eq:finitetempimproved} can be treated in the exact same way as in the case where $\mu \not\in \sigma(\Ham(y))$. The additional term is 
\begin{align}\label{eq:finitetempimproved-extracontribution}
\sum_a\sum_{s=1}^{m(\mu)} \mathfrak{g}^\beta(\varepsilon_s(y);\mu) [\psi_s]_{\ell a}^2 &= \frac{2}{\beta}\log\left(\frac{1}{2}\right) \sum_a\sum_{s=1}^{m(\mu)} [\psi_{s}]_{\ell a}^2
\end{align}
where $m(\mu)$ is the multiplicity of $\mu$ as an eigenvalue of $\Ham(y)$, $\{\psi_s\}$ is basis for the eigenspace of $\mu$ and $\varepsilon_s(y)$ are the eigenvalues at $\mu$ written as functions of the configuration. We wish to show that (\ref{eq:finitetempimproved-extracontribution}) has the same locality properties as the first two terms of (\ref{eq:finitetempimproved}). 

For $j=1$, we have
\begin{align*}\label{eq:derC0}
\frac{\partial}{\partial y(m)} \left( \sum_s \mathfrak{g}^\beta(\varepsilon_s(y);\mu) [\psi_s]_{\ell a}^2 \right) &= \sum_s \left(2 f_\beta(\varepsilon_s(y) - \mu) \frac{\partial  \varepsilon_s(y)}{\partial y(m)}[\psi_s]_{\ell a}^2 + \mathfrak{g}^\beta(\varepsilon_s(y);\mu) \frac{\partial [\psi_s]_{\ell a}^2}{\partial y(m)} \right) \\
&= \sum_s \left( \frac{\partial \varepsilon_s(y)}{\partial y(m)} [\psi_s]_{\ell a}^2 - \frac{2}{\beta} \log(2) \frac{\partial [\psi_s]_{\ell a}^2}{\partial y(m)} \right) \\
&=\frac{\partial}{\partial y(m)} \left( \sum_s \left(\varepsilon_s(y) - \mu - \tfrac{2}{\beta}\log(2)\right) [\psi_s]_{\ell a}^2 \right) \\&= - \frac{1}{2\pi i} \oint_{\mathscr C_0} \left(z - \mu - \tfrac{2}{\beta}\log(2)\right) \frac{\partial [\mathscr R_z(y)]_{\ell\ell}^{aa}}{\partial y(m)} \textrm{d}z.
\end{align*}
Now, because $z \mapsto z -\mu - \tfrac{2}{\beta}\log(2)$ is analytic, there is no $\beta$-dependent restriction on the contour $\mathscr C_0$. This again allows us to apply the Woodbury identity and the Combes-Thomas type estimate on the reference resolvent to obtain improved locality results. 

For $j=2$, we have
\begin{align*}
&\frac{\partial^2}{\partial y(m_1)\partial y(m_2)} \left( \sum_s \mathfrak{g}^\beta(\varepsilon_s(y);\mu) [\psi_s]_{\ell a}^2 \right) \\
&\qquad = -\frac{1}{2\pi i}\oint_{\mathscr C_0} \left(-\tfrac{1}{4}\beta(z - \mu)^2 + z - \mu - \tfrac{2}{\beta}\log(2)\right)\frac{\partial^2\left[\mathscr R_z(y)\right]_{\ell\ell}^{aa}}{\partial y(m_1)\partial y(m_2)}\textrm{d}z.
\end{align*}
Again, we see that $z\mapsto-\tfrac{1}{4}\beta(z - \mu)^2 + z - \mu - \tfrac{2}{\beta}\log(2)$ is analytic and so we may use the improved resolvent estimates on a temperature independent contour. This results in improved locality estimates with temperature independent exponents but pre-factors of the form $C\beta$.

For higher derivatives, the same arguments can be made which gives rise to $\beta$-independent exponents but pre-factors that are of the form $C \beta^{j-1}$. 
\end{proof}

\bibliography{bib}

\begin{thebibliography}{10}

\bibitem{Bartok2018}
A.~P. Bart{\'{o}}k, J.~Kermode, N.~Bernstein, and G.~Cs{\'{a}}nyi.
\newblock {Machine Learning a General-Purpose Interatomic Potential for
  Silicon}.
\newblock {\em Phys. Rev. X}, 8:041048, 2018.

\bibitem{BartokCsanyiEtAl2010}
A.~P. Bart{\'{o}}k, M.~C. Payne, R.~Kondor, and G.~Cs{\'{a}}nyi.
\newblock {Gaussian Approximation Potentials: The Accuracy of Quantum
  Mechanics, without the Electrons}.
\newblock {\em Phys. Rev. Lett.}, 104:136403, 2010.

\bibitem{BehlerParrinello2007}
J.~Behler and M.~Parrinello.
\newblock {Generalized Neural-Network Representation of High-Dimensional
  Potential-Energy Surfaces}.
\newblock {\em Phys. Rev. Lett.}, 98:146401, 2007.

\bibitem{BowlerMiyazaki2012}
D.~R. Bowler and T.~Miyazaki.
\newblock {O(N) Methods in Electronic Structure Calculations}.
\newblock {\em Rep. Prog. Phys.}, 75, 2012.

\bibitem{ChenLuOrtner18}
H.~Chen, J.~Lu, and C.~Ortner.
\newblock {Thermodynamic Limit of Crystal Defects with Finite Temperature Tight
  Binding}.
\newblock {\em Arch. Ration. Mech. Anal.}, 230:701--733, 2018.

\bibitem{ChenNazarOrtner19}
H.~Chen, F.~Nazar, and C.~Ortner.
\newblock {Geometry Equilibration of Crystalline Defects in Quantum and Atomic
  Descriptions}.
\newblock {\em Math. Model. Methods Appl. Sci.}, 29:419--492, 2019.

\bibitem{ChenOrtner16}
H.~Chen and C.~Ortner.
\newblock {QM/MM Methods for Crystalline Defects. Part 1: Locality of the Tight
  Binding Model}.
\newblock {\em Multiscale Model. Simul.}, 14(1):232--264, 2016.

\bibitem{ChenOrtner17}
H.~Chen and C.~Ortner.
\newblock {QM/MM Methods for Crystalline Defects. Part 2: Consistent Energy and
  Force-Mixing}.
\newblock {\em Multiscale Model. Simul.}, 15:184--214, 2017.

\bibitem{cohen94}
R.~Cohen, M.~Mehl, and D.~Papaconstantopoulos.
\newblock Tight-binding total-energy method for transition and noble metals.
\newblock {\em Phys. Rev. B}, 50:14694--14697, 1994.

\bibitem{CsanyiAlbaretMorasPayneDeVita05}
G.~Cs{\'{a}}nyi, T.~Albaret, G.~Moras, M.~C. Payne, and A.~{De Vita}.
\newblock {Multiscale Hybrid Simulation Methods for Material Systems}.
\newblock {\em J. Phys. Condens. Matter}, 17:691--703, 2005.

\bibitem{DavidMermin1965}
N.~{David Mermin}.
\newblock {Thermal Properties of the Inhomogeneous Electron Gas}.
\newblock {\em Phys. Rev.}, 137(5A):1441--1443, 1965.

\bibitem{ELu10}
W.~E and J.~Lu.
\newblock {Electronic Structure of Smoothly Deformed Crystals: Cauchy-Born Rule
  for the Nonlinear Tight-Binding Model}.
\newblock {\em Comm. Pure Appl. Math.}, 63(11):1432--1468, 2010.

\bibitem{EhrlacherOrtnerShapeev16}
V.~Ehrlacher, C.~Ortner, and A.~V. Shapeev.
\newblock {Analysis of Boundary Conditions for Crystal Defect Atomistic
  Simulations}.
\newblock {\em Arch. Ration. Mech. Anal.}, 222(3):1217--1268, 2016.

\bibitem{gitSKTB}
H.~C. {\it et al}.
\newblock Sktb.jl.git.

\bibitem{bk:finnis}
M.~Finnis.
\newblock {\em {Interatomic Forces in Condensed Matter}}.
\newblock Oxford University Press, 2003.

\bibitem{Goedecker99}
S.~Goedecker.
\newblock {Linear Scaling Electronic Structure Methods}.
\newblock {\em Rev. Mod. Phys.}, 71:1085--1123, 1999.

\bibitem{Goedecker1995}
S.~Goedecker and M.~Teter.
\newblock {Tight-Binding Electronic-Structure Calculations and Tight-Binding
  Molecular Dynamics with Localized Orbitals}.
\newblock {\em Phys. Rev. B}, 51:9455--9464, 1995.

\bibitem{Hager1989}
W.~W. Hager.
\newblock {Updating the Inverse of a Matrix}.
\newblock {\em SIAM Rev.}, 31(2):221--239, 1989.

\bibitem{Kato95}
T.~Kato.
\newblock {\em {Perturbation Theory for Linear Operators Springer}}.
\newblock Springer-Verlag Berlin Heidelberg, 2nd edition, 1995.

\bibitem{bk:kittel}
C.~Kittel.
\newblock {\em {Introduction to Solid State Physics}}.
\newblock Wiley, 8th edition, 2004.

\bibitem{Kohn1996}
W.~Kohn.
\newblock {Density Functional and Density Matrix Method Scaling Linearly with
  the Number of Atoms}.
\newblock {\em Phys. Rev. Lett.}, 76:3168--3171, 1996.

\bibitem{bk:martin04}
R.~M. Martin.
\newblock {\em {Electronic Structure: Basic Theory and Practical Methods}}.
\newblock Cambridge University Press, 2004.

\bibitem{mehl96}
M.~Mehl and D.~Papaconstantopoulos.
\newblock Applications of a tight-binding total-energy method for transition
  and noble metals: Elastic constants, vacancies, and surfaces of monatomic
  metals.
\newblock {\em Phys. Rev. B}, 54:4519--4530, 1996.

\bibitem{Niklasson2011}
A.~M.~N. Niklasson.
\newblock {Density Matrix Methods in Linear Scaling Electronic Structure
  Theory}.
\newblock In {\em Linear-Scaling Tech. Comput. Chem. Phys.}, chapter~16, pages
  439--473. Springer, Dordrecht, 2011.

\bibitem{OrtnerShapeev12arxiv}
C.~Ortner and A.~V. Shapeev.
\newblock {Interpolants of Lattice Functions for the Analysis of
  Atomistic/Continuum Multiscale Methods. arXiv:1204.3705v1}, 2012.

\bibitem{inprep}
C.~Ortner and J.~Thomas.
\newblock manuscript.

\bibitem{papaconstantopoulos15}
D.~Papaconstantopoulos.
\newblock {\em Handbook of the Band Structure of Elemental Solids, From Z = 1
  To Z = 112}.
\newblock Springer New York, 2015.

\bibitem{papaconstantopoulos97}
D.~Papaconstantopoulos, M.~Mehl, S.~Erwin, and M.~Pederson.
\newblock Tight-binding hamiltonians for carbon and silicon.
\newblock {\em Symposium R -- Tight Binding Approach to Computational Materials
  Science}, 491:221, 1997.

\bibitem{Shapeev2016}
A.~V. Shapeev.
\newblock {Moment Tensor Potentials: a class of systematically improvable
  interatomic potentials}.
\newblock {\em Multiscale Model. Simul.}, 14(3):1153--1173, 2016.

\bibitem{SlaterKoster1954}
J.~C. Slater and G.~F. Koster.
\newblock {Simplified LCAO Method for the Periodic Potential Problem}.
\newblock {\em Phys. Rev.}, 94(6):1498--1524, 1954.

\end{thebibliography}
\bibliographystyle{abbrv}

\end{document}